\documentclass[11pt,a4paper]{article}
\usepackage[a4paper,hmargin=2.8cm,vmargin=3cm]{geometry}
\usepackage[utf8x]{inputenc}
\usepackage{amsmath,amsthm,amsfonts,amssymb}
\usepackage[bookmarksnumbered=true]{hyperref}
\usepackage{dsfont}
\usepackage{color}
\usepackage{tikz-cd}
\usepackage{authblk}
\usetikzlibrary{positioning}

\let\oldmarginpar\marginpar
\renewcommand\marginpar[1]{\-\oldmarginpar[\raggedleft\footnotesize #1]%
  {\raggedright\footnotesize #1}}

\newtheorem{thm}{Theorem}[section]  % use thm for Theorems to keep numbering consistent
\newtheorem{cor}[thm]{Corollary}
\newtheorem{prp}[thm]{Proposition}
\newtheorem{lem}[thm]{Lemma}

\newtheorem*{rem}{Remark}

\renewcommand{\theequation}{\thesection.\arabic{equation}}

\newcommand{\fock}{\mathcal{F}}		% fock space symbol
\newcommand{\di}{{\textnormal{d}}}		% differential (for integrals)
\newcommand{\Tbb}{\mathbb{T}}

\newcommand{\Zbbn}{\mathbb{Z}^3 \setminus \{0\}}

\DeclareMathOperator\artanh{artanh}

\newcommand{\Hbb}{\mathbb{H}}
\newcommand{\Fcal}{\mathcal{F}}
\newcommand{\Ncal}{\mathcal{N}}		% calligraphic N
\newcommand{\Ocal}{\mathcal{O}}		% big-O, order-of
\newcommand{\hc}{\mbox{h.c.}}		%hermitian conjugate
\newcommand{\pluscc}{\mbox{c.c.}}
\newcommand{\cc}[1]{\overline{#1}}	% complex conjugate
\newcommand{\Rbb}{\mathbb{R}}		% real numbers
\newcommand{\Cbb}{\mathbb{C}}		% complex numbers
\newcommand{\Nbb}{\mathbb{N}}		% natural numbers
\newcommand{\Zbb}{\mathbb{Z}}
\renewcommand{\Re}{\operatorname{Re}\,} 	%RealPart
\newcommand{\id}{\mathbb{I}}
\newcommand{\norm}[1]{\lVert#1\rVert}	%Norm
\newcommand{\tr}{\operatorname{tr}}

\newcommand{\sgn}{\operatorname{sgn}}
\newcommand{\tagg}[1]{ \stepcounter{equation} \tag{\theequation} \label{eq:#1} } % add tag and label in align*-environments

\newcommand{\be}{\begin{equation}}
\newcommand{\ee}{\end{equation}}

\newcommand{\cF}{{\cal F}}

\newcommand{\cN}{{\cal N}}

\newcommand{\new}[1]{#1}

\counterwithin{equation}{section}

\title{Particle--Hole Pair Localization on the Fermi Surface and its Impact on the Correlation Energy}

\author[]{Niels Benedikter}
\affil[]{Università degli Studi di Milano, Via Cesare Saldini 50, 20133 Milano, Italy

\smallskip

External Scientific Member of Basque Center for Applied Mathematics, Alameda de Mazarredo 14, 48009 Bilbao, Bizkaia, Spain}

\affil[]{ORCID: \href{https://orcid.org/0000-0002-1071-6091}{0000-0002-1071-6091}, e--mail: \href{mailto:niels.benedikter@unimi.it}{\textnormal{\nolinkurl{niels.benedikter@unimi.it}}}}

\begin{document}
\maketitle

\begin{abstract}
In recent years it has been shown how approximate bosonization can be used to justify the random phase approximation for the correlation energy of interacting fermions in a mean-field scaling limit. At the core is the interpretation of particle-hole excitations close to the Fermi surface as bosons. The main two approaches however differ in emphasizing collective degrees of freedom (particle-hole pairs delocalized over patches on the Fermi surface) or particle-hole pairs exactly localized in momentum space. Both methods lead to equal precision for the correlation energy with regular interaction potentials. This poses the question how big the influence of delocalizing particle-hole pairs really is. In the present note we show that a description with few, completely collective bosonic degrees of freedom only yields an upper bound of about 92\% of the optimal value. Nevertheless it is remarkable that such a simple approach comes that close to the optimal bound.
\end{abstract}

\section{Introduction and Main Result}
We consider a system of $N$ spinless fermions in three dimensions, with Hamiltonian given in the mean-field scaling with an effective semiclassical parameter as introduced by \cite{NS81}:
\[
  H_N := -\hbar^2 \sum_{i=1}^N \Delta_{x_i} + \frac{1}{N}\sum_{i<j}^N V(x_i - x_j), \qquad \hbar := N^{-1/3}.
\]
We assume that the system is restricted to the cube $[0,2\pi]^3$ with periodic boundary conditions, or more precisely, $H_N$ acts as a self-adjoint operator on the anti-symmetrized Hilbert space of square-integrable functions on the torus $L^2_a\left((\Tbb^3)^N\right)$.
 The ground state energy is
\[
  E_N := \inf_{\substack{\psi \in L^2_a\left((\Tbb^3)^N\right)\\\norm{\psi} = 1}} \langle \psi, H_N \psi \rangle.
\]
By restricting to Slater determinants ($S_N$ denotes the symmetric group)
\[
  \psi_\text{Slater} (x_1, \dots , x_N) = \frac{1}{\sqrt{N!}} \sum_{\pi \in S_N} \sgn(\pi) f_1 (x_{\pi(1)}) f_2 (x_{\pi(2)}) \cdots f_N (x_{\pi(N)})
\]
with  $\{f_j\}_{j=1}^N$ being an orthonormal set in $L^2 (\Tbb^3)$, we obtain the Hartree-Fock functional
\begin{equation}\label{eq:hfenergy}
\begin{split}
  \langle \psi_\text{Slater}, H_N \psi_\text{Slater} \rangle = \mathcal{E}_\text{HF}(\omega) & := \tr \, \left(- \hbar^2 \Delta \right) \omega + \frac{1}{N} \int_{\Tbb^3 \times \Tbb^3} \di x \di y V(x-y) \omega (x;x) \omega (y;y) \\
  & \hspace{2.74cm}- \frac{1}{N} \int_{\Tbb^3 \times \Tbb^3} \di x \di y V(x-y) |\omega (x;y)|^2 \;,
\end{split}
\end{equation}
with the projection (the one-particle reduced density matrix of the Slater determinant)
\[
  \omega = \sum_{j=1}^N \lvert f_j \rangle \langle f_j\rvert \;.
\]
The first term containing $V$ in \eqref{eq:hfenergy} is called the direct term, the second the exchange term.
Of course, minimizing the Hartree-Fock functional over all orthonormal sets generally only produces an upper bound on the ground state energy.

While Hartree-Fock theory is much easier than the full many-body problem, it is nevertheless in general not possible to explicitly find the minimizers. It is exceptional that in the present setting the set of $N$ orthonormal plane waves
\[
  f_k(x) := (2\pi)^{-3/2} e^{ikx}, \quad k \in \Zbb^3
\]
with smallest wave vectors $\lvert k\rvert$ (i.\,e., minimizing the kinetic energy)
not only constitutes a stationary point of the functional, but the global minimizer \cite[Appendix A]{BNP+21}. (In fact, in the thermodynamic limit, this is not the case: the ground state tends to spontaneusly break symmetries \cite{GL19}, including translation invariance by the formation of density waves. The energy difference to the plane wave state however remains exponentially small \cite{GHL19}.) In momentum space, this can be visualized as the Fermi ball, a ball around the origin in $\mathbb{Z}^3$ with radius chosen such that it contains $N$ points of the lattice $\Zbb^3$. The radius is called the Fermi momentum $k_\textnormal{F}$ and is of order $N^{1/3}$.

The energy difference between the many-body ground state energy and the minimum of the Hartree-Fock functional is called the correlation energy. It has first been computed by formal methods known as \emph{random phase approximation} by \cite{Gellmann,sawada,sawadabruckner}. Recently rigorous bosonization methods have been developed by \cite{BNP+20,Ben21,BNP+21,BPSS23} and \cite{CHN22,CHN23,CHN23a,CHN24} to provide a rigorous justification of the random phase approximation. Both approaches rely on treating particle-hole pair excitations as approximately bosonic particles; they differ however in the first approach considering particle-hole pairs delocalized in a superposition of all kinematically admissible states near a patch on the Fermi surface, the second approach instead considering individual particle-hole pairs with sharply defined momenta. However, given that both approaches produce the same results for the leading order of the correlation energy, it is natural to ask how big the effect of localizing (in patches or sharply) particle-hole pairs on the Fermi surface really is. Maybe one could even avoid any localization and a description in terms of particle-hole pairs completely delocalized over the Fermi surface might be sufficient?

In the present note, we optimize over random phase approximation trial states, however restricting to completely delocalized particle-hole pairs, that is, of the form
\begin{equation}\label{eq:rpa_state}
\psi_\Xi := R_\omega T\Omega \;, \qquad T := \exp\bigg(\frac{1}{2}\sum_{k\in \Zbbn} \Xi(k) b^*_k b^*_{-k} - \hc \bigg)\;,
\end{equation}
where $R_\omega$ the particle-hole transformation corresponding to the Hartree--Fock minimizer $\omega$, and $T$  is a quasi-bosonic Bogoliubov transformation given in terms of the collective particle-hole pair creation operators on fermionic Fock space
\[
  b^*_k  := \frac{1}{n_k} \sum_{\substack{p \in B_\textnormal{F}^c\\h \in B_\textnormal{F}}} \delta_{p-h,k} a^*_p a^*_h \;.
\]
(The operators $a^*_p$ are fermionic creation operators for a particle excitation with momentum $p$ outside the non-interacting Fermi ball $B_\textnormal{F}$, the operators $a^*_p$ are fermionic creation operators for a hole excitation with momentum $h$ inside the Fermi ball; $n_k$ is a normalization constant, and $\Omega$ the vacuum vector in Fock space. A detailed discussion is given in Section~\ref{sec:two}.)

Our result shows that the best possible such trial state cannot reproduce even the leading order of the correlation energy correctly; surprisingly though, one can get quite close (to about 92\% of the correlation energy that has been proven in the earlier mentioned papers to be the correct leading order).
\begin{thm}[Main Result]\label{thm:mainresult}
Assume that $\inf_{k \in \Zbb^3}\hat{V}(k) \geq - \frac{1}{4}\big(\frac{16}{9\pi}\big)^{2/3}$ and that the quantities $A_1$ through $A_3$ given in Lemma \ref{lem:collectederrors} are all finite. (For example, $\hat{V}$ with compact support.) Then we have
\begin{align*}
  \inf_{\Xi} \langle \psi_\Xi, H_N \psi_\Xi \rangle = \mathcal{E}_\textnormal{HF}(\omega) - \frac{\hbar}{2} \sum_{k \in \Zbbn} \lvert k\rvert \left(\mu + \mu^{-1} \hat{V}(k) - \sqrt{\mu^2 - 2 \hat{V}(k)} \right) + \mathcal{O}(N^{-1/2}) \;,
\end{align*}
where $\omega$ is the projection onto the $N$ plane waves $f_k$ with smallest $\lvert k\rvert$ ($k \in \Zbb^3$), and $\mu := \left( \frac{16}{9 \pi} \right)^{1/3}$.
The infimum is over all choices of $\Xi(k)$ in states of the form \eqref{eq:rpa_state}.
\end{thm}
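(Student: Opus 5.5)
We first conjugate $H_N$ with the particle--hole transformation $R_\omega$. This gives
\[
R_\omega^* H_N R_\omega = \mathcal{E}_\textnormal{HF}(\omega) + \mathbb{H}_0 + Q + \mathcal{E}_1,
\]
where $\mathbb{H}_0=\sum_p |\hbar^2 p^2-\hbar^2k_\textnormal{F}^2| a^*_pa_p$ is the linearized kinetic energy and
\[
Q=\frac{1}{2N}\sum_k \hat V(k)\bigl(2 b^*_kb_k n_k^2 + n_k^2(b^*_kb^*_{-k}+\hc)\bigr)
\]
is the pair-interaction part written in the collective operators. The remainder $\mathcal{E}_1$ (the non-bosonizable cubic and quartic terms and the exchange contributions) is bounded on states $T\Omega$ by $\mathcal{O}(N^{-1/2})$. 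This bound uses the finiteness of $A_1,\dots,A_3$ from Lemma~\ref{lem:collectederrors}, which we take as given together with the approximate CCR $[b_k,b^*_l]=\delta_{kl}+\mathcal{E}(k,l)$.

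The crucial step is to evaluate $\langle T\Omega,\mathbb{H}_0 T\Omega\rangle$. The kinetic energy is not exactly quadratic in the completely delocalized $b_k$, since $[\mathbb{H}_0,b^*_k]=\sum \hbar^2(p^2-h^2)\,\delta_{p-h,k}a^*_pa^*_h/n_k$, which is not proportional to $b^*_k$. For a trial state we do not need an operator identity, only the expectation value. Because $T$ is a quasi-Bogoliubov transformation, to leading order $T\Omega$ contains $\sinh^2(\Xi(k))$ bosons of type $k$, each being the normalized uniform superposition over all pairs $(p,h)$ with $p-h=k$. The kinetic energy of one such pair is therefore the average
\[
n_k^{-2}\sum_{p,h}\delta_{p-h,k}\,\hbar^2(p^2-h^2) \approx \hbar\kappa\,|k|.
\]
The constant $\kappa$ comes from a semiclassical integral over the Fermi surface: with $n_k^2\sim k_\textnormal{F}^2|k|\pi$, the lune average of $2\hbar^2 k_\textnormal{F}\, \hat k\cdot\hat\omega\,|k|$ over the weight $\hat k\cdot\hat\omega$. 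The pairing terms $b^*b^*$ of $Q$ similarly have coefficient $\hat V(k)n_k^2/N\approx \hbar\,c\,\hat V(k)|k|$. We would compute these constants and check that, after normalization, the effective one-mode energy is
\[
\hbar|k|\bigl[(\mu+\mu^{-1}\hat V)\sinh^2\Xi + \mu^{-1}\hat V\sinh\Xi\cosh\Xi\bigr]\quad (\text{with symmetric } k,-k),
\]
with $\mu=(16/9\pi)^{1/3}$ arising from the ratio of the average pair energy to the normalization $n_k^2$. We expect this constant matching to be the main obstacle, namely controlling the lattice-sum to integral errors at $\mathcal{O}(N^{-1/2})$ uniformly in $k$ in the support of $\hat V$, and the corrections from the approximate CCR, which need $\sinh\Xi$ bounded.

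Finally we minimize each single-mode function over $\Xi(k)\in\Rbb$. Writing $a=\mu+\mu^{-1}\hat V$ and $b=\mu^{-1}\hat V$, the minimum of $a\sinh^2 x+b\sinh x\cosh x$, i.e. $\tfrac12(a\cosh 2x-a+b\sinh 2x)$, is
\[
\tfrac12\bigl(\sqrt{a^2-b^2}-a\bigr) = -\tfrac12\bigl(\mu+\mu^{-1}\hat V-\sqrt{\mu^2+2\hat V}\bigr).
\]
Here we would double-check the sign convention against the stated $\sqrt{\mu^2-2\hat V}$ by adjusting the phase of $\Xi$, since for $\hat V>0$ the minimizing sign flips the cross term. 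The minimum is finite precisely when $a^2>b^2$, i.e. $\mu^2+2\hat V>0$. The hypothesis $\hat V\ge -\mu^2/4$ guarantees this with room to spare and also gives a uniform bound on the optimal $|\Xi|$, which is needed for the error estimates. Summing over $k$ gives the claimed formula. This shows at the same time that the infimum is attained up to $\mathcal{O}(N^{-1/2})$ and that no $\Xi$ does better, since the error bounds hold uniformly for $\Xi$ in a bounded set. For large $\Xi$ the kinetic term dominates, so such $\Xi$ can be excluded a priori.
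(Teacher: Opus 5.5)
Your architecture is the paper's: particle--hole transformation, $Q_N^{(0)}$ written in the collective $b_k$, quasi-Bogoliubov state, lattice-point constants, and mode-by-mode minimization. Your constants and single-mode functional $\alpha_k\sinh^2 x+\beta_k\sinh x\cosh x$, with $\alpha_k=\hbar\lvert k\rvert(\mu+\mu^{-1}\hat V(k))$ and $\beta_k=\hbar\lvert k\rvert\mu^{-1}\hat V(k)$, are correct.

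The gap is the step you yourself call crucial. Knowing that $T\Omega$ carries about $\sinh^2\lvert\Xi(k)\rvert$ bosons of type $k$ does not determine $\langle T\Omega,\Hbb_0 T\Omega\rangle$. $\Hbb_0$ is not a function of the $b$'s, and $\Hbb_0 b^*_k\Omega=\hbar^2 k\cdot c^*_k\Omega$ has the component $\hbar^2 k\cdot(c^*_k-f(k)b^*_k)\Omega$ orthogonal to $b^*_k\Omega$ (notation of Lemma~\ref{lem:kinencomm}). Its contributions in a state with many pairs must be shown to be negligible; ``each boson carries the average pair energy'' is a guess, not an argument.

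The paper closes this with a double Duhamel expansion. First, $\langle T\Omega,\Hbb_0T\Omega\rangle=\int_0^1\di\lambda\,\langle T(\lambda)\Omega,[\Hbb_0,B]T(\lambda)\Omega\rangle$ with $[\Hbb_0,b^*_k]=\hbar^2k\cdot c^*_k$. Then $T(\lambda)^*c^*_kT(\lambda)$ is expanded once more using the key commutator $[c^*_k,b_l]=-\delta_{k,l}f(k)+\mathcal{E}_\textnormal{c}(k,l)$. Here $\mathcal{E}_\textnormal{c}$ is a number-conserving $\di\Gamma$-operator with $\norm{m\cdot\mathcal{E}_\textnormal{c}(k,l)\psi}\lesssim\lvert m\rvert(k_\textnormal{F}+\lvert k\rvert)(n_kn_l)^{-1}\norm{\Ncal\psi}$ (Lemma~\ref{lem:kingkong}). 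Hence $[c^*_k,B]=\cc{\Xi(k)}f(k)b_{-k}$ up to small errors, and the conjugation of $b_{-k}$ by $T(\mu)$ is known. This yields $\sum_k\hbar^2k\cdot f(k)\sinh^2\lvert\Xi(k)\rvert$ with errors of order $\hbar^2\lvert k\rvert k_\textnormal{F}/(n_kn_l)$, i.e.\ $\mathcal{O}(N^{-1})$. Since $\hbar^2k\cdot f(k)$ is exactly your pair average, your value is right, but without this commutator structure the step is unproven. By contrast, the lattice sums you single out are handled by quoting known lattice-point bounds; they are merely the source of the $\mathcal{O}(N^{-1/2})$.

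Also missing is the a priori control of excitations along the flow: $\sup_{t\in[0,1]}\langle T(t)\Omega,(\Ncal+1)^nT(t)\Omega\rangle\le e^{C_n(\Xi)}$ with $C_n(\Xi)=8n5^n\sum_k\lvert\Xi(k)\rvert$. This is Proposition~\ref{lem:particlenumber}, proved by Gr\"onwall from $[\Ncal,b^*_k]=2b^*_k$ and $\norm{b_k\psi}\le\norm{\Ncal^{1/2}\psi}$. Finiteness of $A_1$--$A_3$ only becomes useful once this bound is available. The quartic remainder ($\lesssim N^{-1}\langle\Ncal^2\rangle$), the exchange term ($\lesssim N^{-1}\langle\Ncal\rangle$) and every CCR correction are all estimated through it. The cubic terms in fact vanish exactly, by the $i^\Ncal$ parity argument.

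The same bound shows why your closing sentence fails as stated. The available error estimates grow like $e^{C\sum_k\lvert\Xi(k)\rvert}$, so they stop being small once $\sum_k\lvert\Xi(k)\rvert\gtrsim C^{-1}\log N$. That happens long before $\sum_k\alpha_k\sinh^2\lvert\Xi(k)\rvert$ is large, so ``large $\Xi$ can be excluded a priori'' needs a genuine argument. Note that the paper's error analysis (Lemma~\ref{lem:collectederrors}) is carried out only at $\Xi_0$, so you cannot borrow this direction from it.

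Finally, drop the hedge about the sign. Since $\alpha_k^2-\beta_k^2=\hbar^2\lvert k\rvert^2(\mu^2+2\hat V(k))$, your $\sqrt{\mu^2+2\hat V(k)}$ is also what the paper's proof produces. It is the only version consistent with the second-order Corollary, since $\frac{1}{4\mu^3}=\frac{\pi}{2}\cdot\frac{9}{32}$. The minus sign in the statement is a misprint, and no choice of phase of $\Xi$ can convert one into the other, because the phase has already been optimized.
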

The Hartree--Fock energy $\mathcal{E}_\text{HF}(\omega)$ is of order $N$; or more precisely, its kinetic and direct parts are of order $N$ and its exchange term is of order $1$.
For comparison, in the following theorem we recall the correct value of the correlation energy. Up to second order in the interaction potential, this was first proven by \cite{HPR20}. The formula including all orders of the potential was proven by \cite{BNP+20,BNP+21,BPSS23} based on approximate bosonization of collective pair excitations. An non-collective approximate bosonization was then developed by \cite{CHN23a}, which ultimately enabled the inclusion of a further exchange-type term in the energy for singular potentials \cite{CHN23,CHN24}. (Recently a computation of the correct order of the correlation energy has been obtained also for particles interacting more strongly than in the mean-field regime \cite{FRS25}.)

In reading the proof of the main result, it is interesting to note that the error terms due to Fock space estimates (the many-body analysis) are all of order $N^{-1}$; only the error terms due to the explicit computation of $n_k^2$, $\alpha_k$, and $\beta_k$ (defined as sums over lattice points in $\Zbb^3$, compare \eqref{eq:alphabeta}) contribute an error at order $N^{-2/3}$. Counting lattice points in convex bodies is a much studied subject, and improved lattice point counting estimates would directly improve the error term in our main theorem, and we conjecture it to be as small as $N^{-1 + \varepsilon}$.
\begin{thm}[Optimal Energy]
Let $\kappa := (\frac{3}{4\pi})^{1/3}$. There exists $\alpha >0$ such that
\[
\begin{split}
  E_N & = \mathcal{E}_\textnormal{HF}(\omega)\\
  & \quad + \hbar \kappa \sum_{k \in \Zbb^3} \lvert k\rvert \left[ \frac{1}{\pi} \int_0^\infty  \log\left( 1+ 2 \pi \kappa \hat{V}(k) \left(1-\lambda \arctan\frac{1}{\lambda} \right) \right) \di\lambda - \frac{\pi}{2}\kappa \hat{V}(k) \right]\\
  & \quad + \Ocal(N^{-1/3-\alpha})\;.
\end{split}
\]
If the interaction potential is weak we have
\[
  E_N = \mathcal{E}_\textnormal{HF}(\omega) - \hbar \frac{\pi}{2} \left(1-\log(2)\right) \sum_{k \in \Zbb^3} \lvert k \rvert \hat{V}(k)^2 + \hbar \mathcal{O}\Big(\big(\hat{V}(k)\big)^3\Big) + \Ocal(N^{-1/3-\alpha})\;.
\]
% Notice that $1-\log(2) \simeq 0.3069$.
\end{thm}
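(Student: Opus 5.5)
The first formula is not something I would reprove from scratch. It is the main theorem of the collective bosonization series, and I would quote it from there. The upper bound comes from \cite{BNP+20,BNP+21}, with the lower bound and the full range of regular potentials from \cite{BPSS23}; the same leading term is also recovered in \cite{CHN23a,CHN23,CHN24}. So the first display is taken as given, including the existence of some $\alpha>0$ and the hypotheses on $\hat V$ made there (regular, e.g.\ compactly supported, and small or non-negative). The real content to check is the weak-coupling expansion in the second display. That is an elementary computation on the bracket, done for each fixed $k$.

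Write $g(\lambda):=1-\lambda\arctan\frac1\lambda$ and $x:=2\pi\kappa\hat V(k)g(\lambda)$. Note that $0\le g(\lambda)\le 1$ and $g(\lambda)=\frac{1}{3\lambda^2}+O(\lambda^{-4})$ as $\lambda\to\infty$, so $g$ and $g^2$ are integrable on $[0,\infty)$. Expanding $\log(1+x)=x-\frac{x^2}{2}+O(x^3)$ gives
\[
\frac1\pi\int_0^\infty \log(1+x)\,\di\lambda = 2\kappa\hat V(k)\,I_1 - 2\pi\kappa^2\hat V(k)^2\,I_2 + O\big(\hat V(k)^3\big),\qquad I_j:=\int_0^\infty g(\lambda)^j\,\di\lambda .
\]
The $O(\hat V^3)$ remainder is uniform because $|x|\le 2\pi\kappa|\hat V(k)|\,g$, and $\int g^3<\infty$. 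This needs $|\hat V|$ small enough that $1+x$ stays away from zero.

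Next I compute the two integrals.
\begin{itemize}
\item For $I_1$, integrate by parts, or substitute $\lambda=\cot\theta$. This gives $I_1=\pi/4$. The linear term is then $2\kappa\hat V\cdot\frac\pi4=\frac\pi2\kappa\hat V$, which cancels exactly against the subtracted $-\frac\pi2\kappa\hat V(k)$.
\item The quadratic term equals $-\hbar\kappa\sum_k|k|\,2\pi\kappa^2 I_2\hat V(k)^2$. Since $\kappa^3=\frac{3}{4\pi}$, this matches the claimed $-\hbar\frac\pi2(1-\log 2)\sum_k|k|\hat V(k)^2$ exactly when $I_2=\frac{\pi}{3}(1-\log 2)$.
\end{itemize}

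So the main step is evaluating $I_2=\int_0^\infty\big(1-\lambda\arctan\frac1\lambda\big)^2\di\lambda$. I would substitute $\lambda=\cot\theta$, which turns it into $\int_0^{\pi/2}(1-\theta\cot\theta)^2\sin^{-2}\theta\,\di\theta$. After expanding the square, the terms reduce by integration by parts to the classical integral $\int_0^{\pi/2}\theta\cot\theta\,\di\theta=\frac\pi2\log 2$. Alternatively, one can integrate by parts directly in $\lambda$, using $\frac{\di}{\di\lambda}g=-\arctan\frac1\lambda+\frac{\lambda}{1+\lambda^2}$, and then reduce to $\int_0^\infty\frac{\arctan(1/\lambda)}{\lambda(1+\lambda^2)}\cdots$, which produces the $\log 2$.

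Finally, summing the per-$k$ remainder over $k$ requires $\sum_k|k|\,|\hat V(k)|^3<\infty$. This holds under the regularity assumptions of the first part, and it gives the stated $\hbar\,\Ocal(\hat V^3)$ correction.
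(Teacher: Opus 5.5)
Your proposal is correct. For the first display it matches the paper: the paper gives no proof of this theorem and only recalls it from \cite{BNP+20,BNP+21,BPSS23}. One small attribution slip: the matching lower bound is already in \cite{BNP+21} for small potentials, and \cite{BPSS23} removes the smallness assumption.

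The real difference is the weak-coupling display. The paper just asserts it and credits the second-order result to \cite{HPR20}, where it is proved by a direct argument that does not use the all-orders formula. You instead deduce it from the first display by computing $I_1=\pi/4$ and $I_2=\frac{\pi}{3}(1-\log 2)$. Both values are right, and so is your bookkeeping ($2\pi\kappa^3=\frac32$, exact cancellation of the linear term). Your route therefore shows explicitly that the two displays are consistent with each other, which the paper leaves implicit. The cost is that you inherit the hypotheses of the all-orders result, plus the smallness and $\sum_k\lvert k\rvert\lvert\hat V(k)\rvert^3<\infty$ conditions you state, whereas the \cite{HPR20} route does not depend on them.

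There are two technical points in your sketch of $I_2$:
\begin{itemize}
\item After substituting $\lambda=\cot\theta$ and expanding the square, each of the three terms behaves like $\theta^{-2}$ at $\theta=0$ and is not integrable on its own. You would have to integrate over $[\epsilon,\pi/2]$ and check that the terms of order $\epsilon^{-1}$ cancel.
\item The integral $\int_0^\infty\frac{\arctan(1/\lambda)}{\lambda(1+\lambda^2)}\,\di\lambda$, as written, diverges at $\lambda=0$. The integral that actually appears is $\int_0^\infty\frac{\lambda\arctan(1/\lambda)}{1+\lambda^2}\,\di\lambda=\frac{\pi}{2}\log 2$.
\end{itemize}

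Both issues disappear if you use the identity $\lambda g'(\lambda)=g(\lambda)-\frac{1}{1+\lambda^2}$, which follows from your formula for $g'$. One integration by parts then gives $2I_1=\int_0^\infty\frac{\di\lambda}{1+\lambda^2}=\frac{\pi}{2}$ and
\[
  3I_2 = 2\int_0^\infty\frac{g(\lambda)}{1+\lambda^2}\,\di\lambda = \pi - 2\int_0^{\pi/2}\theta\cot\theta\,\di\theta = \pi(1-\log 2)\;,
\]
where the middle step substitutes $\lambda=\cot\theta$ in the term containing the arctangent.
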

The general formulas including all orders of the interaction potential are difficult to compare analytically, but fixing a choice of $\hat{V}(k)$, they can easily be compared numerically. Alternatively, we consider the case of weak interaction potential so that we can expand to second order in $\hat{V}(k)$.
\begin{cor}[Second Order Expansion]
To second order in $\hat{V}(k)$, we have
\[
  \inf_{\Xi} \langle \psi_\Xi, H_N \psi_\Xi \rangle = \mathcal{E}_\textnormal{HF}(\omega) - \hbar  \frac{\pi}{2} \frac{9}{32} \sum_{k\in \Zbbn} \lvert k\rvert \hat{V}(k)^2  + \hbar \mathcal{O}\Big(\big(\hat{V}(k)\big)^3\Big)+ \mathcal{O}(N^{-1/2})\;.
\]
\end{cor}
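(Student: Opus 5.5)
The corollary follows from the Main Result, Theorem~\ref{thm:mainresult}, by a Taylor expansion of its correlation term in powers of $\hat{V}(k)$. Write $x := \hat{V}(k)$ and consider, for fixed $k$, the summand
\[
  f(x) := \mu + \mu^{-1} x - \sqrt{\mu^2 - 2x} \;.
\]
It is smooth near $x=0$, because $\mu^2 - 2x > 0$ there; the standing assumption $x \geq -\frac14 \mu^2$ even keeps the radicand bounded away from zero on the side of negative $x$.

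Expanding $\sqrt{\mu^2-2x} = \mu\sqrt{1-2x/\mu^2}$ with $\sqrt{1-y} = 1 - \tfrac{y}{2} - \tfrac{y^2}{8} + \mathcal{O}(y^3)$ at $y = 2x/\mu^2$ gives
\[
  \sqrt{\mu^2-2x} = \mu - \frac{x}{\mu} - \frac{x^2}{2\mu^3} + \mathcal{O}(x^3) \;.
\]
The zeroth- and first-order terms cancel against $\mu + \mu^{-1}x$, so that
\[
  f(x) = \frac{x^2}{2\mu^3} + \mathcal{O}(x^3) \;.
\]
Inserting this into Theorem~\ref{thm:mainresult}, the correlation term becomes
\[
  -\frac{\hbar}{2}\sum_{k} \lvert k\rvert \frac{\hat{V}(k)^2}{2\mu^3}
  = -\frac{\hbar}{4\mu^3}\sum_{k} \lvert k\rvert \hat{V}(k)^2 \;,
\]
up to the cubic remainder. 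Since $\mu^3 = \frac{16}{9\pi}$, we have $\frac{1}{4\mu^3} = \frac{9\pi}{64} = \frac{\pi}{2}\cdot\frac{9}{32}$, which is exactly the stated coefficient.

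It remains to control the cubic remainder uniformly in $k$. On the admissible range $x \in [-\mu^2/4, x_0]$ with $x_0 < \mu^2/2$, Taylor's theorem with Lagrange remainder gives $\lvert f(x) - x^2/(2\mu^3)\rvert \leq C\lvert x\rvert^3$, with $C$ depending only on $\mu$ and $x_0$. Summing against $\hbar\lvert k\rvert$ then yields the term $\hbar\,\mathcal{O}(\hat{V}^3)$, interpreted as $\hbar \sum_k \lvert k\rvert\, \lvert\hat{V}(k)\rvert^3$. This sum is finite under the summability hypotheses already made through the finiteness of $A_1$ through $A_3$, for instance when $\hat{V}$ has compact support. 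The $\mathcal{O}(N^{-1/2})$ error is inherited unchanged from Theorem~\ref{thm:mainresult}.

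The only potential obstacle is this uniformity, namely that the expansion is applied where $\hat{V}(k)$ is small, as the phrase ``to second order'' presupposes. If one wants a precise statement, one replaces $\hat{V}$ by $\lambda\hat{V}$ and notes that the remainder is $\mathcal{O}(\lambda^3)$ uniformly in $k$.
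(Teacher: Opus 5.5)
Your expansion of $\sqrt{\mu^2-2x}$ is correct, but the next step is not. Subtracting $\mu-\frac{x}{\mu}-\frac{x^2}{2\mu^3}$ from $\mu+\frac{x}{\mu}$ gives
\[
  \mu+\mu^{-1}x-\sqrt{\mu^2-2x}=\frac{2x}{\mu}+\frac{x^2}{2\mu^3}+\mathcal{O}(x^3)\;.
\]
So only the zeroth-order terms cancel, and the linear terms add (the derivative at $x=0$ is $2/\mu$). Applied to the formula exactly as displayed in Theorem~\ref{thm:mainresult}, your argument therefore produces an extra first-order term $-\hbar\mu^{-1}\sum_{k\in\Zbbn}\lvert k\rvert\hat V(k)$, which contradicts the corollary. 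You reach the stated coefficient only through this sign slip. Your remark on the hypothesis $\hat V(k)\geq-\frac14\mu^2$ is a symptom of the problem. For the radicand $\mu^2-2x$, negative $x$ only helps, and you then had to add an unrelated cutoff $x_0<\mu^2/2$.

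The missing step is to notice that the radicand in the display of Theorem~\ref{thm:mainresult} must be $\mu^2+2\hat V(k)$; the minus sign there is a misprint. You can check this by going back to where the formula comes from:
\begin{itemize}
\item[] By \eqref{eq:minimum} and \eqref{eq:minenergy}, the main term is $\sum_k\frac12\bigl(\sqrt{\alpha_k^2-\beta_k^2}-\alpha_k\bigr)$.
\item[] By Lemma~\ref{lem:constants}, up to $\mathcal{O}(N^{-1/2})$, one has $\alpha_k-\beta_k=\hbar\lvert k\rvert\mu$ and $\alpha_k+\beta_k=\hbar\lvert k\rvert\bigl(\mu+2\mu^{-1}\hat V(k)\bigr)$.
\item[] Hence $\alpha_k^2-\beta_k^2=\hbar^2\lvert k\rvert^2\bigl(\mu^2+2\hat V(k)\bigr)$, and the summand is $-\frac{\hbar}{2}\lvert k\rvert\bigl(\mu+\mu^{-1}\hat V(k)-\sqrt{\mu^2+2\hat V(k)}\bigr)$.
\end{itemize}
With $\sqrt{\mu^2+2x}=\mu+\frac{x}{\mu}-\frac{x^2}{2\mu^3}+\mathcal{O}(x^3)$, the linear terms now genuinely cancel. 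Your evaluation $\frac{1}{4\mu^3}=\frac{9\pi}{64}=\frac{\pi}{2}\cdot\frac{9}{32}$ then completes the argument. This Taylor expansion is exactly what the paper intends; it gives no separate proof of the corollary.

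With the correct sign, the remainder is also uniform without any cutoff. The third derivative of $\sqrt{\mu^2+2x}$ is $3(\mu^2+2x)^{-5/2}$. On the whole range $x\geq-\frac14\mu^2$ permitted by the theorem's hypothesis, the radicand stays at least $\mu^2/2$, so this derivative is bounded by $3(\mu^2/2)^{-5/2}$.
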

At second order, this is about 92\% of the optimal energy, where instead of $\frac{9}{32} = 0.28125$, the pre-factor is $(1-\log(2)) \simeq 0.3068$.

\section{Particle-Hole Transformation and Hartree-Fock Theory} \label{sec:two}
We consider $L^2_a\left((\Tbb^3)^N\right)$ as embedded in the fermionic Fock space constructed over $L^2\left(\Tbb^3\right)$. The creation and annihilation operators (ore more precisely, operator-valued distributions) for $x \in \Tbb^3$ are defined in the standard way and denoted by $a^*_x$ and $a_x$, respectively. Since we consider a fermionic systems, they satisfy the canonical anticommutation rules (CAR)
\[
   \{ a_x, a_y \} = 0 = \{a^*_x, a^*_y\} \;, \qquad \{ a_x, a^*_y \} = \delta(x-y) \;.
\]
Given any rank-$N$ projection $\omega = \sum_{j=1}^N \lvert f_j \rangle \langle f_j \rvert$ (which we think of as a one-particle reduced density matrix of a Slater determinant) we define
\[
   u := 1 - \omega \;, \qquad v  := \sum_{j=1}^N \lvert \cc{f_j} \rangle \langle f_j \rvert \;.
\]
Following the theory of fermionic Bogoliubov transformations \cite{Sol14}, one can define the particle-hole transformation $R_\omega$ as a unitary on Fock space such that
\begin{equation}\label{eq:phtrafo}
   R_\omega\Omega = \psi_\textnormal{Slater}\;, \qquad R_\omega a_x R_\omega^* = a(u_x) + a^*(v_x) \;, \qquad R_\omega a^*_x R_\omega^* = a^*(u_x) + a(v_x)\;.
\end{equation}
Here we introduced the notation $u_x := u(\cdot,x)$ and $v_x := v(\cdot,x)$, where $u(\cdot,\cdot)$, and $v(\cdot, \cdot)$ denote the (distributional) integral kernels of the operators $u$ and $v$. The detailed construction of the particle-hole transformation can be found, for example, in \cite{BD23,BPS}. We proceed to compute the transformed Hamiltonian, where for all the rest of the paper we fix $\omega$ as the projection onto the $N$ plane waves with lowest kinetic energy. Using \eqref{eq:phtrafo} and the CAR to re-arrange the result in normal order, that is, with all creation operators to the left of all the annihilation operators, we obtain
\[
  R^*_\omega H_N R_\omega = \mathcal{E}_\text{HF}(\omega) + \di\Gamma(uhu-\cc{v}\cc{h}v) + \int_{\Tbb^3 \times \Tbb^3} \di x \di y \left(a^*_x a^*_y (uh\cc{v})(x,y) + \hc\right) + Q_N \;.
\]
In this formula $h$ is the Hartree-Fock Hamiltonian (which depends on $\omega$, see \eqref{eq:hfhamiltonian} for the explicit form). If $\omega$ projects onto a stationary point of the Hartree-Fock functional, then $uh\cc{v} = 0$, so the $(a^* a^* + a a)$-term vanishes.

Consequently we are looking for a trial state $\xi \in \fock$ (containing equal numbers of particles and holes, equivalent to $R_\omega\xi$ being an $N$-particle state) such that
\begin{equation}\label{eq:phtransform}
  \langle R_\omega \xi, H_N R_\omega \xi \rangle = \mathcal{E}_\text{HF}(\omega) + \langle \xi, \left( \di\Gamma(uhu-\cc{v}\cc{h}v) + Q_N\right)\xi\rangle < \mathcal{E}_\text{HF}(\omega) \;.
\end{equation}
The quartic terms are
\[
\begin{split}
  Q_N & = \frac{1}{2N} \int_{\Tbb^3 \times \Tbb^3} \di x\di y V(x-y) \bigg(\mathcal{E}_1  + 2 a^*(u_x) a^*(\cc{v}_x) a(\cc{v}_y) a(u_y)\\
  & \hspace{5cm} + \Big[ a^*(u_x) a^*(u_y) a^*(\cc{v}_y) a^*(\cc{v}_x) + \mathcal{E}_2  + \hc\Big]
  \bigg)
\end{split}
\]
where
\begin{align*}
  \mathcal{E}_1 & = a^*(u_x)a^*(u_y) a(u_y)a(u_x)- 2 a^*(u_x) a^*(\cc{v}_y) a(\cc{v}_y) a(u_x) + a^*(\cc{v}_y)a^*(\cc{v}_x) a(\cc{v}_x) a(\cc{v}_y) \;, \\
  \mathcal{E}_2 & = - 2a^*(u_x) a^*(u_y)a^*(\cc{v}_x) a(u_y) + 2 a^*(u_x) a^*(\cc{v}_y) a^*(\cc{v}_x) a(\cc{v}_y) \;.
\end{align*}
The term $\mathcal{E}_1$ can be estimated using the number operator, while $\mathcal{E}_2$ does not contribute to the expectation value due to a parity argument (it creates particles in $\pm1$-steps, but the trial state that we use contains only multiples of $2$ particles or holes).
\begin{lem}\label{lem:errornoparthole}
 For all $\xi \in \fock$ we have
 \[
  \lvert \langle \xi, \frac{1}{2N}\int_{\Tbb^3\times \Tbb^3} \di x\di y\, V(x-y) \,\mathcal{E}_1\, \xi\rangle\rvert \leq \frac{2}{N} \sum_{k \in \Zbb^3} \lvert \hat{V}(k)\rvert\, \langle \xi, \Ncal^2\xi\rangle \;.
 \]
\end{lem}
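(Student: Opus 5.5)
Each of the three terms in $\mathcal{E}_1$ will be handled separately by going to momentum space, where every term takes the form $\sum_k \hat V(k)\, d^*_k d_k$ up to normal ordering. The operators $d_k$ are "density-type" operators restricted to particles or holes. Fix the plane-wave $\omega$. Then $u$ and $\bar v$ are diagonal in the momentum basis: $u$ projects onto $B_\textnormal{F}^c$ and $\bar v$ onto (a reflection of) $B_\textnormal{F}$. Write $V(x-y) = (2\pi)^{-3}\sum_k \hat V(k) e^{ik(x-y)}$, with the normalization of $\hat V$ fixed as in the paper. Then, for instance,
\[
\int \di x \di y\, V(x-y)\, a^*(u_x)a^*(u_y)a(u_y)a(u_x) = \sum_{k} \hat V(k) \sum_{p,q} a^*_{p+k} a^*_{q-k} a_q a_p,
\]
with all momenta restricted to $B_\textnormal{F}^c$. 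Analogous formulas hold for the hole--hole term, with momenta in $B_\textnormal{F}$, and for the mixed particle--hole term.

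The key step is to bound each such quartic expression in the form
\[
\Big\lvert \big\langle \xi, \textstyle\sum_{p,q} a^*_{p+k} a^*_{q-k} a_q a_p\, \xi \big\rangle \Big\rvert \le \lVert D_k^* \xi\rVert\,\lVert \ldots\rVert \le \langle \xi, \Ncal^2 \xi\rangle
\]
for each fixed $k$. Here $D_k := \sum_p a^*_{p-k} a_p$ (restricted appropriately), so that the term equals $D_k^* D_k$ minus a quadratic correction coming from anticommutation. This gives a bound uniform in $k$, which is then summed against $\lvert \hat V(k)\rvert$. The simplest route is to write the term as $\sum_{q} a^*_{q-k}\big(\sum_p a^*_{p+k} a_p\big) a_q$ up to sign, i.e. in the shape $\sum_q a^*_{q-k} D_k^* a_q$ after reordering. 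For the fermionic $D_k$ one has $\lVert D_k \phi\rVert \le \lVert \Ncal \phi\rVert$, since $\lvert \langle \phi, D_k \phi'\rangle\rvert \le \sum_p \lVert a_{p-k}\phi\rVert \lVert a_p \phi'\rVert \le \langle\phi,\Ncal\phi\rangle^{1/2}\langle \phi',\Ncal\phi'\rangle^{1/2}$ by Cauchy--Schwarz, and $\Ncal$ commutes appropriately. With the pulled-through number operator, Cauchy--Schwarz in $q$ then gives
\[
\sum_q \lVert a_{q-k}\xi\rVert\, \lVert D_k^* a_q \xi\rVert \le \Big(\sum_q \lVert a_{q-k}\xi\rVert^2\Big)^{1/2}\Big(\sum_q \lVert (\Ncal+1) a_q \xi\rVert^2\Big)^{1/2} \le \langle \xi, \Ncal^2\xi\rangle,
\]
using $a_q \Ncal = (\Ncal+1)a_q$ carefully. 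One has to track whether $\Ncal$ or $\Ncal-1$ appears, so that the constant comes out as $\Ncal^2$ rather than $(\Ncal+1)^2$; this is possible because $a_q\xi$ has one fewer particle. The mixed term with the factor $2$ and the hole--hole term are treated identically, where $\Ncal$ denotes the total number of particles plus holes, which bounds each restricted number operator.

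Summing up, the three terms contribute coefficients $1+2+1=4$. Multiplying by $\frac{1}{2N}$ gives $\frac{2}{N}\sum_k \lvert\hat V(k)\rvert\,\langle\xi,\Ncal^2\xi\rangle$, provided the Fourier normalization makes each $k$-term carry exactly $\hat V(k)$. The main obstacle I expect is this bookkeeping: the correct ordering of the fermionic operators, so that the bound is exactly $\Ncal^2$ without lower-order $\Ncal$ terms, and the Fourier constants matching the paper's convention for $\hat V$. If a stray $+\Ncal$ appears, I would absorb it using $\Ncal \le \Ncal^2$ on Fock space, since $\Ncal$ has integer spectrum. The resulting constant may then be slightly worse, and I would recheck it against the claimed factor $2$.
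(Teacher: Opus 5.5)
\textbf{There is a genuine gap in the step your argument actually rests on.}

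Your reordering of the particle--particle term into $\sum_q a^*_{q-k} D_k^* a_q$ is fine; it even holds with no sign and no correction. The problem is that your subsequent Cauchy--Schwarz in $q$ is unbalanced. Since $\sum_q \lVert (\Ncal+1)a_q\xi\rVert^2=\sum_q\lVert a_q\Ncal\xi\rVert^2\le\langle\xi,\Ncal^3\xi\rangle$, your chain yields $\langle\xi,\Ncal\xi\rangle^{1/2}\langle\xi,\Ncal^3\xi\rangle^{1/2}$. By Cauchy--Schwarz this quantity is bounded \emph{below} by $\langle\xi,\Ncal^2\xi\rangle$, not above. It is also not controlled by any multiple of $\langle\xi,\Ncal^2\xi\rangle$. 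Take $\xi=\sqrt{1-\epsilon}\,\xi_1+\sqrt{\epsilon}\,\xi_M$ with normalized $\xi_n$ in the $n$-particle sector and $\epsilon=M^{-5/2}$. Then $\langle\xi,\Ncal^2\xi\rangle\to1$, while $\langle\xi,\Ncal\xi\rangle^{1/2}\langle\xi,\Ncal^3\xi\rangle^{1/2}\sim M^{1/4}$. So the obstacle is not a shift $\Ncal$ versus $\Ncal\pm1$, nor a lower-order $+\Ncal$ that could be absorbed via $\Ncal\le\Ncal^2$. The issue is that one factor carries $\Ncal^{1/2}$ and the other carries $\Ncal^{3/2}$. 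The mixed particle--hole term, treated the same way, has the same defect.

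\textbf{How to repair it.} Balance the powers of $\Ncal$ by working sector by sector.
\begin{itemize}
\item Every term of $\mathcal{E}_1$ commutes with $\Ncal$, so write $\xi=\sum_n\xi_n$ and estimate each sector separately.
\item On the $n$-particle sector, $a_q\xi_n$ has $n-1$ particles and $D_k^*=\di\Gamma(A)$ with $\lVert A\rVert\le1$ preserves particle number. Hence $\lVert D_k^*a_q\xi_n\rVert\le(n-1)\lVert a_q\xi_n\rVert$.
\item Cauchy--Schwarz in $q$ then gives $\lvert\langle\xi_n,\sum_q a^*_{q-k}D_k^*a_q\,\xi_n\rangle\rvert\le n(n-1)\lVert\xi_n\rVert^2$.
\item Summing over $n$ yields $\langle\xi,\Ncal(\Ncal-1)\xi\rangle\le\langle\xi,\Ncal^2\xi\rangle$, uniformly in $k$.
\end{itemize}
The hole--hole term and the mixed term $\sum_h a^*_{h+k}\big(\sum_p a^*_{p+k}a_p\big)a_h$ are handled identically. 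The weights $1+2+1$ together with the prefactor $\frac{1}{2N}$ then give exactly $\frac{2}{N}\sum_k\lvert\hat V(k)\rvert$. With the paper's convention $V(x-y)=\sum_k\hat V(k)e^{ik(x-y)}$, no factor $(2\pi)^{-3}$ appears.

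\textbf{Alternative repair.} The route you mentioned and then abandoned also works, and it is what the paper's one-line proof (Fourier decomposition plus $\lVert\di\Gamma(A)\xi\rVert\le\lVert A\rVert\lVert\Ncal\xi\rVert$) amounts to. The particle--particle term equals $D_k^*D_k$ minus the restricted number operator $\sum_q a^*_qa_q$, summed over $q,\,q-k\in B_\textnormal{F}^c$. Both pieces are nonnegative. The first satisfies $\langle\xi,D_k^*D_k\xi\rangle=\lVert D_k\xi\rVert^2\le\lVert\Ncal\xi\rVert^2$, and the second is at most $\langle\xi,\Ncal\xi\rangle$. Because the splitting is symmetric, no $\Ncal^3$ ever arises.
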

\begin{rem}
We use the convention that, for example, $\langle \xi, \Ncal^2 \xi\rangle$ is read as $+\infty$ if $\xi$ is not in the domain of $\Ncal^2$. The inequality is then true but trivial. For our taste this leads to a more readable text than specifying the operator domains in all inequalities.
 \end{rem}
\begin{proof}[Proof of Lemma~\ref{lem:errornoparthole}] We use the Fourier decomposition $V(x-y) = \sum_{k \in \Zbb^3} \hat{V}(k) e^{ik(x-y)}$ and Lemma \ref{lem:classics} to get a bound in terms of the operator norm of any bounded operator $A$, namely
 \[
  \lvert \langle \xi, \di\Gamma(A) \xi\rangle\rvert \leq \norm{A} \langle \xi, \Ncal \xi\rangle \;. \qedhere
 \]
\end{proof}

\begin{lem}\label{lem:errorparity}
Let $T$ be an operator that commutes with $i^\Ncal$, then
\[
  \langle T\Omega,\mathcal{E}_2 T\Omega\rangle =0 \;.
\]
\end{lem}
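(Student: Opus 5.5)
The plan is a parity argument with respect to the number operator $\Ncal$ on fermionic Fock space. First we check how $i^\Ncal$ acts on creation and annihilation operators. For any $f \in L^2(\Tbb^3)$, $a^*(f)$ maps the $n$-particle sector into the $(n+1)$-particle sector, so
\[
  i^{\Ncal} a^*(f) = i\, a^*(f)\, i^{\Ncal}\;, \qquad i^{\Ncal} a(f) = -i\, a(f)\, i^{\Ncal}\;.
\]
Note that $a(f)$ lowers $\Ncal$ by one, which gives the factor $i^{-1}=-i$. The relations hold as identities of (unbounded) operators on the finite-particle vectors and extend in the usual way.

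Next we look at the integrand of $\mathcal{E}_2$. Both summands consist of three creation operators and one annihilation operator, with smeared functions $u_x, u_y, \cc{v}_x, \cc{v}_y$. Hence
\[
  i^{\Ncal}\,\big(a^*a^*a^*a\big) = i^3 (-i)\,\big(a^*a^*a^*a\big)\, i^{\Ncal} = -\big(a^*a^*a^*a\big)\, i^{\Ncal}\;.
\]
So each term, and after integrating against $V(x-y)$ also $\mathcal{E}_2$ (understood as in $Q_N$), anticommutes with $i^{\Ncal}$. More conceptually, $\mathcal{E}_2$ changes particle number by $\pm 2$ in a way that flips the sign of $i^\Ncal$, i.e. it maps vectors with $i^{\Ncal}=\pm1$ (even particle number) to vectors with $i^{\Ncal}=\mp1$. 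The same holds for the hermitian conjugate, so the argument covers $\mathcal{E}_2+\hc$ as well.

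Then we use $i^{\Ncal}\Omega=\Omega$ and the assumption $[T,i^{\Ncal}]=0$ to get $i^{\Ncal}T\Omega = T\Omega$. Since $i^{\Ncal}$ is unitary with $(i^{\Ncal})^* = i^{-\Ncal}$,
\[
  \langle T\Omega, \mathcal{E}_2 T\Omega\rangle = \langle i^{\Ncal}T\Omega, \mathcal{E}_2\, i^{\Ncal} T\Omega\rangle = \langle T\Omega, i^{-\Ncal}\mathcal{E}_2 i^{\Ncal} T\Omega\rangle = -\langle T\Omega, \mathcal{E}_2 T\Omega\rangle\;,
\]
which forces the expectation value to vanish.

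The only genuine subtlety is rigor with unbounded operators and distributional kernels. $\mathcal{E}_2$ is an integral of operator-valued distributions, and $T\Omega$ must lie in a suitable domain, e.g. that of $\Ncal^2$, as in the convention of the Remark. I would handle this by writing $\mathcal{E}_2$ in momentum space as a sum of monomials $a^*_{p}a^*_{q}a^*_{r}a_{s}$, where each monomial shifts $\Ncal$ by exactly $+2$. Each monomial then anticommutes with $i^\Ncal$ on finite-particle vectors, and we pass to the limit in the expectation. Alternatively, one can observe directly that $T\Omega$ has components only in sectors with $\Ncal \equiv 0 \pmod 4$, while $\mathcal{E}_2 T\Omega$ lives in sectors $\Ncal\equiv 2 \pmod 4$; the two are orthogonal sectors, so the inner product is zero.
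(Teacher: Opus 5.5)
Your proof is correct and is essentially the paper's argument. The paper inserts $i^\Ncal$ next to $\Omega$ and commutes it through $T$ and then through $\mathcal{E}_2$, picking up $i^{\Ncal-2} = -i^{\Ncal}$ because $\mathcal{E}_2$ raises $\Ncal$ by two; it then moves it into the left argument, which is exactly your identity $i^{-\Ncal}\mathcal{E}_2 i^{\Ncal} = -\mathcal{E}_2$ combined with $i^\Ncal T\Omega = T\Omega$. Your sector-orthogonality reformulation ($\Ncal \equiv 0$ versus $\Ncal \equiv 2 \pmod 4$) and the domain remarks are valid additions but not needed.
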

\begin{proof}
 We can insert an $i^\Ncal$ in the right argument, since $i^\Ncal \Omega = i^0 \Omega = \Omega$:
 \begin{align*}
  \langle T\Omega, \mathcal{E}_2 T\Omega\rangle & = \langle T \Omega, \mathcal{E}_2 T i^\Ncal \Omega \rangle = \langle T\Omega, \mathcal{E}_2 i^\Ncal T\Omega \rangle \\
  & = \langle T\Omega, i^{\Ncal-2} \mathcal{E}_2 T\Omega \rangle  = - \langle T (-i)^\Ncal \Omega, \mathcal{E}_2 T\Omega\rangle = - \langle T\Omega, \mathcal{E}_2 T\Omega \rangle\;. \qedhere
 \end{align*}
\end{proof}
We continue with
\begin{equation}
\label{eq:splitinteraction}
  Q_N = Q_N^{(0)} + \frac{1}{2N} \int_{\Tbb^3 \times \Tbb^3} \di x\di y V(x-y) \big(\mathcal{E}_1 + \mathcal{E}_2 + \mathcal{E}_2^* \big)
\end{equation}
where
\[
\begin{split}
  Q_N^{(0)} & := \frac{1}{2N} \sum_{k \in \Zbb^3} \hat{V}(k) \int_{\Tbb^3 \times \Tbb^3} \di x \di y \bigg( 2 a^*(u_x)e^{ikx} a^*(\cc{v}_x) a(\cc{v}_y) e^{-iky} a(u_y)\\
& \hspace{5cm}+ \left[a^*(u_x)e^{ikx} a^*(\cc{v}_x) a^*(u_y) e^{-iky} a^*(\cc{v}_y)  + \hc\right]\bigg) \;.\end{split}
\]
We introduce global bosonic particle-hole pair excitations by
\begin{equation}
\label{eq:bosonicpair}
  \tilde b^*_k := \int_{\Tbb^3} \di x\, a^*(u_x) e^{ikx} a^*(\cc{v}_x) = \sum_{\substack{p \in B_\textnormal{F}^c\\h \in B_\textnormal{F}}} \delta_{p-h,k} a^*_p a^*_h \;.
\end{equation}
Here $B_\textnormal{F}$ is the Fermi ball and $B_\textnormal{F}^c = \Zbb^3 \setminus B_\textnormal{F}$. Due to $uv=0$, we have $\tilde{b}^*_0 = 0$. Then
\[
  Q_N^{(0)} = \frac{1}{2N} \sum_{k\in \Zbbn} \hat{V}(k) \left( 2 \tilde b^*_k \tilde b_k + \tilde b^*_k \tilde b^*_{-k} + \tilde b_{-k} \tilde b_k \right) \;.
\]
We normalize the new operators to strengthen the similarity to bosonic creation and annihilation operators, introducing
\begin{equation}\label{eq:almostbosonic}
  b^*_k := \frac{1}{n_k} \tilde{b}^*_k \;, \qquad n_k^2 := \norm{\tilde b^*_k\Omega}^2 \;.
\end{equation}
\begin{lem}[Normalization Constant]\label{lem:normalizationconstant}
Let $\hat{x}$ be the position operator, that is, multiplication by the coordinate $x$.
 We have
 \[
  n_k = \sqrt{\tr e^{ik\hat{x}}(1-\omega)e^{-ik\hat{x}} \omega} = \norm{u e^{ik\hat{x}} \cc{v}}_\textnormal{HS}\;.
 \]
 This can be computed explicitly, yielding
  \begin{equation}
 \label{eq:normalconst}
  n_k^2 = n_{-k}^2 = \left( \frac{9\pi}{16}\right)^{1/3} \lvert k\rvert N \hbar +  \mathcal{O}(N^{1/2}) \;, \quad n_k = \left( \frac{3}{4} \sqrt{\pi}\right)^{1/3} \sqrt{\lvert k\rvert N \hbar} + \mathcal{O}(N^{1/6})\;.
 \end{equation}
\end{lem}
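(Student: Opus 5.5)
First I would establish the trace formula by a direct computation with the CAR. Since $\tilde b^*_k\Omega = \sum_{p\in B_\textnormal{F}^c, h\in B_\textnormal{F}} \delta_{p-h,k}\, a^*_p a^*_h\Omega$, and the vectors $a^*_pa^*_h\Omega$ for distinct pairs $(p,h)$ are orthonormal, I obtain
\[
  n_k^2 = \big\lvert\{ (p,h) : p \in B_\textnormal{F}^c,\ h \in B_\textnormal{F},\ p - h = k\}\big\rvert = \big\lvert\{ h \in B_\textnormal{F} : h+k \notin B_\textnormal{F}\}\big\rvert \;.
\]
The same number arises in operator language. Write $\omega = \sum_{h\in B_\textnormal{F}}\lvert f_h\rangle\langle f_h\rvert$, and note that $e^{ik\hat x} f_h = f_{h+k}$. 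Then $e^{ik\hat x}(1-\omega)e^{-ik\hat x}$ is the projection onto the plane waves $f_q$ with $q-k\notin B_\textnormal{F}$, so
\[
  \tr\, e^{ik\hat x}(1-\omega)e^{-ik\hat x}\omega = \lvert\{h\in B_\textnormal{F}: h-k\notin B_\textnormal{F}\}\rvert \;.
\]
This count equals the previous one under $h\mapsto -h$, because $B_\textnormal{F}$ is symmetric. The same reflection also gives $n_k=n_{-k}$.

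For the Hilbert--Schmidt expression I would write $\norm{ue^{ik\hat x}\cc v}_\textnormal{HS}^2 = \tr\, \cc v^* e^{-ik\hat x} u\, e^{ik\hat x}\cc v$. Since the Fermi ball is symmetric, $\cc v^*\cc v$ is unitarily equivalent, via complex conjugation, to $\omega$ up to the reflection $h\mapsto -h$. Alternatively, one can use the kernels $u(\cdot,x)$ and $\cc v(\cdot,x)$ directly in \eqref{eq:bosonicpair}, for which $\norm{\tilde b^*_k\Omega}^2 = \int\!\!\int \lvert (u e^{ik\hat x}\cc v)(y,x)\rvert^2$. Either route is a short bookkeeping exercise.

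The explicit asymptotics form the real content. The set $\{h\in B_\textnormal{F}: h+k\notin B_\textnormal{F}\}$ consists of the lattice points in $B(0,k_\textnormal{F})\setminus B(-k,k_\textnormal{F})$, where $B_\textnormal{F}$ is taken as the set of lattice points in a ball of radius $k_\textnormal{F}$. Its continuum volume is that of a ball minus a shifted ball, namely $\pi k_\textnormal{F}^2\lvert k\rvert - \frac{\pi}{12}\lvert k\rvert^3$ for $\lvert k\rvert\le 2k_\textnormal{F}$. From $N = \frac{4\pi}{3}k_\textnormal{F}^3 + \Ocal(k_\textnormal{F}^{3/2})$ (a crude lattice count suffices) I get $k_\textnormal{F} = (3/4\pi)^{1/3} N^{1/3}(1+\Ocal(N^{-1/2}))$. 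Hence
\[
  \pi k_\textnormal{F}^2\lvert k\rvert = \pi\Big(\tfrac{3}{4\pi}\Big)^{2/3}N^{2/3}\lvert k\rvert = \Big(\tfrac{9\pi}{16}\Big)^{1/3}\lvert k\rvert N\hbar \;,
\]
since $\pi^3\cdot 9/(16\pi^2) = 9\pi/16$. The cubic term $\frac{\pi}{12}\lvert k\rvert^3$ is lower order for fixed $k$, or it is absorbed in an error depending on $k$. For the square root I would use $\sqrt{a+\Ocal(N^{1/2})} = \sqrt a + \Ocal(N^{1/2}/\sqrt a)$ with $\sqrt a\sim N^{1/3}$. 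This yields an error $\Ocal(N^{1/6})$ and the constant $(9\pi/16)^{1/6} = (\tfrac34\sqrt\pi)^{1/3}$.

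The main obstacle is controlling the difference between the lattice count and the volume of the crescent-shaped region $B(0,k_\textnormal{F})\setminus B(-k,k_\textnormal{F})$ with error $\Ocal(N^{1/2}) = \Ocal(k_\textnormal{F}^{3/2})$. A naive boundary-layer bound gives only $\Ocal(k_\textnormal{F}^2)$, which is too weak. To get the improved bound I would write the count as a difference of two ball counts: $\lvert B_\textnormal{F}\rvert$ minus $\lvert\{h\in B_\textnormal{F}: h+k\in B_\textnormal{F}\}\rvert$, the latter being the lattice points in the lens $B(0,k_\textnormal{F})\cap B(-k,k_\textnormal{F})$. Since $k\in\Zbb^3$ is a lattice vector, shifting by $k$ preserves lattice counts. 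For a convex body with smooth, positively curved boundary pieces, such as the lens, classical lattice point estimates (van der Corput type) give an error $\Ocal(k_\textnormal{F}^{3/2})$ uniformly in the curvature scale. The edge where the two spheres intersect needs a separate check, which I would handle by slicing into planar discs orthogonal to a coordinate axis and applying the two-dimensional circle-problem bound $\Ocal(r^{2/3})$ per slice, summed over $\Ocal(k_\textnormal{F})$ slices. The uniformity in $k$ would be tracked so that the constants enter $A_1$ through $A_3$.
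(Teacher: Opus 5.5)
Your reduction is correct up to the lattice-point step. Counting the orthonormal vectors $a^*_pa^*_h\Omega$ gives $n_k^2=\lvert\{h\in B_\textnormal{F}: h+k\notin B_\textnormal{F}\}\rvert$. The trace and Hilbert--Schmidt identities and the evenness hold, and the volume $\pi k_\textnormal{F}^2\lvert k\rvert-\frac{\pi}{12}\lvert k\rvert^3$ and all constants are right. (A crude Gauss count only gives $N=\frac{4\pi}{3}k_\textnormal{F}^3+\Ocal(k_\textnormal{F}^2)$, not $\Ocal(k_\textnormal{F}^{3/2})$. That is harmless, since it moves $\pi k_\textnormal{F}^2\lvert k\rvert$ only by $\Ocal(k_\textnormal{F}\lvert k\rvert)$.)

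The gap is in the step you call the main obstacle. Hlawka-type bounds $\Ocal(k_\textnormal{F}^{3/2})$ require a smooth, positively curved boundary, so they do not apply to the lens as it stands. Your treatment of the edge does not recover that rate. The pointwise planar bound $\Ocal(r^{2/3})$ summed over $\Ocal(k_\textnormal{F})$ slices gives $\Ocal(k_\textnormal{F}^{5/3})=\Ocal(N^{5/9})$, which is larger than the claimed $\Ocal(N^{1/2})$. Slicing also does not localise to the edge; it recounts the whole lens. Moreover, planes orthogonal to a coordinate axis cut the lens in discs only if $k$ is parallel to that axis. For general $k\in\Zbb^3$ the slices are planar lenses with corners, so the circle-problem bound does not even apply slice by slice. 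The paper instead smooths the edge, applies Hlawka's $\Ocal(k_\textnormal{F}^{3/2})$ bound to the smoothed convex body, and charges the smoothing to a Gauss-type error of order $k_\textnormal{F}$.

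A cleaner way to close the gap avoids the lens altogether by counting along lines parallel to $k$. Write $k=gk_0$ with $g\in\Nbb$ and $k_0\in\Zbb^3$ primitive. The lattice points on a line parallel to $k_0$ form a progression $y+\Zbb k_0$. By convexity, $B_\textnormal{F}$ meets it in $m$ consecutive points, and exactly $\min(m,g)$ of these satisfy $h+k\notin B_\textnormal{F}$. Hence $n_k^2=\sum_{\rho\in\Lambda}\min(m_\rho,g)$, where $\Lambda$ is the projection of $\Zbb^3$ onto $k_0^\perp$. This is a planar lattice of covolume $1/\lvert k_0\rvert$ generated by vectors of length at most $1$, so it has a fundamental cell of diameter at most $2$. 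Two facts now suffice: $m_\rho=0$ for $\lvert\rho\rvert>k_\textnormal{F}$, and $m_\rho\geq g$ whenever $4(k_\textnormal{F}^2-\lvert\rho\rvert^2)\geq(g+1)^2\lvert k_0\rvert^2$. Gauss's argument for discs in $\Lambda$ then gives $n_k^2=\pi k_\textnormal{F}^2\lvert k\rvert+\Ocal(k_\textnormal{F}\lvert k\rvert+\lvert k\rvert^3)$ with an absolute constant. For fixed $k$ this is an error $\Ocal(N^{1/3})$, better than the lemma needs. (For $k$ parallel to a coordinate axis, this is what your slicing becomes if you slice the ball and the lens simultaneously: the difference telescopes to $\lvert k\rvert$ planar disc counts.)
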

\begin{proof}
Using the CAR it is straightforward to find
\[
  \begin{split}
    n_k^2 & = \norm{\tilde b^*_k \Omega}^2 = \langle \Omega, \int \di x a(\cc{v}_x) e^{-ikx} a(u_x) \int \di y a^*(u_y) e^{iky} a^*(\cc{v}_y) \Omega\rangle\\
   & = \int \di x \di y e^{-ikx} \langle u_x,u_y\rangle e^{iky} \langle \cc{v}_x,\cc{v}_y \rangle  = \tr e^{-ik\hat{x}} (1-\omega) e^{ik\hat{x}} \omega
  \end{split}
\]
where we used $\langle u_x,u_y\rangle = (1-\omega)(x,y)$ and $\langle \cc{v}_x,\cc{v}_y\rangle = \omega(y,x)$.

To show that $n_k$ is even as a function of $k$, we write
\begin{align*}
  n_k^2 & = \tr e^{-ik\hat{x}}(1-\omega)e^{ik\hat{x}} \omega = \tr \omega - \tr e^{-ik\hat{x}} \omega e^{ik\hat{x}} \omega\\
 n_{-k}^2 & = \tr e^{ik\hat{x}} (1-\omega)e^{-ik\hat{x}} \omega = \tr \omega - \tr e^{ik\hat{x}} \omega e^{-ik\hat{x}}\omega \;.
\end{align*}
By cyclicity of the trace, the last expressions of the lines are the same.

For the computation of $n_k^2$, consider Figure~\ref{fig:normalization}: the trace of a projection is just its rank, which corresponds to the number of points of $\Zbb^3$ in the dark area in the figure.
The volume of the overlap of two balls, both with radius $R$, with centers displaced by a distance $d$ is $V_\text{lense} = \frac{\pi}{12} (4R+d)(2R-d)^2$.
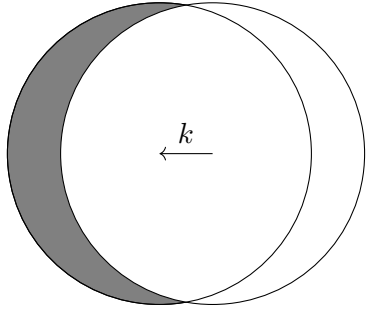
\begin{figure}\centering
\begin{tikzpicture}
 \draw[fill=gray] (4,2) circle (2cm);
 \draw[fill=white] (4.7,2) circle (2cm);
 \draw (4,2) circle (2cm);
 \node [above] at (4.35,2) {$k$};
 \draw [->] (4.7,2)--(4,2);
\end{tikzpicture}
 \caption{The two balls represent the projections $e^{ikx}\omega e^{-ikx}$ and $\omega$, respectively. The normalization constant $n_k^2$ is given by the number of lattice points in the lense marked in gray.}\label{fig:normalization}
\end{figure}
The Fermi ball contains $N$ modes; thus $\frac{4}{3}\pi R^3 = N$, yielding radius $k_\textnormal{F} = \left(\frac{3}{4\pi} N\right)^{1/3} + \mathcal{O}(N^0)$. The volume of the shaded region in Figure \ref{fig:normalization} is
\[
\begin{split}
  V_\text{ball} - V_\text{lense} & = \frac{4}{3}\pi k_\textnormal{F}^3 - \frac{\pi}{12}(4 k_\textnormal{F}+\lvert k\rvert)(2 k_\textnormal{F}-\lvert k\rvert)^2 = \pi k_\textnormal{F}^2 \lvert k\rvert - \frac{\pi}{12}\lvert k\rvert^3 \\
  & = \lvert k\rvert \left( \frac{3}{4} \sqrt{\pi}\right)^{2/3} N \hbar - \frac{\pi}{12} \lvert k\rvert^3 \;.
\end{split}
\]
By Gauss' classical argument for the circle problem, the leading order of each of the lense and the ball volumes capture the order $k_\text{F}^3$ of the respective number of lattice points. To justify the difference taken, we have to go to the next order and use better remainder estimates, since Gauss' remainder estimates are of order $k_F^2$ just like the term we are interested in.

It was shown \cite{Hea99} that the number of lattice points in a ball of radius $R$ is
\begin{equation}
   \lvert \{ x \in \Zbb^3 : \lvert x\rvert \leq R \} \rvert = \frac{4\pi}{3} R^3 + \mathcal{O}_\varepsilon(R^{\frac{29}{22} + \varepsilon}) \;.
\end{equation}
For the number of points in the lense we have to refer to results on the number of lattice points in large convex bodies \cite{Hla50,Mul99,Guo12}. If we refer to \cite{Hla50} as the simplest result going beyond Gauss' simple counting, the difference between the volume of the lense and the number of lattic points it contains is of order
\[
   n_k^2 = V_\text{ball} - V_\text{lense} + \mathcal{O}(k_\textnormal{F}^{3/2}) =  \lvert k\rvert \left( \frac{3}{4} \sqrt{\pi}\right)^{2/3} N \hbar + \mathcal{O}(N^{1/2}) \;.
\]
To be precise, the mentioned result applies to convex bodies with $C^\infty$-boundary. The corners of the lense therefore have to be smoothened. The number of lattice points counted incorrectly due to this smoothing can be captured by a Gauss-type argument as order $k_F$, and therefore subleading compared to the error of order $k_\text{F}^{3/2}$ that we already have. The error term for $n_k$ follows from the one for $n_k^2$ by Taylor expansion.
\end{proof}

We conclude that
\begin{equation}
\label{eq:quadrhamilt}
  Q_N^{(0)} = \frac{1}{2N} \sum_{k\in \Zbbn} \hat{V}(k) n_k^2 \left( 2 b^*_k b_k + b^*_k b^*_{-k} + b_{-k} b_k \right)\;.
\end{equation}
We do not have a simple formula for $\di\Gamma(uhu-\cc{v}hv)$ in terms of $b^*_k$ and $b_k$, but we will see below that we can calculate its expectation value in the trial state anyway.

\section{Almost-Bosonic Collective Operators}
The operators $b_k$ and $b^*_k$ in good approximation behave like bosonic creation and annihilation operators satisfying the CCR.

\begin{lem}[Almost-CCR]\label{lem:approxccr}
Let $k, l \in \Zbb^3$. The $b$-operators annihilate the vacuum $\Omega \in \fock$,
\begin{equation}\label{eq:annihilation}b_k \Omega = 0.\end{equation}
Furthermore they satisfy the approximate canonical commutator relations
 \begin{equation}\label{eq:approxccr}
  [b_{k},b^*_{l}]  = \delta_{k,l} + \mathcal{E}(k,l) \;, \qquad
  [b^*_{k},b^*_{l}]  = [b_{k},b_{l}] = 0\;,
 \end{equation}
 where the error term $\mathcal{E}(k,l)$ is an operator that can be estimated using the fermionic particle number operator $\Ncal$ by
\begin{equation}
\label{eq:ccrerror}
  \norm{\mathcal{E}(k,l) \xi} \leq \frac{1}{n_k n_l} \norm{\Ncal \xi} \quad \text{for all } \xi \in \fock\;.
\end{equation}
Note also that $\mathcal{E}(k,l)^* = \mathcal{E}(l,k)$.
\end{lem}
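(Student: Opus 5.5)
The proof is a direct computation with the CAR in momentum space, using the explicit form
\[
  b^*_k = \frac{1}{n_k}\sum_{p\in B_\textnormal{F}^c,\, h\in B_\textnormal{F}} \delta_{p-h,k}\, a^*_p a^*_h \;,
\]
where $a^*_p$, $a^*_h$ are the particle and hole creation operators. These involve disjoint sets of modes, so all of them mutually anticommute except for the canonical pairs.

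The annihilation property \eqref{eq:annihilation} is immediate: $b_k$ is a sum of products $a_h a_p$ of annihilation operators, each of which kills $\Omega$.

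For $[b^*_k,b^*_l]=0$, I note that each summand $a^*_p a^*_h$ is a product of two fermionic operators, so it is an even element of the CAR algebra. Two pair operators $a^*_p a^*_h$ and $a^*_{p'} a^*_{h'}$ therefore commute: moving one pair past the other costs four sign flips. Hence the commutator vanishes, and $[b_k,b_l]=0$ follows by taking adjoints.

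The main computation is $[b_k,b^*_l]$. I use
\[
  [a_h a_p,\, a^*_{p'} a^*_{h'}] = \delta_{p,p'}\delta_{h,h'} - \delta_{p,p'}\, a^*_{h'} a_h - \delta_{h,h'}\, a^*_{p'} a_p \;,
\]
which follows from the CAR with careful normal ordering (the signs still need checking). Summing with the constraints $p-h=k$ and $p'-h'=l$ gives:
\begin{itemize}
\item The constant term is $\frac{1}{n_k n_l}\sum \delta_{p-h,k}\delta_{p-h,l}$. It vanishes for $k\neq l$ and equals $n_k^2/n_k^2=1$ for $k=l$, since $n_k^2$ is exactly the number of admissible pairs by Lemma~\ref{lem:normalizationconstant} and the lense-counting interpretation. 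This yields $\delta_{k,l}$.
\item The remaining terms define
\[
  \mathcal{E}(k,l) = -\frac{1}{n_k n_l}\Big(\sum_{p} \chi_{p}\, a^*_{p-l}a_{p-k} + \sum_{h} \chi_{h}\, a^*_{h+l}a_{h+k}\Big) \;,
\]
where $\chi_p$ and $\chi_h$ are indicator functions restricting the momenta to the correct sides of the Fermi surface.
\end{itemize}

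To bound \eqref{eq:ccrerror}, I recognize each sum as $\di\Gamma(A)$ with $A$ a partial shift operator on $\ell^2(\Zbb^3)$ restricted to hole or particle modes. Such an $A$ is a partial isometry, so $\norm{A}\le 1$. The two pieces act on orthogonal subspaces (holes versus particles), so their sum is still $\di\Gamma$ of an operator of norm at most $1$. The standard bound $\norm{\di\Gamma(A)\xi}\le\norm{A}\norm{\Ncal\xi}$ then gives the estimate with constant $1/(n_kn_l)$. This bound follows since $\di\Gamma(A)$ commutes with $\Ncal$ and restricts on the $n$-particle sector to a sum of $n$ copies of $A$; presumably it is part of Lemma~\ref{lem:classics}.

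Finally, $\mathcal{E}(k,l)^*=\mathcal{E}(l,k)$ follows from $[b_k,b^*_l]^* = [b_l,b^*_k]$ together with the realness of $\delta_{k,l}$.

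I expect the main obstacle to be only bookkeeping: getting the signs right in the commutator and checking that the hole and particle pieces combine into a single operator of norm at most $1$. If they did not combine, I would have to bound the pieces separately, which gives a constant $2$ instead of $1$.
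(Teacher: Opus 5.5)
Your proposal is correct and is the same direct CAR computation the paper has in mind; the paper itself only says ``straightforward computations using the fermionic CAR''. Your commutator formula $[a_h a_p, a^*_{p'}a^*_{h'}] = \delta_{p,p'}\delta_{h,h'} - \delta_{p,p'}a^*_{h'}a_h - \delta_{h,h'}a^*_{p'}a_p$ has the right signs, the constant term is $\delta_{k,l}$ because $n_k^2=\norm{\tilde b^*_k\Omega}^2$ counts the admissible pairs, and your fallback is not needed: the two error pieces are $\di\Gamma$ of partial isometries acting on $\ell^2(B_\textnormal{F})$ and $\ell^2(B_\textnormal{F}^c)$ separately, so their sum has norm at most $1$ and Lemma~\ref{lem:classics} gives exactly $1/(n_kn_l)$ (for $k,l\neq 0$, since $n_0=0$ and $b_0$ is undefined).
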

\begin{proof}
Straight-forward computations using the fermionic CAR, c.\,f.,~\cite{BNP+20}.
\end{proof}

We need the following estimates; a proof can be found, e.\,g., in \cite{BPS}.)
\begin{lem}\label{lem:classics}
For every bounded operator $O$, we have
\[ \| d\Gamma (O) \psi \| \leq \| O \| \, \| \cN \psi \| \]
for every $\psi \in \cF$. If $O$ is a Hilbert-Schmidt operator, we also have the bounds
\begin{equation} \label{eq:HS-bds}
\begin{split}
  \left\| \int dx dx' \, O(x;x') a_x a_{x'} \psi \right\| &\leq \| O \|_{\text{HS}} \, \| \cN^{1/2} \psi \|, \\
  \left\| \int dx dx' \, O(x;x') a^*_x a^*_{x'} \psi \right\| & \leq 2 \| O \|_{\text{HS}} \, \| (\cN+1)^{1/2} \psi \| \;.
\end{split}
\end{equation}
\end{lem}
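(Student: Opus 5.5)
The statement is Lemma~\ref{lem:classics}, which collects three standard Fock space bounds. The plan is to prove each by writing the operator in a basis and combining Cauchy--Schwarz with the CAR. We may assume $O$ has finite rank or work with a dense set of finite-particle vectors and then extend by density.

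\textbf{First bound.} We work sector by sector, since $d\Gamma(O)$ commutes with $\cN$. On the $n$-particle sector, $d\Gamma(O)=\sum_{j=1}^n O_j$, where $O_j$ acts on the $j$-th variable. Hence $\norm{d\Gamma(O)\psi^{(n)}}\le n\norm{O}\,\norm{\psi^{(n)}}$. Squaring and summing over $n$, using orthogonality of the sectors, gives $\norm{d\Gamma(O)\psi}^2\le \norm{O}^2\sum_n n^2\norm{\psi^{(n)}}^2=\norm{O}^2\norm{\cN\psi}^2$.

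\textbf{Annihilation bound.} Choose an orthonormal basis $(e_i)$ and write $\int O(x;x')a_xa_{x'}=\sum_{i,j}O_{ij}\,a(\bar e_i)a(\bar e_j)$, up to conjugation conventions that do not affect norms. We then estimate in two stages. First, $\norm{\sum_{i,j}O_{ij}a_ia_j\psi}\le\sum_i\norm{a_i\phi_i}$ with $\phi_i:=\sum_j O_{ij}a_j\psi$. A cleaner route is to write the operator as $\sum_i a_i\,a(g_i)$ with $g_i:=\sum_j \overline{O_{ij}}e_j$, and apply Cauchy--Schwarz directly in the inner product with an arbitrary $\chi$:
\[
\Big|\sum_i\langle a_i^*\chi,a(g_i)\psi\rangle\Big|\le\Big(\sum_i\norm{a_i^*\chi}^2\Big)^{1/2}\Big(\sum_i\norm{a(g_i)\psi}^2\Big)^{1/2}.
\]
This fails as stated, because $\sum_i\norm{a_i^*\chi}^2$ diverges. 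So instead we use $\norm{a(g)}\le\norm{g}$, which holds for fermions, and put the sum on the other side:
\[
\Big\|\sum_i a(g_i)a_i\psi\Big\|\le\sum_i\norm{g_i}\,\norm{a_i\psi}\le\Big(\sum_i\norm{g_i}^2\Big)^{1/2}\Big(\sum_i\norm{a_i\psi}^2\Big)^{1/2}.
\]
The two factors are $\norm{O}_{\rm HS}$ and $\norm{\cN^{1/2}\psi}$. The reordering $a_ia(g_i)\to -a(g_i)a_i$ is harmless because the annihilators anticommute exactly.

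\textbf{Creation bound.} This is the one real obstacle, since the creation operators $a^*(\bar g_i)$ cannot be moved past $a_i^*$ in the same way. Write $A:=\sum_i a^*(\bar g_i)a^*_i$ (with the appropriate conjugations). Then $\norm{A\psi}^2=\langle\psi,A^*A\psi\rangle$, and we normal-order $A^*A=AA^*+[A^*,A]$. The commutator is computed from the CAR: it equals a constant $\tr(\cdot)$ of size at most $\norm{O}_{\rm HS}^2$, minus $d\Gamma$-type terms that are quadratic in $O$. The latter are bounded by Hölder as $2\norm{O^*O}\,\cN\le2\norm{O}_{\rm HS}^2\cN$. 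The term $\langle\psi,AA^*\psi\rangle=\norm{A^*\psi}^2$ is bounded by the annihilation case as $\norm{O}_{\rm HS}^2\norm{\cN^{1/2}\psi}^2$. Summing these gives at most $4\norm{O}_{\rm HS}^2\norm{(\cN+1)^{1/2}\psi}^2$, which is the claimed constant $2$ after taking the square root. The hardest step is the careful CAR bookkeeping in $[A^*,A]$, in particular tracking which contractions yield $\tr O^*O$ and which yield $d\Gamma(OO^*)$ or $d\Gamma(\bar O^*\bar O)$, with signs. Each of these is bounded by $\norm{O}_{\rm HS}^2$ times $1$ or $\cN$.
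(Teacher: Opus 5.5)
Your first bound (on the $n$-particle sector $\norm{\di\Gamma(O)}\le n\norm{O}$, then orthogonality of sectors) is correct. Your final annihilation argument (anticommute to $-\sum_i a(g_i)a_i$, use $\norm{a(g)}\le\norm{g}$, then Cauchy--Schwarz) is also correct. These are the standard proofs; the paper proves none of this itself and only refers to the literature. The gap is in the creation bound, in exactly the step you flag as hardest: the bookkeeping you assert is wrong. Let $A=\int \di x\,\di x'\,O(x;x')a^*_xa^*_{x'}$, and let $\overline{O}$, $O^T$ be the operators with kernels $\overline{O(x;x')}$ and $O(x';x)$. The CAR then give
\[
[A^*,A]=\norm{O}_{\text{HS}}^2-\tr(O\overline{O})-\di\Gamma(OO^*)-\di\Gamma(\overline{O^*O})+\di\Gamma(O\overline{O})+\di\Gamma(O^TO^*)\;.
\]
The constant term is not bounded by $\norm{O}_{\text{HS}}^2$. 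It equals $\frac12\norm{O-O^T}_{\text{HS}}^2\in[0,2\norm{O}_{\text{HS}}^2]$, and it is $2\norm{O}_{\text{HS}}^2$ for antisymmetric kernels. You can check this at $\psi=\Omega$, where every other term vanishes and indeed $\norm{A\Omega}^2=2\norm{O}_{\text{HS}}^2$. Moreover, there are four singly contracted terms, not two. Only the two direct ones, $-\di\Gamma(OO^*)-\di\Gamma(\overline{O^*O})\le 0$, have a sign; the two exchange terms have none. Bounding all four in absolute value gives only $\norm{A\psi}^2\le\norm{O}_{\text{HS}}^2\big(5\langle\psi,\Ncal\psi\rangle+2\norm{\psi}^2\big)$, which does not yield the constant $2$. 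Your figure $2\norm{O}^2\Ncal$ for the singly contracted part is correct only after dropping the two non-positive terms, a step your sketch does not take.

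With that sign observation your strategy does close. Drop the non-positive terms and bound each exchange term by $\norm{O}^2\Ncal\le\norm{O}_{\text{HS}}^2\Ncal$:
\[
\norm{A\psi}^2\le\norm{O}_{\text{HS}}^2\langle\psi,\Ncal\psi\rangle+2\norm{O}_{\text{HS}}^2\norm{\psi}^2+2\norm{O}_{\text{HS}}^2\langle\psi,\Ncal\psi\rangle\le 4\norm{O}_{\text{HS}}^2\norm{(\Ncal+1)^{1/2}\psi}^2\;.
\]
A cleaner route avoids the CAR bookkeeping entirely. $A$ maps the $n$-particle sector $\cF^{(n)}$ into $\cF^{(n+2)}$, and $A^*$ is an annihilation-pair operator whose kernel has the same Hilbert--Schmidt norm. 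For $\psi\in\cF^{(n)}$, duality and your annihilation bound give $\norm{A\psi}=\sup\{\lvert\langle A^*\phi,\psi\rangle\rvert:\phi\in\cF^{(n+2)},\norm{\phi}=1\}\le\sqrt{n+2}\,\norm{O}_{\text{HS}}\norm{\psi}$. Summing over the orthogonal sectors gives $\norm{A\psi}\le\norm{O}_{\text{HS}}\norm{(\Ncal+2)^{1/2}\psi}\le\sqrt2\,\norm{O}_{\text{HS}}\norm{(\Ncal+1)^{1/2}\psi}$, and the constant $\sqrt2$ is sharp by the vacuum example above.
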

These bounds directly imply estimates for the almost-bosonic operators $b_k$ and $b^*_k$.
\begin{lem}
For every $k \in \Zbb^3$ we have, for all $\psi \in \fock$, the estimates
\begin{equation}\label{eq:bstark}
  \norm{b_k \psi} \leq \norm{\Ncal^{1/2}\psi} \;,
 \qquad
\norm{b^*_k \psi} \leq 2\norm{(\Ncal+1)^{1/2}\psi} \;.
\end{equation}
\end{lem}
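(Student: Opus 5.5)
We write $b_k$ as an integral against a Hilbert--Schmidt kernel and then apply the two bounds \eqref{eq:HS-bds} of Lemma~\ref{lem:classics}. By \eqref{eq:bosonicpair} and \eqref{eq:almostbosonic} we have
\[
  b^*_k = \frac{1}{n_k}\int_{\Tbb^3} \di x\, a^*(u_x) e^{ikx} a^*(\cc{v}_x) = \frac{1}{n_k}\int \di y\,\di z\, O_k(y;z)\, a^*_y a^*_z ,
\]
where the kernel is obtained by expanding $a^*(u_x) = \int \di y\, u(y,x) a^*_y$ and the analogous expression for $a^*(\cc{v}_x)$. Up to transposition and conjugation conventions, this gives $O_k(y;z) = \int \di x\, u(y,x) e^{ikx} \cc{v}(z,x)$. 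In operator language, $O_k = u\, e^{ik\hat{x}}\, \cc{v}^{\,T}$, or an equivalent expression obtained by replacing $\cc{v}^{\,T}$ with $v^*$ or $\cc{v}$. All of these variants differ only by complex conjugation or transposition, and neither operation changes the Hilbert--Schmidt norm.

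The key step is to identify the Hilbert--Schmidt norm of this kernel. Lemma~\ref{lem:normalizationconstant} states exactly that $\norm{u e^{ik\hat{x}} \cc{v}}_{\textnormal{HS}} = n_k$. We can also see this directly, since
\[
  \norm{O_k}_{\textnormal{HS}}^2 = \tr\, e^{ik\hat{x}}(1-\omega)e^{-ik\hat{x}}\omega = n_k^2
\]
by the CAR computation in that proof. Hence the normalized kernel $O_k/n_k$ has Hilbert--Schmidt norm $1$.

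With this, the second bound of \eqref{eq:HS-bds} gives directly
\[
  \norm{b^*_k\psi} \leq 2 \cdot 1 \cdot \norm{(\Ncal+1)^{1/2}\psi}.
\]
For $b_k$, we take the adjoint of the representation: $b_k = \frac{1}{n_k}\int \di y\,\di z\, \cc{O_k(y;z)}\, a_z a_y$. This is of the form $\int O'(y;z)\, a_y a_z$ with $\norm{O'}_{\textnormal{HS}} = \norm{O_k}_{\textnormal{HS}}$. The first bound of \eqref{eq:HS-bds} then yields $\norm{b_k\psi}\leq \norm{\Ncal^{1/2}\psi}$.

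The only point requiring care is bookkeeping. We need the kernel of $a^*(\cc{v}_x)$ written as $\int \di z\, \cc{v}(z,x) a^*_z$, with the convention $a^*(f) = \int f(z) a^*_z\, \di z$ from \eqref{eq:phtrafo}, and we need to check that the resulting operator is unitarily related, via conjugation or transposition, to $u e^{ik\hat{x}}\cc{v}$. Since the Hilbert--Schmidt norm is invariant under both operations, this is routine. The case $k=0$ is excluded, since there $\tilde b^*_0 = 0$ and $n_0 = 0$; the estimate is therefore meant for $k\neq 0$.
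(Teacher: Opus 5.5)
Your proof is correct and follows the paper's argument. You write $b^*_k$ as a pair-creation integral with kernel $n_k^{-1}\,u e^{ik\hat{x}}\cc{v}$, whose Hilbert--Schmidt norm is $1$ by Lemma~\ref{lem:normalizationconstant}, and then apply the two bounds of Lemma~\ref{lem:classics}, using the adjoint kernel for $b_k$. Your remark that $k=0$ must be excluded because $n_0=0$ is a fair clarification of the statement.
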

\begin{proof}
These bounds follow directly by using Lemma \ref{lem:classics} with the definition of the pair operators \eqref{eq:bosonicpair} and recalling the Hilbert-Schmidt and trace norms from Lemma \ref{lem:normalizationconstant}.
\end{proof}

\section{Almost-Bosonic Quasifree Trial State}
The quadratic expression for the interaction given in \eqref{eq:quadrhamilt} suggests that the we should think of the correlation energy as originating from a quadratic almost-bosonic Hamiltonian. Thus we take our trial state as a state of the form $\exp(b^* b^* - bb) \Omega$, imitating the form of a bosonic quasifree state but using the almost-bosonic pair operators instead of exactly bosonic operators.  We are going to calculate the expectation value of the Hamiltonian in a state of this from and then optimize over the function $\Xi$ which appears in the exponent and in the exactly bosonic case serves to parametrize the Bogoliubov transformations.

More precisely, we define the following unitary operator
\begin{equation}
\label{eq:bogtransform}
  T(\lambda) := \exp(\lambda B), \quad B := \frac{1}{2}\sum_{k\in \Zbbn} \Xi(k) b^*_k b^*_{-k} - \hc
\end{equation}
which mimics a bosonic Bogoliubov transformation, but replacing bosonic creation operators by our quasi-bosonic pair operators. For simplicity we write $T := T(1)$. Note that $T(0) = \id$. Our trial state $T\Omega$ has equal numbers of particles and holes,
\begin{equation}\label{eq:phnumber}
   \bigg(\sum_{p \in B_\textnormal{F}^c} a^*_p a_p - \sum_{h \in B_\textnormal{F}} a^*_h a_h \bigg) T \Omega = 0
\end{equation}
because \eqref{eq:bogtransform} contains only pair operators, which, by \eqref{eq:bosonicpair}, always create or annihilate a particle and a hole together (so the operator in parenthesis in \eqref{eq:phnumber} commutes with $T$). Applying the particle-hole transformation explicitly to \eqref{eq:phnumber}, this shows that $R_\omega T \Omega$ is an eigenvector of the total particle number operator $\Ncal$ with eigenvalue $N$.

\begin{lem}[Almost-Bosonic Bogoliubov Transformation]\label{lem:almostbogtrafo}
Let $\Xi(k) = \Xi(-k)$. For $l \neq 0$, the conjugation of $b_l$ and $b^*_l$ with $T$ is given by
 \begin{equation}\label{eq:bogtrafo}
 \begin{split}
  T^* b_l T & = \cosh(\Xi)(l) b_l + \sinh(\Xi)(l) b^*_{-l} + \mathcal{E}_k(\Xi),\\
  T^* b^*_l T & = \cc{\cosh(\Xi)(l)} b^*_l + \cc{\sinh(\Xi)(l)} b_{-l} + \mathcal{E}^*_k(\Xi) \;,
 \end{split}\end{equation}
 where we set
\[
  \cosh(\Xi)(l) := 1 - \frac{1}{2!} \Xi(l) \cc{\Xi}(-l) + \cdots, \qquad \sinh(\Xi)(l) := \Xi(l) - \frac{1}{3!} \Xi(l)\cc{\Xi(-l)} \Xi(l) + \cdots
\]
The operators $b_0$ and $b^*_0$ are invariant.
The operators $\mathcal{E}_k(\Xi)$ can be estimated by
\begin{equation}\label{eq:bogtrafoerror}
  \norm{\mathcal{E}_l(\Xi)\psi} \leq 4 \sup_{t \in [0,1]} \norm{(\Ncal+2)^{3/2}T(t)\psi} e^{\lvert \Xi(l)\rvert} \sum_{k \in \Zbbn}  \frac{\lvert \Xi(k)\rvert}{n_k n_l}\;.
\end{equation}
The same bound holds for the adjoint $\mathcal{E}_l(\Xi)^*$.
\end{lem}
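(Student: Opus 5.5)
The plan is the standard Duhamel argument from the bosonization papers \cite{BNP+20,BNP+21}. Fix $l\neq 0$ and set
\[
  f(t) := T(t)^* b_l T(t)\;, \qquad t\in[0,1]\;.
\]
Since $T(t)=\exp(tB)$ with $B^*=-B$, we have $\frac{\di}{\di t} f(t) = T(t)^*[b_l,B]T(t)$. Using the almost-CCR of Lemma~\ref{lem:approxccr}, $[b_l,b_k]=0$ and $\Xi(k)=\Xi(-k)$, we compute
\[
  [b_l,B] = \frac{1}{2}\sum_{k} \Xi(k)\big([b_l,b^*_k]b^*_{-k} + b^*_k[b_l,b^*_{-k}]\big) = \Xi(l)\, b^*_{-l} + \mathcal{R}_l\;,
\]
where
\[
  \mathcal{R}_l := \frac{1}{2}\sum_k \Xi(k)\big(\mathcal{E}(l,k) b^*_{-k} + b^*_k \mathcal{E}(l,-k)\big)\;.
\]
The same computation for $b^*_{-l}$ gives $\frac{\di}{\di t}\,T(t)^*b^*_{-l}T(t) = T(t)^*\big(\cc{\Xi(-l)}\, b_l + \mathcal{R}^*_{-l}\big)T(t)$. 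So the pair $\big(T(t)^*b_lT(t),\,T(t)^*b^*_{-l}T(t)\big)$ solves a closed $2\times 2$ linear ODE system with generator $\begin{pmatrix}0&\Xi(l)\\ \cc{\Xi(-l)}&0\end{pmatrix}$, plus inhomogeneities $T(t)^*\mathcal{R}T(t)$. For $b_0$ we simply note $[b_0,B]=0$, because $\tilde b_0=0$.

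Next we solve the system by variation of constants. The exact (bosonic) solution is
\[
  \cosh(t\Xi)(l)\, b_l + \sinh(t\Xi)(l)\, b^*_{-l}\;,
\]
with the series as defined in the statement. This gives
\[
  \mathcal{E}_l(\Xi) = \int_0^1 \di s\, \Big[\cosh((1-s)\Xi)(l)\, T(s)^*\mathcal{R}_l T(s) + \sinh((1-s)\Xi)(l)\, T(s)^*\mathcal{R}^*_{-l}T(s)\Big]\;.
\]
Since $\lvert\cosh\rvert$ and $\lvert\sinh\rvert$ are bounded by $e^{\lvert\Xi(l)\rvert}$ (using $\lvert\Xi(-l)\rvert=\lvert\Xi(l)\rvert$), the claim reduces to bounding $\norm{\mathcal{R}_{\pm l}T(s)\psi}$ and the corresponding adjoint terms.

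The main step is the estimate on $\mathcal{R}_l$. For the term $\mathcal{E}(l,k)b^*_{-k}\phi$ we first apply \eqref{eq:ccrerror}, giving $\frac{1}{n_ln_k}\norm{\Ncal b^*_{-k}\phi}$. Then we commute: $b^*_{-k}$ raises $\Ncal$ by $2$, so $\Ncal b^*_{-k} = b^*_{-k}(\Ncal+2)$. By \eqref{eq:bstark} this is at most $2\norm{(\Ncal+1)^{1/2}(\Ncal+2)\phi}\le 2\norm{(\Ncal+2)^{3/2}\phi}$. For $b^*_k\mathcal{E}(l,-k)\phi$ we use \eqref{eq:bstark} first and then \eqref{eq:ccrerror}. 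This needs $(\Ncal+1)^{1/2}$ to commute past $\mathcal{E}(l,-k)$, which holds because $\mathcal{E}$ is number-conserving (it is a $d\Gamma$-type operator). Using $n_{-k}=n_k$, the $\frac12$ prefactor and the two terms together give the factor $4$ in \eqref{eq:bogtrafoerror} after taking $\sup_s$.

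For the adjoint terms $\mathcal{R}^*$ the ordering is reversed, $b_{-k}\mathcal{E}(k,l)$, and the same bounds apply, using \eqref{eq:bstark} for $b$. The adjoint statement $\mathcal{E}^*_l$ follows by taking adjoints of the Duhamel formula and repeating the argument.

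I expect the main obstacle to be bookkeeping rather than anything conceptual. One has to check that the commutator error $\mathcal{E}(k,l)$ really commutes with $\Ncal$, or at least preserves its sectors, and track the powers of $\Ncal+2$ so that exactly $(\Ncal+2)^{3/2}$ emerges. If the exact ODE solution is inconvenient to handle, a fallback is to expand the multiple commutators directly and dominate the series by $e^{\lvert\Xi(l)\rvert}$.
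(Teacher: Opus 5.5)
Your proposal is correct and is essentially the paper's argument. The paper also differentiates $T(\lambda)^*b_lT(\lambda)$ and $T(\lambda)^*b^*_{-l}T(\lambda)$, but instead of solving the resulting $2\times 2$ system by variation of constants it substitutes the two Duhamel identities into each other (iterating only the leading term) and sums the nested-integral series, which is the Neumann expansion of your closed formula; the remainder bounds via \eqref{eq:ccrerror}, \eqref{eq:bstark} and number conservation of $\mathcal{E}(k,l)$ are the same as yours. Your closed form is slightly cleaner, since there is no iteration tail to control, and your generator $\begin{pmatrix} 0 & \Xi(l) \\ \overline{\Xi(-l)} & 0 \end{pmatrix}$ correctly gives the all-plus hyperbolic series consistent with \eqref{eq:absvalue}, not the alternating-sign series literally displayed in the statement (these come from a sign slip in the paper's computation of $[b^*_l,B]$), so drop the phrase ``with the series as defined in the statement''.
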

Since $\Xi(k) = \Xi(-k)$, we have
\begin{equation}\label{eq:absvalue}
  \cosh(\Xi)(l) = \cosh(\lvert \Xi(l)\rvert) \quad \text{and} \quad \sinh(\Xi)(l) = \sinh(\lvert \Xi(l)\rvert) \frac{\Xi(l)}{\lvert \Xi(l)\rvert}\;.
\end{equation}
\begin{proof}
We have
\[
\begin{split}
  T^* b_l T - b_l & = \int_0^1 \di \lambda\, \frac{\di}{\di\lambda} \left( e^{-\lambda B} b_l e^{\lambda B} \right) = \int_0^1 \di\lambda\, T(\lambda)^* [b_l,B] T(\lambda)\\
  & = \int_0^1 \di \lambda\, T(\lambda)^* \frac{1}{2} \sum_{k \in \Zbbn} \Xi(k) \left( [b_l,b^*_k] b^*_{-k} + b^*_k [b_l,b^*_{-k}] \right) T(\lambda) \\
  & = \int_0^1 \di \lambda\, T(\lambda)^* \left( \Xi(l) b^*_{-l} + \frac{1}{2}\sum_{k \in \Zbbn} \Xi(k) \left( \mathcal{E}(k,l)b^*_{-k} + b^*_{k} \mathcal{E}(-k,l) \right)\right) T(\lambda)
\end{split}
\]
implying
\begin{equation}\label{eq:btransform}
\begin{split}
  T^* b_l T & = b_l + \Xi(l) \int_0^1 \di \lambda\, T(\lambda)^* b^*_{-l} T(\lambda) \\
  & \quad + \frac{1}{2}\sum_{k \in \Zbbn} \Xi(k) \int_0^1 \di\lambda\, T(\lambda)^* \left( \mathcal{E}(k,l) b^*_{-k} + b^*_k \mathcal{E}(-k,l) \right) T(\lambda)
\end{split}
\end{equation}
and in the same way
\[
\begin{split}
  T^* b^*_l T - b^*_l & = \int_0^1 \di \lambda\, T(\lambda)^* \frac{1}{2} \sum_{k \in \Zbbn} \cc{\Xi(k)} \left( [b^*_l,b_{-k}] b_{k} + b_{-k} [b^*_l,b_k] \right) T(\lambda) \\
  & = \int_0^1 \di \lambda\, T(\lambda)^* \left( -\cc{\Xi(l)} b_{-l} - \frac{1}{2}\sum_{k \in \Zbbn} \cc{\Xi(k)} \left( \mathcal{E}(-k,l)b_{k} + b_{-k} \mathcal{E}(k,l) \right)\right) T(\lambda)
\end{split}
\]
implying
\begin{equation}\label{eq:bstartransform}
\begin{split}
  T^* b^*_l T & = b^*_l - \cc{\Xi(l)} \int_0^1 \di \lambda \, T(\lambda)^* b_{-l} T(\lambda) \\
  & \quad - \frac{1}{2}\sum_{k \in \Zbbn} \cc{\Xi(k)} \int_0^1 \di\lambda\, T(\lambda)^* \left( \mathcal{E}(-k,l) b_k + b_{-k} \mathcal{E}(k,l) \right) T(\lambda) \;.
\end{split}
\end{equation}
Plugging \eqref{eq:btransform} and \eqref{eq:bstartransform} into each other iteratively (we iterate only in the leading term, the error terms containing $\mathcal{E}$ we do not iterate), we arrive at
\[
\begin{split}
  T^* b_l T & = \cosh(\Xi)(l) b_l + \sinh(\Xi)(l) b^*_{-l} + \mathcal{E}_\text{sinh}(\Xi)(l) + \mathcal{E}_\text{cosh}(\Xi)(l)\\
  & =: \cosh(\Xi)(l) b_l + \sinh(\Xi)(l) b^*_{-l} + \mathcal{E}_l(\Xi)\;,
\end{split}
\]
where the error terms are
\[
\begin{split}
  \mathcal{E}_\text{cosh}(\Xi)(l) & = \sum_{k \in \Zbb^3\setminus\{0\}} \frac{\cc{\Xi(k)}}{2} \Bigg(
  - \Xi(l) \int_0^1 \di \lambda\int_0^\lambda \di\lambda' T(\lambda')^* A_\text{cosh}(k,l) T(\lambda') \\ & \quad + \Xi(l)\cc{\Xi(-l)}\Xi(l) \int_0^1 \!\di\lambda \int_0^\lambda \!\di \lambda' \int_0^{\lambda'}\! \di \lambda'' \int_0^{\lambda''}\! \di \lambda''' T(\lambda''')^* A_\text{cosh}(k,l) T(\lambda''')
  - \cdots
  \Bigg)
\end{split}
\]
and
\[
\begin{split}
  \mathcal{E}_\text{sinh}(\Xi)(l) & = \sum_{k \in \Zbbn} \frac{\Xi(k)}{2} \Bigg(
  \int_0^1 \di \lambda T(\lambda)^* A_\text{sinh}(k,l) T(\lambda) \\
  & \quad - \Xi(l) \cc{\Xi(-l)} \int_0^1 \di \lambda \int_0^\lambda \di\lambda' \int_0^{\lambda'} \di \lambda'' T(\lambda'')^* A_\text{sinh}(k,l) T(\lambda'')
  + \cdots  \Bigg)
\end{split}
\]
with
\begin{equation}\label{eq:Asinh}
  A_\text{cosh}(k,l) = \mathcal{E}(-k,-l) b_k + b_{-k} \mathcal{E}(k,-l), \qquad A_\text{sinh}(k,l) = \mathcal{E}(k,l) b^*_{-k} + b^*_{k} \mathcal{E}(-k,l) \;.
\end{equation}
(The tail term of the expansion vanishes as the expansion order tends to infinity.)

We proceed to estimate $\mathcal{E}_l(\Xi)$. First of all, for all $\psi \in \fock$ we have
\[
  \norm{\mathcal{E}_l(\Xi)\psi} \leq \norm{\mathcal{E}_\text{cosh}(\Xi)(l) \psi} + \norm{\mathcal{E}_\text{sinh}(\Xi)(l) \psi} \;.
\]
Using the triangle inequality
\[
\begin{split}
  \norm{\mathcal{E}_\text{cosh}(\Xi)(l)\psi} & \leq \frac{1}{2}\sum_{k \in \Zbb^3\setminus\{0\}} \lvert{\Xi(k)}\rvert \Bigg(
  \lvert \Xi(l)\rvert \int_0^1 \di \lambda\int_0^\lambda \di\lambda' \norm{A_\text{cosh}(k,l) T(\lambda') \psi} \\ & \qquad + \lvert \Xi(l)\rvert^3 \int_0^1\di\lambda \int_0^\lambda \di \lambda' \int_0^{\lambda'} \di \lambda'' \int_0^{\lambda''} \di \lambda''' \norm{A_\text{cosh}(k,l) T(\lambda''')\psi}
  + \cdots \Bigg)
\end{split}
\]
We now give an estimate independent of $\lambda$ for the norms on the right hand side of the previous equation, using \eqref{eq:bstark} and the fact that $\Ncal$ commutes with $\mathcal{E}(k,-l)$ since $\mathcal{E}(k,-l)$ is a sum of two operators $\di\Gamma(uAu)$ and $\di\Gamma(\cc{v}Av)$ ($\di\Gamma$-operators conserve the number of particles):
\begin{align*}
  \norm{A_\text{cosh}(k,l) T(\lambda)\psi}&= \norm{   \left(\mathcal{E}(-k,-l) b_k + b_{-k} \mathcal{E}(k,-l)\right)T(\lambda)\psi}\\
  & \leq \frac{1}{n_k n_l} \norm{\Ncal b_k T(\lambda)\psi} + \norm{\Ncal^{1/2} \mathcal{E}(k,-l) T(\lambda)\psi}\\
  & \leq \frac{1}{n_k n_l} \norm{(\Ncal+2)b_kT(\lambda)\psi} + \norm{ \mathcal{E}(k,-l)\Ncal^{1/2} T(\lambda)\psi}\\
  & = \frac{1}{n_k n_l} \norm{b_k\Ncal T(\lambda)\psi} + \frac{1}{n_k n_l}\norm{\Ncal \Ncal^{1/2} T(\lambda)\psi}\\
  & \leq \frac{1}{n_k n_l} \norm{\Ncal^{1/2}\Ncal T(\lambda)\psi} + \frac{1}{n_k n_l} \norm{\Ncal^{1/2}\Ncal T(\lambda)\psi}\\
  & \leq \frac{2}{n_k n_l} \langle T(\lambda) \psi, \Ncal^3 T(\lambda) \psi \rangle^{1/2}\;.
\end{align*}
Thus
\[
\begin{split}
  \norm{\mathcal{E}_\text{cosh}(\Xi)(l)\psi} & \leq \sum_{k \in \Zbb^3\setminus\{0\}} \lvert{\Xi(k)}\rvert \sup_{t\in [0,1]} \frac{1}{n_k n_l} \langle T(t) \psi, \Ncal^3 T(t) \psi \rangle^{1/2}\\
  & \quad \times \Bigg(
  \lvert \Xi(l)\rvert \int_0^1 \di \lambda\int_0^\lambda \di\lambda' + \lvert \Xi(l)\rvert^3 \int_0^1\di\lambda \int_0^\lambda \di \lambda' \int_0^{\lambda'} \di \lambda'' \int_0^{\lambda''} \di \lambda'''
  + \cdots
  \Bigg) \\
  & \leq \sum_{k \in \Zbb^3\setminus\{0\}} \lvert{\Xi(k)}\rvert\frac{1}{n_k n_l} \sup_{t\in [0,1]}  \langle T(t) \psi, \Ncal^3 T(t) \psi \rangle^{1/2}\Bigg(
  \lvert \Xi(l)\rvert \frac{1}{2!} + \lvert \Xi(l)\rvert^3 \frac{1}{4!}+ \cdots
  \Bigg)\;.
\end{split}
\]
The series in the big parenthesis is summable.

In the same way as the estimates above we find
\[
  \norm{A_\text{sinh}(k,l) T(\lambda)\psi} \leq \frac{4}{n_k n_l} \langle T(\lambda) \psi, (\Ncal+2)^3 T(\lambda) \psi \rangle
\]
and from that
\[
  \norm{\mathcal{E}_\text{sinh}(\Xi)(l)\psi} \leq 2 \sum_{k \in \Zbbn} \lvert \Xi(k)\rvert \frac{1}{n_k n_l} \sup_{t\in [0,1]} \langle T(t) \psi, (\Ncal+2)^3 T(t) \psi\rangle^{1/2} \left(1 + \frac{1}{3!} \lvert \Xi(l) \rvert^2 + \ldots \right)
\]
Combining, we find \eqref{eq:bogtrafoerror}.

To obtain the same bound for the adjoint, observe that the error terms will have the same form, the essential difference being that instead of \eqref{eq:Asinh} its adjoint appears. This however does not change the estimates we used above: $\mathcal{E}(-k,-l)^*$ can be bounded in the same way as $\mathcal{E}(-k,-l)$ (recall \eqref{eq:ccrerror}), and $b^*_k$ can be bounded in the same way as $b_k$ by \eqref{eq:bstark}, except for a factor $2$ and $(\Ncal + 1)^{1/2}$ appearing in the place of $\Ncal^{1/2}$.
\end{proof}

\section{Calculating the Interaction Energy}
The expectation value of the interaction energy $Q_N^{(0)}$ is easy to calculate now.
\begin{prp}[Interaction Energy]\label{lem:interaction}
We have
\[
\begin{split}
  \langle T \Omega, Q_N^{(0)} T \Omega \rangle & = \frac{1}{N} \Re \!\!\!\sum_{k \in \Zbbn}\!\!\! \hat{V}(k) n_k^2 \left( \sinh(\lvert\Xi(k)\rvert)^2 + \frac{\cc{\Xi(k)}}{\lvert \Xi(k)\rvert} \sinh(\lvert \Xi(k)\rvert) \cosh(\lvert \Xi(k)\rvert)\right) + \varepsilon_1
\end{split}
\]
where the error $\varepsilon_1 \in \Cbb$ can be estimated by
\[
\begin{split}
  \lvert \varepsilon_1 \rvert & \leq \frac{2}{N} \sum_{k \in \Zbbn} \hat{V}(k) n_k^2 \Bigg[ 4\sup_{t \in [0,1]} \langle T(t) \Omega, (\Ncal+2)^3 T(t) \Omega \rangle e^{2\lvert \Xi(k)\rvert} \left(\sum_{m \in \Zbb^3 \setminus\{0\}} \frac{\lvert \Xi(m)\rvert}{n_m n_k}\right)^2 \\
  &\quad +  \big(4\sinh(\lvert \Xi(k)\rvert)+2 \cosh(\lvert \Xi(k)\rvert) \big)  \sup_{t \in [0,1]} \norm{(\Ncal+2)^{3/2}T(t)\Omega} e^{\lvert \Xi(k)\rvert} \sum_{m \in \Zbbn}  \frac{\lvert \Xi(m)\rvert}{n_m n_k} \Bigg] \;.
\end{split}
\]
\end{prp}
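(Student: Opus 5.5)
The plan is to start from the quasi-bosonic form \eqref{eq:quadrhamilt} of $Q_N^{(0)}$ and compute each of its three summands in the state $T\Omega$ by conjugating with $T$. For the diagonal term I write
\[
\langle T\Omega, b^*_k b_k T\Omega\rangle = \langle \Omega, (T^*b^*_kT)(T^*b_kT)\Omega\rangle = \norm{T^* b_k T\Omega}^2 .
\]
For the pairing terms I use $\langle T\Omega, b^*_k b^*_{-k} T\Omega\rangle = \langle (T^*b_kT)\Omega, (T^*b^*_{-k}T)\Omega\rangle$, and similarly for $b_{-k}b_k$, which is just the complex conjugate. Lemma~\ref{lem:almostbogtrafo} applies because $\Xi(k)=\Xi(-k)$ is assumed throughout. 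So I insert
\[
T^*b_lT = \cosh(\lvert\Xi(l)\rvert) b_l + \sinh(\lvert\Xi(l)\rvert)\tfrac{\Xi(l)}{\lvert\Xi(l)\rvert} b^*_{-l} + \mathcal{E}_l(\Xi),
\]
using \eqref{eq:absvalue}, together with its adjoint, and expand each bilinear expression into a main part and error parts.

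For the main part, \eqref{eq:annihilation} gives $b_l\Omega=0$, so $T^*b_kT\Omega = \sinh(\lvert\Xi(k)\rvert)\frac{\Xi(k)}{\lvert\Xi(k)\rvert} b^*_{-k}\Omega + \mathcal{E}_k(\Xi)\Omega$. By the normalization \eqref{eq:almostbosonic}, $\norm{b^*_{-k}\Omega}^2 = n_{-k}^2/n_{-k}^2 = 1$. Hence $\langle T\Omega,b^*_kb_kT\Omega\rangle = \sinh^2(\lvert\Xi(k)\rvert)$ up to errors. This is the step where evenness of $n_k$ from Lemma~\ref{lem:normalizationconstant} is used, since $b^*_{-k}$ is normalized by $n_{-k}$.

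For the pairing term I need $\langle b^*_{-k}\Omega, b^*_{-k}\Omega\rangle$ with matching coefficients. The $\Omega$-component of $T^*b^*_{-k}T\Omega$ comes from $\cosh\, b^*_{-k}\Omega$, while $T^*b_kT\Omega$ contributes $\sinh\,\frac{\Xi}{\lvert\Xi\rvert}b^*_{-k}\Omega$. Their inner product gives $\frac{\cc{\Xi(k)}}{\lvert\Xi(k)\rvert}\sinh\cosh$, possibly up to conjugation, which the final $\Re$ absorbs once I add the h.c. term. Collecting $\frac{1}{2N}\hat V(k)n_k^2(2\sinh^2 + 2\Re(\cdots))$ yields the stated main term.

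For the error $\varepsilon_1$ I collect the cross terms and use Cauchy--Schwarz.
\begin{itemize}
\item Products of two error operators, $\norm{\mathcal{E}_k(\Xi)\Omega}^2$ and $\norm{\mathcal{E}_k(\Xi)\Omega}\,\norm{\mathcal{E}^*_{-k}(\Xi)\Omega}$, are bounded by \eqref{eq:bogtrafoerror} with $\psi=\Omega$. This gives $16\sup_t\langle T(t)\Omega,(\Ncal+2)^3T(t)\Omega\rangle e^{2\lvert\Xi(k)\rvert}(\sum_m \lvert\Xi(m)\rvert/(n_mn_k))^2$. Together with the prefactor $\frac{1}{2N}\cdot$(number of such terms), this produces the first bracket. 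The constant bookkeeping, $2\cdot 4$ versus $16/2$, I would adjust at the end.
\item Mixed terms pair a main vector ($\sinh\, b^*_{-k}\Omega$ or $\cosh\, b^*_{-k}\Omega$, each of norm at most $\sinh$ or $\cosh$ since $\norm{b^*_{-k}\Omega}=1$) with one error operator. These are bounded by $(\text{coefficient})\times\norm{\mathcal{E}\Omega}$. The $b^*b$ term yields $2\sinh$, and the two pairing terms yield $\sinh+\cosh$ each, giving roughly $(4\sinh+2\cosh)$ times the bound \eqref{eq:bogtrafoerror}.
\end{itemize}
Throughout, the adjoint bound stated at the end of Lemma~\ref{lem:almostbogtrafo} controls $\mathcal{E}^*$.

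The main obstacle is the careful tracking of which vector each error operator acts on, since $\mathcal{E}^*$ is not simply applied to $\Omega$ in every ordering. I would arrange every term as an inner product of two vectors of the form $(T^*b^\sharp T)\Omega$ so that only $\norm{\mathcal{E}_l(\Xi)\Omega}$ and $\norm{\mathcal{E}_l(\Xi)^*\Omega}$ ever appear. Then \eqref{eq:bogtrafoerror} applies directly with $\psi=\Omega$, and no commutator estimates beyond Lemma~\ref{lem:almostbogtrafo} are needed.
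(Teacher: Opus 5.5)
Your proposal is correct and is essentially the paper's own proof. You conjugate the factors of \eqref{eq:quadrhamilt} with $T$ via Lemma~\ref{lem:almostbogtrafo} and use $b_k\Omega=0$ and $\norm{b^*_{-k}\Omega}=1$ to read off $\sinh(\lvert\Xi(k)\rvert)^2$ and $\frac{\cc{\Xi(k)}}{\lvert\Xi(k)\rvert}\sinh(\lvert\Xi(k)\rvert)\cosh(\lvert\Xi(k)\rvert)$, then bound all cross terms by Cauchy--Schwarz and \eqref{eq:bogtrafoerror} at $\psi=\Omega$. The only caveat is numerical: exact bookkeeping gives $16$ and $6\sinh(\lvert\Xi(k)\rvert)$ in place of $4$ and $4\sinh(\lvert\Xi(k)\rvert)$, since squaring \eqref{eq:bogtrafoerror} produces $16$ and the factor $2$ in $2b^*_kb_k$ doubles its cross terms, and it gives $\lvert\hat V(k)\rvert$ in place of $\hat V(k)$; so this argument does not reach the stated constants, but the paper's own estimates are equally loose here and only the $N$-dependence is used in Lemma~\ref{lem:collectederrors}.
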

\begin{proof}
We apply the transformation rule \eqref{eq:bogtrafo} and then the annihilation of the vacuum \eqref{eq:annihilation} to obtain
\[
\begin{split}
  & \langle T \Omega, Q_N^{(0)} T\Omega \rangle \\
  & = \frac{1}{2N} \!\sum_{k \in \Zbb^3\setminus\{0\}}\!\! \hat{V}(k) n_k^2 \langle \Omega,\Bigg[ 2 \left( \cc{\cosh(\Xi)(k)}b^*_k + \cc{\sinh(\Xi)(k)} b_{-k} + \mathcal{E}_k^*(\Xi) \right)\\
  & \hspace{5.5cm} \times \Big( \cosh(\Xi)(k) b_k + \sinh(\Xi)(k) b^*_{-k} + \mathcal{E}_k(\Xi) \Big)\\
  & \hspace{4.2cm} + \Bigg(\left( \cc{\cosh(\Xi)(k)}b^*_k + \cc{\sinh(\Xi)(k)} b_{-k} + \mathcal{E}^*_k(\Xi)\right)\\
  & \hspace{5.2cm} \times \left(\cc{\cosh(\Xi)(-k)} b^*_{-k} + \cc{\sinh(\Xi)(-k)} b_k + \mathcal{E}_{-k}^*(\Xi)\right) + \hc \Bigg) \Bigg] \Omega\rangle \\
  & = \frac{1}{2N} \sum_{k \in \Zbb^3\setminus\{0\}} \hat{V}(k) n_k^2 \langle \Omega, \left( \cc{\sinh(\Xi)(k)} b_{-k} + \mathcal{E}_k^*(\Xi) \right) \Big( \sinh(\Xi)(k) b^*_{-k} + \mathcal{E}_k(\Xi) \Big)\\
  & \hspace{11em} + \left( \cc{\sinh(\Xi)(k)} b_{-k} + \mathcal{E}^*_k(\Xi)\right)\left(\cc{\cosh(\Xi)(-k)} b^*_{-k} + \mathcal{E}_{-k}^*(\Xi)\right) \Omega\rangle + \pluscc
\end{split}
\]
Recalling that $\norm{b^*_{-k}\Omega}^2 = 1$, we extract the term stated in the Lemma (use \eqref{eq:absvalue} to simplify). The remain terms are errors, given by
\begin{align}
  & \frac{1}{2N} \sum_{k \in \Zbbn} \hat{V}(k) n_k^2 \Bigg( \norm{\mathcal{E}_k(\Xi)\Omega}^2 + \left( \cc{\sinh(\Xi)(k)} \langle b^*_{-k} \Omega, \mathcal{E}_k(\Xi) \Omega\rangle + \pluscc \right) + \langle \mathcal{E}_k^*(\Xi) \Omega, \mathcal{E}^*_{-k}(\Xi)\Omega\rangle \nonumber \\
  & \hspace{3.5cm} + \cc{\sinh(\Xi)(k)} \langle b^*_{-k}\Omega, \mathcal{E}^*_{-k}(\Xi)\Omega\rangle + \cc{\cosh(\Xi)(-k)} \langle \mathcal{E}_k(\Xi) \Omega, b^*_{-k} \Omega \rangle \Bigg) + \pluscc \nonumber \\
  & =: \varepsilon_1 \;.  \label{eq:errorepsone}
\end{align}
\paragraph{Estimating the Error Terms.}
We now show that $\varepsilon_1$ is smaller than the leading term (of order $\hbar = N^{-1/3}$) of the correlation energy. From Lemma~\ref{lem:almostbogtrafo} we have
\begin{equation}
\begin{split}
  & \left\lvert \frac{1}{2N} \sum_{k \in \Zbbn} \hat{V}(k) n_k^2 \norm{\mathcal{E}_k(\Xi) \Omega}^2 \right\rvert\\
  & \leq \frac{2}{N} \sum_{k \in \Zbb^3\setminus\{0\}} \hat{V}(k) n_k^2 \sup_{t \in [0,1]} \langle T(t) \Omega, (\Ncal+2)^3 T(t) \Omega \rangle e^{2\lvert \Xi(k)\rvert} \left(\sum_{m \in \Zbb^3 \setminus\{0\}} \frac{\lvert \Xi(m)\rvert}{n_m n_k}\right)^2 \;.
\end{split}
\end{equation}
In the same way
\begin{equation}
\begin{split}
  & \left\lvert \frac{1}{2N} \sum_{k \in \Zbbn} \hat{V}(k) n_k^2 \langle \mathcal{E}_k^*(\Xi) \Omega, \mathcal{E}^*_{-k}(\Xi)\Omega\rangle \right\rvert \\
  & \leq \frac{4}{N} \sum_{k \in \Zbb^3\setminus\{0\}} \hat{V}(k) n_k^2 \sup_{t \in [0,1]} \langle T(t) \Omega, (\Ncal+2)^3 T(t) \Omega \rangle e^{2\lvert \Xi(k)\rvert} \left(\sum_{m \in \Zbb^3 \setminus\{0\}} \frac{\lvert \Xi(m)\rvert}{n_m n_k}\right)^2 \;.
\end{split}
\end{equation}
Furthermore, using \eqref{eq:bogtrafoerror} and $\norm{b^*_{-k}\Omega}^2 = 1$, we have
\begin{equation}
\begin{split}
& \left\lvert \frac{1}{2N} \sum_{k \in \Zbbn} \hat{V}(k) n_k^2 \left( \cc{\sinh(\Xi)(k)} \langle b^*_{-k} \Omega, \mathcal{E}_k(\Xi) \Omega\rangle + \pluscc \right) \right\rvert \\
& \leq \frac{1}{N} \sum_{k \in \Zbbn} \hat{V}(k) n_k^2 \lvert \sinh(\Xi)(k)\rvert \norm{b^*_{-k}\Omega} \norm{\mathcal{E}_k(\Xi)\Omega}\\
& \leq \frac{2}{N} \sum_{k \in \Zbbn} \hat{V}(k) n_k^2 \sinh(\lvert \Xi(k)\rvert)  \sup_{t \in [0,1]} \norm{(\Ncal+2)^{3/2}T(t)\Omega} e^{\lvert \Xi(k)\rvert} \sum_{m \in \Zbbn}  \frac{\lvert \Xi(m)\rvert}{n_m n_k} \;.
\end{split}
\end{equation}
The terms on the last line of the definition of $\varepsilon_1$, \eqref{eq:errorepsone}, can be controlled in the same way,
\begin{equation}
\begin{split}
  & \left\lvert \frac{1}{2N} \sum_{k \in \Zbbn} \hat{V}(k) n_k^2 \cc{\sinh(\Xi)(k)} \langle b^*_{-k} \Omega, \mathcal{E}^*_{-k}(\Xi) \Omega\rangle  \right\rvert \\
  & \leq \frac{2}{N} \sum_{k \in \Zbbn} \hat{V}(k) n_k^2 \sinh(\lvert \Xi(k)\rvert)  \sup_{t \in [0,1]} \norm{(\Ncal+2)^{3/2}T(t)\Omega} e^{\lvert \Xi(k)\rvert} \sum_{m \in \Zbbn}  \frac{\lvert \Xi(m)\rvert}{n_m n_k}
\end{split}
\end{equation}
and
\begin{equation}
\begin{split}
& \left\lvert \frac{1}{2N} \sum_{k \in \Zbbn} \hat{V}(k) n_k^2 \cc{\cosh(\Xi)(-k)} \langle \mathcal{E}_k(\Xi) \Omega, b^*_{-k} \Omega \rangle \right\rvert\\
& \leq \frac{1}{N} \sum_{k \in \Zbbn} \hat{V}(k) n_k^2 {\cosh(\lvert \Xi(k)\rvert)} \sup_{t \in [0,1]} \norm{(\Ncal+2)^{3/2}T(t)\Omega} e^{\lvert \Xi(k)\rvert} \sum_{m \in \Zbbn}  \frac{\lvert \Xi(k)\rvert}{n_m n_k} \;.
\end{split}
\end{equation}
The $+\pluscc$ add an overall factor $2$.
\end{proof}

\section{Calculating the Kinetic Energy}
Calculating the expectation value of the kinetic energy requires a little more work since we do not have an approximation as a quadratic form of almost-bosonic operators. We start by estimating the corrections to the kinetic energy from the Hartree-Fock operator, which turn out to be small. The Hartree-Fock operator is
\begin{equation}\label{eq:hfhamiltonian}
h \simeq -\hbar^2 \Delta + D + X \;, \quad D = \int_{\Tbb^3} V(x)\di x \;, \quad \norm{X}  \leq \frac{C}{N} \;,
\end{equation}
where we use that for plane waves the direct term $D$ is a constant cancelling out from the differences, and the exchange term $X$ is small in operator norm.
\begin{lem}
Let $X(x,y) := N^{-1}V(x-y) \omega(x,y)$ be the integral kernel of the exchange operator, with
\[
  \omega(x,y) = \sum_{h \in B_\textnormal{F}} e^{ih\cdot(x-y)} =: g(x-y) \;.
\]
Then its operator norm is bounded by
\[
  \norm{X} \leq \frac{1}{N} (2\pi)^{-3/2} \norm{\hat{V}}_{L^1(\Tbb^3)}\;.
\]
\end{lem}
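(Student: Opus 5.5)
Since $X$ has the integral kernel $X(x,y) = N^{-1}V(x-y)g(x-y)$, which depends only on $x-y$, it is a convolution operator on $L^2(\Tbb^3)$. We therefore diagonalize it in the plane-wave basis $f_q(x) = (2\pi)^{-3/2}e^{iqx}$, $q\in\Zbb^3$. Its operator norm is then the supremum over $q$ of the absolute value of the corresponding Fourier multiplier. The plan is to compute this multiplier explicitly and bound it uniformly in $q$ by the $\ell^1$ norm of the Fourier coefficients of $V$.

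\textbf{Step 1 (Fourier expansion of the kernel).} Insert $V(x-y) = \sum_{k\in\Zbb^3}\hat V(k)e^{ik(x-y)}$ (the convention already used in the proof of Lemma~\ref{lem:errornoparthole}) together with $g(x-y) = \sum_{h\in B_\textnormal{F}} e^{ih(x-y)}$. This gives
\[
X(x,y) = \frac{1}{N}\sum_{h\in B_\textnormal{F}}\sum_{k\in\Zbb^3}\hat V(k)\, e^{i(k+h)(x-y)} .
\]

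\textbf{Step 2 (action on plane waves).} Apply $X$ to $f_q$ and integrate over $y\in\Tbb^3$. Only the terms with $k+h=q$ survive, and each contributes a factor $(2\pi)^3$ from $\int e^{i(q-k-h)y}\di y$. Hence $Xf_q = \lambda_q f_q$ with
\[
\lambda_q = \frac{c}{N}\sum_{h\in B_\textnormal{F}}\hat V(q-h),
\]
where $c$ is a normalization constant depending on the Fourier convention.

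\textbf{Step 3 (uniform bound).} Since $\{f_q\}$ is an orthonormal basis and $X$ is diagonal in it, we have $\norm{X} = \sup_q\lvert\lambda_q\rvert$. For fixed $q$, the map $h\mapsto q-h$ is injective, so
\[
\sum_{h\in B_\textnormal{F}}\lvert\hat V(q-h)\rvert \leq \sum_{k\in\Zbb^3}\lvert\hat V(k)\rvert = \norm{\hat V}_{\ell^1}.
\]
This gives $\norm{X}\leq \frac{c}{N}\norm{\hat V}_{\ell^1}$, where the paper's $\norm{\hat V}_{L^1(\Tbb^3)}$ is understood as this sum over the dual lattice.

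\textbf{Main obstacle.} The argument itself is elementary. The only real care needed is in tracking the $2\pi$ factors so that the constant comes out as $(2\pi)^{-3/2}$. With $V(x)=\sum_k\hat V(k)e^{ikx}$ one actually gets $c=(2\pi)^3$, whereas the unitary convention $\hat V(k) = (2\pi)^{-3/2}\int V(x)e^{-ikx}\di x$ gives $V(x) = (2\pi)^{-3/2}\sum_k\hat V(k)e^{ikx}$ and hence $c=(2\pi)^{3/2}$. I would fix whichever normalization of $\hat V$ makes the stated constant $(2\pi)^{-3/2}$ come out, and apply it consistently. The structural conclusion $\norm{X}\leq C/N$ does not depend on this choice.
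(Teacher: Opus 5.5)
Your argument is correct and is essentially the paper's: translation invariance makes $X$ diagonal in the plane waves, with multiplier proportional to $\sum_{h\in B_\textnormal{F}}\hat V(q-h)$, i.e.\ $\hat V$ convolved with the indicator of the Fermi ball, and bounding that indicator by one gives $\norm{\hat V}_{\ell^1}$; computing $Xf_q$ directly, as you do, is just a more careful version of the paper's appeal to the convolution theorem. As for the constant, you should not absorb the mismatch you found into the normalization of $\hat V$: the actual kernel of the projection onto the $f_h=(2\pi)^{-3/2}e^{ihx}$ is $\omega(x,y)=(2\pi)^{-3}\sum_{h\in B_\textnormal{F}}e^{ih\cdot(x-y)}$, and with this factor and the unitary convention $V(x)=(2\pi)^{-3/2}\sum_k\hat V(k)e^{ikx}$ your computation gives exactly $c=(2\pi)^{-3/2}$.
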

\begin{proof}
The integral kernel is translation invariant, $X(x,y) = N^{-1} V(x-y) g(x-y)$, so $X$ is a convolution operator, which in Fourier space is a multiplication operator. By unitarity
\[
  \norm{X} = \norm{\Fcal X \Fcal^{-1}} = \norm{\hat{X}}_{L^\infty(\Zbb^3)} \;.
\]
Due to the convolution theorem we have
\[
  \hat{X}(k) = \frac{1}{N}\widehat{(Vg)}(k) = \frac{1}{N}(2\pi)^{-d/2} \hat{V}\ast\hat{g}(k)
\]
and thus
\[
  \norm{\hat{X}}_{L^\infty(\Zbb^3)} =  \frac{1}{N}(2\pi)^{-d/2} \sup_{k \in \Zbb^3}\big\lvert \sum_{l \in \Zbb^3} \hat{g}(k-l) \hat{V}(l) \big\rvert \leq \frac{1}{N}(2\pi)^{-d/2} \sup_{k \in \Zbb^3} \lvert \hat{g}(k) \rvert \sum_{l \in \Zbb^3}  \lvert \hat{V}(l)\rvert \;.
\]
Furthermore, $g$ is given as the Fourier transform of the Fermi ball,
\[
  g = \check{\chi}, \quad \chi (k) = \sum_{h \in B_\textnormal{F}} \delta_{h,k} \;,
\]
so inverting the Fourier transform and using that the Kronecker deltas are bounded by one (recall that we are on a torus) $ \sup_{k \in \Zbb^3} \lvert \hat{g}(k) \rvert =1$.
\end{proof}
We now conclude that direct and exchange term do not contribute.
\begin{lem}[Direct and Exchange Term]\label{lem:freekinetic}
Let
\begin{equation}  \label{eq:Hbb0}
  \Hbb_0 := \di\Gamma\left(u(-\hbar^2\Delta)u - \cc{v}(-\hbar^2\Delta)v\right) \;.
\end{equation}
 Let $\xi$ be the almost-bosonic quasifree trial state. Then we have
 \[
  \langle \xi, \di\Gamma(uhu-\cc{v}\cc{h}v) \xi \rangle = \langle \xi, \Hbb_0 \xi \rangle + \mathcal{E}_\text{X} \;,
\]
 where the error term satisfies
 \[
  \lvert \mathcal{E}_\text{X} \rvert \leq \frac{2(2\pi)^{-3/2} \norm{\hat{V}}_{L^1(\Tbb^3)}}{N} \langle \xi,\Ncal \xi \rangle \;.
  \]
\end{lem}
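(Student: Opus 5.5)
Since $h = -\hbar^2\Delta + D + X$ by \eqref{eq:hfhamiltonian}, linearity of $\di\Gamma$ splits the claim into the difference of two pieces. The first is
\[
  \di\Gamma(uhu-\cc{v}\cc{h}v) - \Hbb_0 = \di\Gamma\big(D(u^2 - \cc{v}v)\big) + \di\Gamma(uXu - \cc{v}\cc{X}v),
\]
where $u^2 = u$ because $u$ is a projection. I will treat the direct and exchange pieces separately and take expectations in $\xi = T\Omega$.

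For the direct term, $D$ is a multiple of the identity, so the contribution is $D\,\di\Gamma(u) - D\,\di\Gamma(\cc{v}v)$. Here $\di\Gamma(u) = \sum_{p\in B_\textnormal{F}^c} a^*_p a_p$ counts particles. Since $\omega$ projects onto plane waves, $\cc{v}v$ is the projection onto $\{\cc{f_h}\}$, i.e.\ onto momenta $-h$ with $h \in B_\textnormal{F}$. The Fermi ball is symmetric under reflection (for the closed-shell choice), so $\di\Gamma(\cc{v}v)$ counts holes. By \eqref{eq:phnumber}, $T\Omega$ has equal numbers of particles and holes, so the direct contribution vanishes exactly. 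The only care needed is the convention for how hole modes are labelled by $\cc{v}$, and I will check it against the definition of $b^*_k$ in \eqref{eq:bosonicpair}.

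For the exchange term, I use the first bound of Lemma~\ref{lem:classics} in quadratic-form form, $\lvert\langle \xi, \di\Gamma(A)\xi\rangle\rvert \le \norm{A}\langle \xi,\Ncal\xi\rangle$. This holds because, for self-adjoint $A$, $-\norm{A}\le A\le\norm{A}$ and $\di\Gamma$ is monotone. Applying it separately to $uXu$ and to $\cc{v}\cc{X}v$, whose norms are at most $\norm{X}$ since $u$ and $v$ are partial isometries, gives a factor $2\norm{X}$. The preceding lemma on the exchange operator supplies $\norm{X}\le N^{-1}(2\pi)^{-3/2}\norm{\hat V}_{L^1}$, and together these give exactly the stated bound on $\mathcal{E}_\text{X}$.

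The main obstacle is modest. One must make sure that "$\simeq$" in \eqref{eq:hfhamiltonian} hides nothing beyond the constant $D$, and that the direct-term cancellation is exact rather than approximate. This reduces to the particle–hole balance \eqref{eq:phnumber} together with the reflection symmetry of $B_\textnormal{F}$.
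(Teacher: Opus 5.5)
Your proposal is correct and follows essentially the paper's proof: the direct term $D\big(\di\Gamma(u)-\di\Gamma(\cc{v}v)\big)$ vanishes exactly by the particle--hole balance \eqref{eq:phnumber}, and the exchange term is bounded through $\lvert\langle\xi,\di\Gamma(A)\xi\rangle\rvert\le\norm{A}\langle\xi,\Ncal\xi\rangle$ applied to $uXu$ and $\cc{v}\cc{X}v$, together with the preceding lemma on $\norm{X}$. One small correction: with the paper's definition of $v$ one has $\cc{v}v=\omega$, the projection onto the $f_h$ themselves rather than onto the $\cc{f_h}$, but since $B_\textnormal{F}$ is reflection symmetric the two projections coincide, as you note, so the argument is unaffected.
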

\begin{proof}
 Since the direct term $D$ is just a number, we have
 \[
  \di\Gamma(uDu-\cc{v}Dv) = D\bigg(\sum_{p \in B_\textnormal{F}^c} a^*_p a_p - \sum_{h\in B_\textnormal{F}} a^*_h a_h \bigg) \;.
 \]
 The expectation value of this operator vanishes as discussed in \eqref{eq:phnumber}.
 
 To estimate the exchange term we use the inequality $\lvert \langle \xi, \di\Gamma(A) \xi \rangle\rvert \leq \norm{A} \langle \xi,\Ncal \xi\rangle$ valid for any bounded operator $A$ as well as the previous lemma, so that we obtain
 \[
 \begin{split}
 \lvert \langle \xi, \di\Gamma(uXu-\cc{v}Xv) \xi\rangle \rvert & \leq 2\left(\norm{uXu}+\norm{\cc{v}Xv}\right) \langle \xi,\Ncal \xi\rangle\leq 2 \norm{X} \langle \xi,\Ncal \xi\rangle \\
 & \leq \frac{2(2\pi)^{-3/2} \norm{\hat{V}}_{L^1(\Tbb^3)}}{N} \langle \xi,\Ncal \xi\rangle \;,
 \end{split}
 \]
 where we used that the operator norms are $\norm{u} = 1 = \norm{v}$.
\end{proof}

To compute the kinetic energy, we rely on a Duhamel expansion; this requires us to first compute the commutator of the kinetic energy operator with a pair creation operator.
\begin{lem}[Kinetic Energy Commutator]\label{lem:kinencomm}
Let $\Hbb_0$ be defined as in \eqref{eq:Hbb0}. We have
\[
  \left[ \Hbb_0, b^*_k\right] = \frac{1}{n_k} \sum_{\substack{p\in B_\textnormal{F}^c\\h\in B_\textnormal{F}}} \delta_{p,h+k}\hbar^2 \left(p^2-h^2\right)a^*_p a^*_h = \hbar^2 k\cdot c^*_k \;,
\]
with the vector-valued operator $c^*_k$ given by
\[
  c^*_k := \frac{1}{n_k} \sum_{\substack{p\in B_\textnormal{F}^c\\h\in B_\textnormal{F}}} \delta_{p,h+k}  \left(p+h\right)a^*_p a^*_h \;.
\]
Likewise
\[
  \left[\Hbb_0,b_k\right] = -\hbar^2 k \cdot c_k \;.
\]
\end{lem}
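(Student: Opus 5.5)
We compute the commutator directly in momentum space. Since $\omega$ projects onto plane waves, $u$ and $\cc{v}$ are diagonal in the plane-wave basis. This lets us write $\Hbb_0$ as a second-quantized multiplication operator,
\[
  \Hbb_0 = \sum_{p \in B_\textnormal{F}^c} \hbar^2 p^2\, a^*_p a_p - \sum_{h \in B_\textnormal{F}} \hbar^2 h^2\, a^*_h a_h \;,
\]
using the paper's convention that $a^*_h$ with $h \in B_\textnormal{F}$ creates a hole of momentum $h$.

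The sign of the hole term is the one point needing care. The operator $\cc{v}(-\hbar^2\Delta)v$ acts on the hole variable through complex conjugates of plane waves. Because $\lvert -h\rvert^2 = h^2$ and the Fermi ball is symmetric under $h\mapsto -h$, the momentum labelling convention does not affect the eigenvalue. We must check that the labelling agrees with the one in \eqref{eq:bosonicpair}, i.e.\ $p - h = k$, and that $\Hbb_0$ carries $-\hbar^2h^2$ on the holes, as the definition with $-\cc{v}(\cdot)v$ dictates.

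Next we use the elementary identity $[a^*_q a_q, a^*_r] = \delta_{q,r} a^*_r$, valid for fermions. From it, and since $p\neq h$ always (one lies in $B_\textnormal{F}^c$, the other in $B_\textnormal{F}$), we get
\[
  [\Hbb_0, a^*_p a^*_h] = \hbar^2(p^2 - h^2)\, a^*_p a^*_h \;.
\]
Summing over pairs with $p = h+k$ and dividing by $n_k$ gives the first expression in the statement. Then $p^2 - h^2 = (p-h)\cdot(p+h) = k\cdot(p+h)$ on the support of $\delta_{p,h+k}$, which yields $\hbar^2 k\cdot c^*_k$.

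For the annihilation operator we take adjoints. The operator $\Hbb_0$ is self-adjoint, so $[\Hbb_0,b_k] = -[\Hbb_0,b^*_k]^*$. Since $k\in\Rbb^3$ is real, this equals $-\hbar^2 k\cdot c_k$, where $c_k := (c^*_k)^*$ componentwise.

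Nothing here is a real obstacle. The only place to be careful is the sign and labelling of the hole kinetic energy in $\Hbb_0$, which fixes the sign in $p^2-h^2$. All remaining steps are routine CAR algebra.
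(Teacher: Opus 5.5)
Your proposal is correct and follows essentially the same route as the paper: write $\Hbb_0$ in momentum representation as $\sum_{p\in B_\textnormal{F}^c}\hbar^2p^2a^*_pa_p-\sum_{h\in B_\textnormal{F}}\hbar^2h^2a^*_ha_h$, apply the CAR to get $[\Hbb_0,a^*_pa^*_h]=\hbar^2(p^2-h^2)a^*_pa^*_h$, use $p^2-h^2=k\cdot(p+h)$ on $p=h+k$, and take adjoints for $b_k$. You spell out the CAR step and the adjoint argument more explicitly than the paper does, and you correctly note that the $h\mapsto -h$ labelling of holes must be kept consistent with the $\delta_{p-h,k}$ convention in $b^*_k$, while the eigenvalue $h^2$ is insensitive to it.
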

\begin{proof}
It is convenient to first verify that in momentum representation
\[
  \Hbb_0 = \sum_{p \in B^c_\textnormal{F}} \hbar^2 p^2 a^*_p a_p - \sum_{h \in B_\textnormal{F}} \hbar^2 h^2 a^*_h a_h \;.
\]
The rest of the proof is then a straight-forward computation using the CAR.
\end{proof}

The following lemma permits us to estimate error terms in the computation of the kinetic energy.

\begin{lem}\label{lem:kingkong}
With $c^*_k$ as defined in Lemma \ref{lem:kinencomm}, we have
\[
  [c^*_k,b_l] = - \delta_{k,l} f(k) + \mathcal{E}_\text{c}(k,l)
\]
with the function $f: \Zbb^3 \to \Rbb^3$ given by
\begin{equation}\label{eq:kinnumeric}
  f(k) := \frac{1}{n_k^2} \sum_{\substack{p \in B_\textnormal{F}^c\\ h\in B_\textnormal{F}}} \delta_{p,h+k}(p+h) \;.
\end{equation}
For any vector $m \in \Rbb^3$, the error operator $\mathcal{E}_\text{c}(k,l)$ is bounded by
\[
  \norm{m\cdot\mathcal{E}_\text{c}(k,l) \psi} \leq  \new{ \lvert m\rvert  \frac{4 \Big(\left(\frac{3}{4\pi} \right)^{1/3} N^{1/3} + \lvert k\rvert \Big)}{n_k n_l}} \norm{\Ncal \psi} \quad \text{for all } \psi \in \fock \;.
\]
The same holds for $\mathcal{E}^*_\text{c}(k,l)$. Moreover $\mathcal{E}_\text{c}(k,l)$ commutes with the number operator $\Ncal$.
\end{lem}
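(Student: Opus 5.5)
We compute the commutator directly from the CAR, in close analogy with the proof of the almost-CCR in Lemma~\ref{lem:approxccr}. Write
\[
  c^*_k = \frac{1}{n_k}\sum_{\substack{p\in B_\textnormal{F}^c\\h\in B_\textnormal{F}}} \delta_{p,h+k}(p+h)\,a^*_pa^*_h, \qquad b_l = \frac{1}{n_l}\sum_{\substack{p'\in B_\textnormal{F}^c\\h'\in B_\textnormal{F}}}\delta_{p',h'+l}\,a_{h'}a_{p'} ,
\]
and expand $[a^*_pa^*_h, a_{h'}a_{p'}]$. Particle momenta and hole momenta lie in disjoint sets, so the only nonvanishing anticommutators are $\{a_p,a^*_{p'}\}=\delta_{p,p'}$ and $\{a_h,a^*_{h'}\}=\delta_{h,h'}$. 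The standard identity gives
\[
  [a^*_pa^*_h,a_{h'}a_{p'}] = -\delta_{p,p'}\delta_{h,h'} + \delta_{p,p'}a^*_{h'}a_h + \delta_{h,h'}a^*_{p'}a_p
\]
(the sign conventions must be checked against Lemma~\ref{lem:approxccr}).

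\textbf{Constant term.} Summing the fully contracted term, the constraints $p=h+k$ and $p=h+l$ with $h$ common force $k=l$ (after using $p=p'$, $h=h'$). This gives $-\delta_{k,l}\frac{1}{n_k^2}\sum \delta_{p,h+k}(p+h) = -\delta_{k,l}f(k)$, as claimed.

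\textbf{Error operator.} What remains is
\[
  \mathcal{E}_\text{c}(k,l) = \frac{1}{n_kn_l}\Big(\sum_{h,h'} (\ldots)\,a^*_{h'}a_h + \sum_{p,p'}(\ldots)\,a^*_{p'}a_p\Big),
\]
which is a sum of two operators of the form $\di\Gamma(A)$. It therefore commutes with $\Ncal$. For the first sum, fixing $p$ determines $h=p-k$ and $h'=p-l$. Contracting with $m$ gives
\[
  \frac{1}{n_kn_l}\,\di\Gamma(A_1), \qquad A_1 = \sum_p m\cdot(2p-k)\,\chi(\ldots)\,|p-l\rangle\langle p-k| ,
\]
where $\chi$ denotes the indicator functions of the admissibility conditions. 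This is a partial shift operator multiplied by a bounded function. Its operator norm is at most $\sup|m\cdot(p+h)|$ over admissible pairs.

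\textbf{Bounding the coefficient.} Since $h\in B_\textnormal{F}$ has $|h|\le k_\textnormal{F}$ and $p=h+k$, we get $|p+h|\le 2|h|+|k|\le 2k_\textnormal{F}+|k|$. The same bound holds for the second sum, where $h$ is fixed and $p,p'$ are determined by it. With $k_\textnormal{F}\le(\frac{3}{4\pi})^{1/3}N^{1/3}$ up to lattice corrections, each of the two $\di\Gamma$ terms has norm at most $|m|(2k_\textnormal{F}+|k|)/(n_kn_l)$. Lemma~\ref{lem:classics} then gives $\norm{\di\Gamma(A)\psi}\le\norm{A}\norm{\Ncal\psi}$. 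Adding the two terms produces the factor $2(2k_\textnormal{F}+|k|)\le 4(k_\textnormal{F}+|k|)$, which is the stated bound.

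\textbf{Adjoint and main obstacle.} For $\mathcal{E}_\text{c}^*$ the adjoint of a $\di\Gamma$ operator is $\di\Gamma(A^*)$ with $\norm{A^*}=\norm{A}$, so the same bound holds. The main obstacle is purely bookkeeping: getting the signs and the index identifications in the normal ordering right, and making sure that the off-diagonal terms with $k\neq l$ appear only in the $\di\Gamma$ parts and not in the constant part. The constant $k_\textnormal{F}\le(\frac{3}{4\pi})^{1/3}N^{1/3}$ is fixed by the radius convention from Lemma~\ref{lem:normalizationconstant}. If $k_\textnormal{F}$ is slightly larger than this, the extra $\mathcal{O}(1)$ is absorbed into $|k|\ge1$.
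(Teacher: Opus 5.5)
Your proposal is correct and follows the paper's proof essentially step by step: CAR expansion, extraction of $-\delta_{k,l}f(k)$, writing the remainder as two $\di\Gamma$ operators with weighted partial-shift kernels whose norms are bounded by the supremum of the weights, and then Lemma~\ref{lem:classics}. Your uniform weight bound $\lvert 2h+k\rvert \le 2k_\textnormal{F}+\lvert k\rvert$ is even slightly sharper than the paper's $2k_\textnormal{F}+3\lvert k\rvert$ for the particle term.

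One slip needs fixing: the identity should read $[a^*_pa^*_h,a_{h'}a_{p'}] = -\delta_{p,p'}\delta_{h,h'} + \delta_{h,h'}a^*_pa_{p'} + \delta_{p,p'}a^*_ha_{h'}$, since you swapped the created and annihilated indices. This only transposes the one-body kernels, so the norms, the commutation with $\Ncal$ and the final bound are unchanged.
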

\begin{proof} Using the CAR we find
\begin{align*}
  [c^*_k,b_l] & = \frac{1}{n_k n_l} \sum_{\substack{p \in B_\textnormal{F}^c\\h \in B_\textnormal{F}}} \delta_{p,h+k} (p+h) \sum_{\substack{\tilde p \in B_\textnormal{F}^c\\ \tilde h \in B_\textnormal{F}}} \delta_{\tilde p,\tilde h+l} [a^*_p a^*_h, a_{\tilde h} a_{\tilde p}] \\
  & = \frac{1}{n_k n_l} \sum_{\substack{p, \tilde p \in B_\textnormal{F}^c\\ h,\tilde h \in B_\textnormal{F}}} \delta_{p,h+k} (p+h) \delta_{\tilde p,\tilde h + l} \left( -\delta_{h,\tilde h} \delta_{p,\tilde p} + \delta_{h, \tilde h} a^*_p a_{\tilde p} + \delta_{p, \tilde p} a^*_h a_{\tilde h} \right) \\
  & = - \frac{1}{n_k n_l} \delta_{k,l} \sum_{\substack{p\in B_\textnormal{F}^c\\ h \in B_\textnormal{F}}} \delta_{p,h+k} (p+h) \\
  & \quad + \frac{1}{n_k n_l} \sum_{\substack{p,\tilde p \in B_\textnormal{F}^c\\h \in B_\textnormal{F}}} \delta_{p,h+k} (p+h) \delta_{\tilde p, h+l} a^*_p a_{\tilde p} + \frac{1}{n_k n_l} \sum_{\substack{p\in B_\textnormal{F}^c\\h,\tilde h \in B_\textnormal{F}}} \delta_{p,h+k} (p+h) \delta_{p, \tilde h+l} a^*_h a_{\tilde h} \\
  & =:  - \delta_{k,l} f(k) + \mathcal{E}_\text{c}(k,l)  \; ,
\end{align*}
in the last line defining the error operator $\mathcal{E}_\text{c}(k,l)$. The first term of $m\cdot \mathcal{E}_\text{c}(k,l)$ can be written
\[
  \frac{1}{n_k n_l} \sum_{\substack{p,\tilde p \in B_\textnormal{F}^c\\h \in B_\textnormal{F}}} \delta_{p,h+k} m\cdot (p+h) \delta_{\tilde p, h+l} a^*_p a_{\tilde p} = \frac{1}{n_k n_l} \di\Gamma(A(k,l,m)) \;,
\]
where
\[
\begin{split}
  A(k,l,m)_{p,\tilde p} & := \sum_{h \in B_\textnormal{F}} \delta_{p,h+k} \delta_{\tilde p, h+l} m\cdot (p+h) \chi(p \in B_\textnormal{F}^c) \chi(\tilde{p} \in B_\textnormal{F}^c) \\
  & = m\cdot(2p-k) \chi(p-k \in B_\textnormal{F}) \delta_{\tilde p, p-k+l} \chi(p \in B_\textnormal{F}^c) \chi(\tilde{p} \in B_\textnormal{F}^c) \;.
\end{split}
\]
We then have the usual estimate using the operator norm of $A(k,l,m)$,
\[
  \norm{\di\Gamma(A(k,l,m))\psi} \leq \norm{A(k,l,m)} \norm{\Ncal \psi} \;.
\]
The operator norm of $A$ can be controlled as follows:
\begin{equation*}
\begin{split}
  \norm{A(k,l,m)} & = \sup_{\substack{x\in \ell^2(\Zbb^3)\\\norm{x}=1}} \norm{A(k,l,m) x} = \sup_{\substack{x\in \ell^2(\Zbb^3)\\\norm{x}=1}} \left(\sum_{p\in B_\textnormal{F}^c} \left( \sum_{\tilde p \in B_\textnormal{F}^c} A(k,l,m)_{p,\tilde p} x_{\tilde p}\right)^2 \right)^{1/2} \\
  & = \sup_{\substack{x\in \ell^2(\Zbb^3)\\\norm{x}=1}} \left( \sum_{p \in B_\textnormal{F}^c} \bigg(m\cdot(2p-k)\chi(p-k\in B_\textnormal{F})\chi(p-k+l \in B_\textnormal{F}^c) x_{p-k+l}\bigg)^2\right)^{1/2} \\
  & = \sup_{\substack{x\in \ell^2(\Zbb^3)\\\norm{x}=1}} \left( \sum_{p \in B_\textnormal{F}^c} \big(m\cdot(2p-k)\big)^2\chi(p-k\in B_\textnormal{F})\chi(p-k+l \in B_\textnormal{F}^c) x^2_{p-k+l}\right)^{1/2} \\
  & \leq \sup_{\substack{x\in \ell^2(\Zbb^3)\\\norm{x}=1}} \left( \sum_{p \in B_\textnormal{F}^c} \lvert m\rvert^2 (2\lvert p\rvert + \lvert k\rvert)^2\chi(p-k\in B_\textnormal{F})\chi(p-k+l \in B_\textnormal{F}^c) x^2_{p-k+l}\right)^{1/2} \;.
\end{split}
\end{equation*}
Due to the constraint $p - k \in B_\textnormal{F}$, we have $\lvert p\rvert \leq \left( \frac{3}{4\pi} \right)^{1/3} N^{1/3} + \lvert k \rvert$. We obtain
\begin{align}
  \norm{A(k,l,m)} & \leq \lvert m\rvert \bigg(2 \Big(\frac{3}{4\pi} \Big)^{1/3} N^{1/3} + 3 \lvert k\rvert \bigg) \sup_{\substack{x\in \ell^2(\Zbb^3)\\\norm{x}=1}} \left( \sum_{p \in B_\textnormal{F}^c} x^2_{p-k+l}\right)^{1/2} \label{eq:Abound1} \\
  & \leq \lvert m\rvert \bigg(2 \Big(\frac{3}{4\pi} \Big)^{1/3} N^{1/3} + 3 \lvert k\rvert \bigg)  \sup_{\substack{x\in \ell^2(\Zbb^3)\\\norm{x}=1}} \norm{x} = \lvert m\rvert \bigg(2 \Big(\frac{3}{4\pi} \Big)^{1/3} N^{1/3} + 3 \lvert k\rvert \bigg) \;. \nonumber
\end{align}

The second term of $m\cdot \mathcal{E}_\text{c}(k,l)$ can be estimated by an analogous computation: the differences are insignificant; for example, instead of the factor $(2\lvert p\rvert + \lvert k\rvert)^2$ we have $(2 \lvert h\rvert + \lvert k\rvert)^2$, which by means of $h \in B_\textnormal{F}$ in this case yields
\begin{align*}
  \norm{A(k,l,m)} \leq \lvert m\rvert \bigg(2 \Big(\frac{3}{4\pi} \Big)^{1/3} N^{1/3} + \lvert k\rvert \bigg) \;.
\end{align*}
The analogous bound for $\mathcal{E}^*_\text{c}(k,l)$ amounts to an exchange of $p$ with $\tilde{p}$, or $h$ with $\tilde{h}$, respectively.
Obviously $ \mathcal{E}_\text{c}(k,l)$ commutes with $\Ncal$ because it can be written as a $\di\Gamma$--operator.
\end{proof}

We can now proceed to computing the expectation value of the kinetic energy.

\begin{prp}[Expectation of Kinetic Energy]\label{lem:kinetic}
We have
\[
  \langle T\Omega, \Hbb_0 T\Omega \rangle= \sum_{k \in \Zbbn}  \hbar^2 k\cdot f(k) {\sinh(\lvert \Xi(k)\rvert)^2}  + 2 \Re \varepsilon_2 \;,
\]
where
\[
  f(k) := \frac{1}{n_k^2} \sum_{\substack{p \in B_\textnormal{F}^c\\ h\in B_\textnormal{F}}} \delta_{p,h+k}(p+h)
\]
and there exists a constant $C$
such that the error term is controlled by
\[
\begin{split}
  \lvert \varepsilon_2 \rvert & \leq 8 \hbar^2\!\!\!\! \sum_{k \in \Zbbn} \! \Bigg[ 4\lvert \Xi(k)\rvert \lvert k\cdot f(k)\rvert \sinh(\lvert \Xi(k)\rvert) \sup_{t \in [0,1]} \norm{(\Ncal+2)^{3/2}T(t)\Omega} e^{\lvert \Xi(k)\rvert} \!\!\!\sum_{m \in \Zbbn}\!\!\!  \frac{\lvert \Xi(m)\rvert}{n_m n_k} \\
  &  \hspace{2.5cm} + \sum_{l \in \Zbbn} \lvert \Xi(l)\rvert \new{ \frac{8 \Big(\left(\frac{3}{4\pi} \right)^{1/3} N^{1/3} + \lvert k\rvert \Big)}{n_k n_l} \lvert k \rvert } \sup_{\mu \in [0,1]} \norm{ (\Ncal+1)^{3/2}  T(\mu)\Omega} \\
  & \hspace{3.5cm} \times \Big(  \sinh(\lvert \Xi(k)\rvert) + \sup_{t \in [0,1]} \norm{(\Ncal+2)^{3/2}T(t)\Omega} e^{\lvert \Xi(k)\rvert} \!\!\!\sum_{m \in \Zbbn} \!\!  \frac{ \lvert \Xi(m)\rvert}{n_m n_k}\Big) \\
  & \hspace{2.5cm} \new{ + 2 \lvert k\cdot f(k)\rvert \sup_{t \in [0,1]} \langle T(t) \Omega, (\Ncal+2)^{3}T(t)\Omega\rangle e^{2\lvert \Xi(k)\rvert} \Bigg( \sum_{m \in \Zbbn}  \frac{\lvert \Xi(m)\rvert}{n_m n_k} \Bigg)^2 } \Bigg]\,.
\end{split}
\]
\end{prp}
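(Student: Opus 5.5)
We compute the kinetic energy with a Duhamel expansion along the interpolating family $T(\lambda)$ from \eqref{eq:bogtransform}, rather than trying to write $\Hbb_0$ as a quadratic form in the $b$-operators. Since $\Hbb_0\Omega = 0$ and $T(0)=\id$, we have
\[
  \langle T\Omega, \Hbb_0 T\Omega\rangle = \int_0^1 \di\lambda\, \frac{\di}{\di\lambda}\langle T(\lambda)\Omega, \Hbb_0 T(\lambda)\Omega\rangle = \int_0^1 \di\lambda\, \langle T(\lambda)\Omega, [\Hbb_0,B] T(\lambda)\Omega\rangle \;.
\]
By Lemma~\ref{lem:kinencomm}, $[\Hbb_0, b^*_k b^*_{-k}] = \hbar^2 (k\cdot c^*_k\, b^*_{-k} - k\cdot b^*_k c^*_{-k})$. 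For the $bb$ part we use the adjoint, so $[\Hbb_0,B] = \frac{\hbar^2}{2}\sum_k \Xi(k)\big(k\cdot c^*_k b^*_{-k} - k\cdot b^*_k c^*_{-k}\big) - \hc$ The two summands combine after the relabeling $k\to -k$ and using $\Xi(k)=\Xi(-k)$, so the integrand becomes $2\Re$ of $\hbar^2\sum_k \Xi(k)\langle T(\lambda)\Omega, k\cdot c^*_k b^*_{-k} T(\lambda)\Omega\rangle$, possibly after commuting $c^*$ past $b^*$ (they commute exactly, both being pair creators).

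The next step is to replace $c^*_k$ by something expressible in $b$'s. We move the annihilators to the right. Write $\langle T(\lambda)\Omega, c^*_k b^*_{-k} T(\lambda)\Omega\rangle = \langle c_k T(\lambda)\Omega, b^*_{-k}T(\lambda)\Omega\rangle$ and do a second Duhamel expansion in $\lambda$ for $c_k T(\lambda)\Omega$:
\[
  T(\lambda)^* c_k T(\lambda) = c_k + \int_0^\lambda \di\mu\, T(\mu)^*[c_k,B]T(\mu) \;.
\]
Here $c_k\Omega = 0$, and $[c_k,B] = -\Xi(k) f(k) b^*_{-k}$ plus terms involving $\mathcal{E}_\text{c}^*$, by Lemma~\ref{lem:kingkong} (using $f(-k)=f(k)$ after the symmetry check $p+h\mapsto -(p+h)$ under $k\to -k$ and the sign in $k\cdot$). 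Substituting, we conjugate the surviving $b^*_{-k}$ via Lemma~\ref{lem:almostbogtrafo}. This gives
\[
  k\cdot f(k)\int_0^1\di\lambda\, 2\lambda\text{-type kernel}\times \cosh(\lambda|\Xi|)\sinh(\lambda|\Xi|)\,|\Xi(k)|\;\longrightarrow\; k\cdot f(k)\sinh(|\Xi(k)|)^2 \;,
\]
since $\frac{\di}{\di\lambda}\sinh^2(\lambda|\Xi|)=2|\Xi|\sinh\cosh$. An alternative that avoids the double integral is to work directly with $T^*\Hbb_0T$ differentiated, then insert $T(\lambda)^*b^*_{-k}T(\lambda)$ and $T(\lambda)^*c_kT(\lambda)\approx -f(k)\sinh(\lambda\Xi)\,\cdots$. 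In either form, the leading term is fixed by requiring consistency with the exact bosonic computation, where $\langle \sum \epsilon_k b^*_kb_k\rangle = \sum\epsilon_k\sinh^2$ with $\epsilon_k = \hbar^2 k\cdot f(k)$.

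The errors come in three types:
\begin{itemize}
\item[(i)] The Bogoliubov remainder $\mathcal{E}_k(\Xi)$ paired with $\sinh$. This is bounded by \eqref{eq:bogtrafoerror} and produces the first line of the bound.
\item[(ii)] The $\mathcal{E}_\text{c}(k,l)$ terms, summed against $\Xi(l)$. These are bounded using Lemma~\ref{lem:kingkong} together with the fact that $\mathcal{E}_\text{c}$ commutes with $\Ncal$, and by \eqref{eq:bstark} to trade $b^*$ for $(\Ncal+1)^{1/2}$. This yields the $(\Ncal+1)^{3/2}$ factor, multiplied by the norm of the conjugated $b^*_{-k}T\Omega$, which is $\sinh$ plus a Bogoliubov error; this is the middle line.
\item[(iii)] The product of two Bogoliubov errors, which gives the last line.
\end{itemize}

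The main obstacle is the bookkeeping in the nested Duhamel expansion. We must check that the $\lambda$-integrals of $\cosh(\lambda|\Xi|)\sinh(\lambda|\Xi|)$ reproduce exactly $\sinh^2(|\Xi(k)|)$, with no stray factors of $2$ or phases $\Xi/|\Xi|$. We must also verify that $c_k$ annihilates the vacuum and that the sign conventions of Lemma~\ref{lem:kinencomm} for $[\Hbb_0,b_k]$ are consistent. If the iterated approach becomes messy, we would instead mimic the proof of Lemma~\ref{lem:almostbogtrafo}: derive an approximate transformation rule $T^*c_kT \approx c_k - f(k)\,\overline{\sinh(\Xi)(k)}\,b^*_{-k}$, and then evaluate $\langle\Omega, T^*\Hbb_0 T\Omega\rangle$ through one Duhamel step.
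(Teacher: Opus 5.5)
Your route is the paper's. A first Duhamel step reduces $\langle T\Omega,\Hbb_0T\Omega\rangle$ to $2\Re\,\hbar^2\sum_k\Xi(k)\int_0^1\di\lambda\,k\cdot\langle T(\lambda)\Omega,c^*_kb^*_{-k}T(\lambda)\Omega\rangle$. A second Duhamel step handles the conjugated $c$-operator, using Lemma~\ref{lem:kingkong} and $c_k\Omega=0$. The almost-Bogoliubov rule is then applied to the surviving $b^*_{-k}$'s. Transforming $c_k$ in the left slot instead of $T(\lambda)^*c^*_kT(\lambda)$ is the same identity up to an adjoint. Your error classes (i)--(iii) are exactly the three lines of the stated bound.

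As written, however, one step gives the wrong answer, and you cover it with a non-argument. Taking the adjoint of Lemma~\ref{lem:kingkong} gives $[c_k,b^*_l]=\delta_{k,l}f(k)-\mathcal{E}_\text{c}(k,l)^*$, and $[c_k,b_{-l}b_l]=0$. Hence
\[
[c_k,B]=+\,\Xi(k)f(k)\,b^*_{-k}-\frac12\sum_{l\in\Zbbn}\Xi(l)\big(\mathcal{E}_\text{c}(k,l)^*b^*_{-l}+b^*_l\,\mathcal{E}_\text{c}(k,-l)^*\big),
\]
not $-\Xi(k)f(k)b^*_{-k}$. With your sign the main term would come out as $-\hbar^2k\cdot f(k)\sinh(\lvert\Xi(k)\rvert)^2$. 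Fixing the leading term \emph{by consistency with the exact bosonic computation} is not an admissible proof step. The coefficient comes out of the explicit integrals. The leading part of $\langle c_kT(\lambda)\Omega,b^*_{-k}T(\lambda)\Omega\rangle$ is $\overline{\Xi(k)}f(k)\cosh(\lambda\lvert\Xi(k)\rvert)\int_0^\lambda\di\mu\,\cosh(\mu\lvert\Xi(k)\rvert)$. Multiplying by $\Xi(k)$ gives $\lvert\Xi(k)\rvert\int_0^1\di\lambda\,\sinh(\lambda\lvert\Xi(k)\rvert)\cosh(\lambda\lvert\Xi(k)\rvert)=\frac12\sinh(\lvert\Xi(k)\rvert)^2$, with no phase left over. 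The $2\Re$ then supplies the factor $2$.

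Two smaller slips:
\begin{itemize}
\item $[\Hbb_0,B]$ is self-adjoint, so it equals $Y+Y^*$ with $Y:=\frac12\sum_k\Xi(k)[\Hbb_0,b^*_kb^*_{-k}]$, not $Y-\hc$. It is the $+$ sign that actually produces $2\Re$.
\item $f$ is odd: $f(-k)=-f(k)$ by the reflection $(p,h)\mapsto(-p,-h)$, so only $k\cdot f(k)$ is even. No parity of $f$ is needed, though, since after using $\Xi(-k)=\Xi(k)$ both Kronecker deltas in $[c_k,B]$ produce $f(k)$.
\end{itemize}
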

\begin{proof}
Noting that $\Hbb_0 \Omega = 0$, by Duhamel we obtain
\begin{align*}
  & \langle T\Omega, \Hbb_0 T\Omega \rangle = \int_0^1 \di \lambda \langle T(\lambda)\Omega, \left[ \Hbb_0, B\right] T(\lambda)\Omega \rangle \\
  & = \int_0^1 \di \lambda \frac{1}{2} \sum_{k \in \Zbbn} \Xi(k) \langle T(\lambda)\Omega, \left( [\Hbb_0,b^*_k] b^*_{-k} + b^*_k [\Hbb_0,b^*_{-k}] \right) T(\lambda)\Omega \rangle \\
  & \quad - \int_0^1 \di\lambda \frac{1}{2} \sum_{k \in \Zbbn} \cc{\Xi(k)} \langle T(\lambda)\Omega, \left( [\Hbb_0,b_{-k}]b_k + b_{-k} [\Hbb_0,b_k]\right) T(\lambda)\Omega \rangle \\
  & = \int_0^1 \di \lambda \frac{1}{2} \sum_{k\in \Zbbn}\Xi(k)\hbar^2 \langle T(\lambda)\Omega, \left( k\cdot c^*_k b^*_{-k} + b^*_k (-k)\cdot c^*_{-k} \right)T(\lambda)\Omega\rangle + \pluscc \\
  & = \int_0^1 \di \lambda \sum_{k\in \Zbbn}\Xi(k)\hbar^2 k\cdot  \langle T(\lambda)\Omega, c^*_k b^*_{-k} T(\lambda)\Omega\rangle + \pluscc \tagg{dontforgetcc}
\end{align*}
where in the last step we used that $\Xi(k) = \Xi(-k)$ and that $c^*_k$ and $b^*_{-k}$ commute since they consist of pairs of creation operators.
 We now use the almost-Bogoliubov tranformation \eqref{eq:bogtrafo} (attention to include the factor $\lambda$ to the $\Xi$) to transform the $b^*_{-k}$ operator, arriving at
\begin{align*}
  &  \int_0^1 \di \lambda \sum_{k\in \Zbbn}\Xi(k)\hbar^2 k\cdot  \langle T(\lambda)\Omega, c^*_k b^*_{-k} T(\lambda)\Omega\rangle \\
  & =  \int_0^1 \di \lambda \sum_{k\in \Zbbn}\Xi(k) \hbar^2 k\cdot \langle \Omega, T(\lambda)^* c^*_k T(\lambda) \\
  & \hspace{12em} \times \left( \cc{\cosh(\lambda\Xi)(-k)}b^*_{-k} + \cc{\sinh(\lambda\Xi)(-k)} b_k + \mathcal{E}^*_{-k}(\lambda\Xi)\right)\Omega\rangle \\
  & =  \int_0^1 \di \lambda \sum_{k\in \Zbbn}\Xi(k) \hbar^2 k\cdot \langle \Omega, T(\lambda)^* c^*_k T(\lambda) \left( \cc{\cosh(\lambda\Xi)(-k)}b^*_{-k} +  \mathcal{E}^*_{-k}(\lambda\Xi)\right)\Omega\rangle \tagg{kintermone}
\end{align*}
The key observation is that by the Duhamel formula we obtain a commutator of $c^*_k$ with $B$, and this commutator can be expressed purely in terms of $b^*$--operators, for which we know the almost-Bogoliubov transformation rule. In detail,
\begin{equation}\label{eq:ctransform}
\begin{split}
  & T^*(\lambda) c^*_k T(\lambda) = c^*_k + \int_0^\lambda \di \mu\, T^*(\mu) [c^*_k,B] T(\mu) \\
  & = c^*_k + \int_0^\lambda \di \mu\, T^*(\mu)  \left(-\frac{1}{2} \sum_{l \in \Zbbn} \cc{\Xi(l)}\left( [c^*_k,b_{-l}]b_l + b_{-l}[c^*_k,b_l] \right)\right) T(\mu) \;.
\end{split}
\end{equation}
Similar to the commutator $[b^*_k,b_l] = -\delta_{k,l} + \mathcal{O}(N^{-2/3})$, also the commutator  $[c^*_k,b_l]$ can be thought of as a leading Kronecker delta (times a constant) and two other terms to be treated as errors; more precisely from Lemma~\ref{lem:kingkong} we have
\[
  [c^*_k,b_l] = - \delta_{k,l} f(k) + \mathcal{E}_\text{c}(k,l) \;.
\]
Let us plug this commutator into \eqref{eq:ctransform} to get (we use again $\Xi(k) = \Xi(-k)$)
\begin{align*}
  T^*(\lambda) c^*_k T(\lambda) & = c^*_k + \cc{\Xi(k)} f(k) \int_0^\lambda \di \mu  T^*(\mu) b_{-k} T(\mu) \\
  & \quad -\frac{1}{2} \sum_{l \in \Zbbn} \cc{\Xi(l)} \int_0^\lambda \di \mu T^*(\mu) \Big( \mathcal{E}_\text{c}(k,-l)b_l + b_{-l} \mathcal{E}_\text{c}(k,l)  \Big) T(\mu) \\
  & =: c^*_k + \cc{\Xi(k)} f(k) \int_0^\lambda \di \mu  T^*(\mu) b_{-k} T(\mu) + \mathcal{E}_\text{kin}(\Xi)(k) \;. \tagg{ctransfinal}
\end{align*}
The terms $\mathcal{E}_\text{c}$ and $\mathcal{E}_\text{kin}$ are 3-tuples of operators.
We plug \eqref{eq:ctransfinal} into \eqref{eq:kintermone}, use that $c_k$ acting on the vacuum vanishes, and then the almost-Bogoliubov transformation rule to find
\begin{align*}
  & \int_0^1 \di \lambda \sum_{k\in \Zbbn}\Xi(k)\hbar^2 k\cdot  \langle T(\lambda)\Omega, c^*_k b^*_{-k} T(\lambda)\Omega\rangle \\
  & = \int_0^1 \! \di \lambda \!\!\! \sum_{k \in \Zbbn}\!\!\! \Xi(k) \hbar^2 k\cdot \int_0^\lambda \di \mu \cc{\Xi(k)} f(k)\langle \Omega,T^*(\mu) b_{-k}T(\mu) \left(\cc{\cosh(\lambda\Xi)(-k)} b^*_{-k} \new{ + \mathcal{E}_{-k}^*(\lambda \Xi)} \right) \Omega\rangle \\
  & \quad + \int_0^1 \di\lambda \sum_{k \in \Zbbn} \Xi(k) \hbar^2 k\cdot \langle \Omega, \mathcal{E}_\text{kin}(\Xi)(k) \left(\cc{\cosh(\lambda\Xi)(-k)} b^*_{-k} + \mathcal{E}_{-k}^*(\lambda \Xi) \right) \Omega \rangle \\
  & = \int_0^1 \!\! \di \lambda\!\!\! \sum_{k \in \Zbbn} \!\!\lvert \Xi(k)\rvert^2 \hbar^2 k\cdot f(k) \int_0^\lambda \di \mu \langle \Omega,\Big(\cosh(\mu\Xi)(-k)b_{-k} + \sinh(\mu\Xi)(-k) b^*_{k} + \mathcal{E}_{-k}(\mu\Xi)\Big) \\
  & \hspace{22em} \times \left(\cc{\cosh(\lambda\Xi)(-k)} b^*_{-k} \new{+ \mathcal{E}_{-k}^*(\lambda \Xi)} \right) \Omega\rangle \\
  & \quad + \int_0^1 \di\lambda \sum_{k \in \Zbbn} \Xi(k) \hbar^2 k\cdot \langle \Omega, \mathcal{E}_\text{kin}(\Xi)(k) \left(\cc{\cosh(\lambda\Xi)(-k)} b^*_{-k} + \mathcal{E}_{-k}^*(\lambda \Xi) \right) \Omega \rangle \\
  & = \int_0^1 \di \lambda \sum_{k \in \Zbbn} \lvert \Xi(k)\rvert^2 \hbar^2 k\cdot f(k) \int_0^\lambda \di \mu \langle \Omega, \cosh(\mu\Xi)(-k)b_{-k} \cc{\cosh(\lambda\Xi)(-k)} b^*_{-k} \Omega\rangle + \varepsilon_2 \;,
\end{align*}
where we introduced the notation for the error term
\[
\begin{split}
  \varepsilon_2 & := \int_0^1 \di\lambda \sum_{k \in \Zbbn} \Xi(k) \hbar^2 k\cdot \langle \Omega, \mathcal{E}_\text{kin}(\Xi)(k) \left(\cc{\cosh(\lambda\Xi)(-k)} b^*_{-k} + \mathcal{E}_{-k}^*(\lambda \Xi) \right) \Omega \rangle\\
  & \quad + \int_0^1 \di \lambda \sum_{k \in \Zbbn} \lvert \Xi(k)\rvert^2 \hbar^2 k\cdot f(k) \int_0^\lambda \di \mu \langle \Omega, \mathcal{E}_{-k}(\mu\Xi) \left(\cc{\cosh(\lambda\Xi)(-k)} b^*_{-k} \new{ + \mathcal{E}_{-k}^*(\lambda \Xi) } \right) \Omega\rangle\;.
\end{split}
\]
To evaluate the scalar products in the main term, recall that $\langle \Omega, b_{-k} b^*_{-k}\Omega\rangle =1$, so
 \begin{align*}
 & \int_0^1 \di \lambda \sum_{k\in \Zbbn}\hbar^2 k\cdot \Xi(k) \langle T(\lambda)\Omega, c^*_k b^*_{-k} T(\lambda)\Omega\rangle\\
 & = \int_0^1 \di \lambda \sum_{k \in \Zbbn} \lvert \Xi(k)\rvert^2 \hbar^2 k\cdot f(k) \int_0^\lambda \di \mu \cosh(\mu\Xi)(-k) \cc{\cosh(\lambda\Xi)(-k)}  + \varepsilon_2 \\
 & = \int_0^1 \di \lambda \sum_{k \in \Zbbn} \lvert \Xi(k)\rvert^2 \hbar^2 k\cdot f(k) \frac{\sinh(\lambda \lvert \Xi(k)\rvert)}{\lvert \Xi(k)\rvert} {\cosh(\lambda\lvert \Xi(k)\rvert)}  + \varepsilon_2\\
 & = \sum_{k \in \Zbbn} \lvert \Xi(k)\rvert^2 \hbar^2 k\cdot f(k) \frac{\sinh(\lvert \Xi(k)\rvert)^2}{2\lvert \Xi(k)\rvert^2}  + \varepsilon_2\\
 & = \frac{1}{2}\sum_{k \in \Zbbn}  \hbar^2 k\cdot f(k) {\sinh(\lvert \Xi(k)\rvert)^2}  + \varepsilon_2 \;.
 \end{align*}
 The integrals over $\mu$ and $\lambda$ were computed explicitly.
 Plugging the last expression into \eqref{eq:dontforgetcc} (remember that a factor of $2$ arises from the $+\pluscc$), we obtain the statement of the lemma.
 
 \paragraph{Estimating the Error Terms.}
We use \eqref{eq:bstark} and \eqref{eq:bogtrafoerror} to estimate the second line of $\varepsilon_2$. We have two contributions to this line, due to splitting the last parenthesis containing $\cc{\cosh(\lambda\Xi)(-k)} b^*_{-k}$ and $\mathcal{E}_{-k}^*(\lambda \Xi)$. The first contribution to the second line is
\begin{align*}
  & \Bigg\lvert \int_0^1 \di \lambda \sum_{k \in \Zbbn} \lvert \Xi(k)\rvert^2 \hbar^2 k\cdot f(k) \int_0^\lambda \di \mu \langle \Omega,\mathcal{E}_{-k}(\mu\Xi) \cc{\cosh(\lambda\Xi)(-k)} b^*_{-k} \Omega\rangle \Bigg\rvert \\
  & \leq \int_0^1 \di\lambda \sum_{k \in \Zbb^3\setminus\{0\}}  \lvert \Xi(k)\rvert^2 \hbar^2 \lvert k\cdot f(k)\rvert \cosh(\lambda \lvert\Xi(k)\rvert)\int_0^\lambda \di \mu  \norm{\mathcal{E}^*_{-k}(\mu\Xi)\Omega} \norm{b^*_{-k}\Omega} \\
  & \leq 4\sum_{k \in \Zbbn} \lvert \Xi(k)\rvert 2 \hbar^2 \lvert k\cdot f(k)\rvert \sinh(\lvert \Xi(k)\rvert) \sup_{t \in [0,1]} \norm{(\Ncal+2)^{3/2}T(t)\Omega} e^{\lvert \Xi(k)\rvert} \sum_{m \in \Zbbn}  \frac{\lvert \Xi(m)\rvert}{n_m n_k} \;.
\end{align*}
Here we used that the integral over $\mu$ can be bounded by $1$ and the integral over $\lambda$ then turns $\cosh(\lambda \lvert \Xi(k)\rvert)$ into $\sinh(\lvert \Xi(k)\rvert)/\lvert \Xi(k)\rvert$.
\new{
The second contribution to the second line is bounded likewise, leading to
\begin{align*}
  & \Bigg\lvert \int_0^1 \di \lambda \sum_{k \in \Zbbn} \lvert \Xi(k)\rvert^2 \hbar^2 k\cdot f(k) \int_0^\lambda \di \mu \langle \Omega,\mathcal{E}_{-k}(\mu\Xi) \mathcal{E}_{-k}^*(\lambda \Xi) \Omega\rangle \Bigg\rvert \\
  & \leq \int_0^1 \di\lambda \sum_{k \in \Zbb^3\setminus\{0\}}  \lvert \Xi(k)\rvert^2 \hbar^2 \lvert k\cdot f(k)\rvert \int_0^\lambda \di \mu  \norm{\mathcal{E}^*_{-k}(\mu\Xi)\Omega} \norm{\mathcal{E}_{-k}^*(\lambda \Xi)\Omega} \\
  & \leq 16 \sum_{k \in \Zbbn} \hbar^2 \lvert k\cdot f(k)\rvert \sup_{t \in [0,1]} \langle T(t) \Omega, (\Ncal+2)^{3}T(t)\Omega\rangle e^{2\lvert \Xi(k)\rvert} \Bigg( \sum_{m \in \Zbbn}  \frac{\lvert \Xi(m)\rvert}{n_m n_k} \Bigg)^2 \;.
\end{align*}
}
We now consider the first line of $\varepsilon_2$. First of all,
we estimate $\norm{k\cdot\mathcal{E}^*_\text{kin}(\Xi)(k)\Omega}$. Using Lemma \ref{lem:kingkong} and \eqref{eq:bstark}, as well as the fact that $\mathcal{E}_\text{c}(k,l)$ commutes with $\Ncal$, we find
\[
\begin{split}
  & \norm{k\cdot\mathcal{E}^*_\text{kin}(\Xi)(k)\Omega} \\
  & = \frac{1}{2} \sum_{l \in \Zbbn} \lvert \Xi(l)\rvert \int_0^\lambda \di \mu \norm{\Big( k\cdot b^*_l \mathcal{E}^*_\text{c}(k,-l) +  k\cdot \mathcal{E}^*_\text{c}(k,l) b_{-l}  \Big) T(\mu) \Omega} \\
  & \leq \!\!\!\sum_{l \in \Zbbn} \!\!\!\lvert \Xi(l)\rvert \! \int_0^\lambda \!\!\! \di \mu \Big( \norm{(\Ncal+1)^{1/2} k\cdot \mathcal{E}_\text{c}(k,l)  T(\mu) \Omega} + \new{\lvert k\rvert \frac{2 \Big(\left(\frac{3}{4\pi} \right)^{1/3} N^{1/3} + \lvert k\rvert \Big)}{n_k n_l}}\norm{ \Ncal b_l T(\mu)\Omega} \Big)\\
  & \leq \!\!\!\sum_{l \in \Zbbn} \!\!\!\lvert \Xi(l)\rvert \! \int_0^\lambda \!\!\! \di \mu \Big( \norm{k\cdot \mathcal{E}_\text{c}(k,l)(\Ncal+1)^{1/2}   T(\mu) \Omega} +  \new{\lvert k\rvert \frac{2 \Big(\left(\frac{3}{4\pi} \right)^{1/3} N^{1/3} + \lvert k\rvert \Big)}{n_k n_l}}\norm{ b_l \Ncal  T(\mu)\Omega}  \Big) \\
  & \leq \sum_{l \in \Zbbn} \lvert \Xi(l)\rvert \new{ 8 \lvert k\rvert \frac{4 \Big(\left(\frac{3}{4\pi} \right)^{1/3} N^{1/3} + \lvert k\rvert \Big)}{n_k n_l}} \sup_{\mu \in [0,1]} \norm{ (\Ncal+1)^{3/2}  T(\mu)\Omega} \;.
\end{split}
\]
Using this estimate together with $\langle \Omega, b_{-k} b^*_{-k}\Omega\rangle =1$ and \eqref{eq:bogtrafoerror} we find that the entire first line of $\varepsilon_2$ can be estimated to be
\begin{align*}
  & \left\lvert \int_0^1 \di\lambda \sum_{k \in \Zbbn} \Xi(k)\hbar^2 k\cdot \langle \Omega, \mathcal{E}_\text{kin}(\Xi)(k) \left(\cc{\cosh(\lambda\Xi)(-k)} b^*_{-k} + \mathcal{E}_{-k}^*(\lambda \Xi) \right) \Omega \rangle \right\rvert \\
  & \leq \int_0^1 \di\lambda \sum_{k \in \Zbbn} \lvert \Xi(k)\rvert \hbar^2 \norm{k\cdot\mathcal{E}^*_\text{kin}(\Xi)(k)\Omega} \Big( \cosh(\lambda \lvert \Xi(k)\rvert) \norm{b^*_{-k}\Omega} + \norm{\mathcal{E}^*_{-k}(\lambda\Xi) \Omega} \Big) \\
  & \leq \int_0^1 \di\lambda \sum_{k \in \Zbbn} \lvert \Xi(k)\rvert 2 \hbar^2 \norm{k\cdot\mathcal{E}^*_\text{kin}(\Xi)(k)\Omega} \\
  & \hspace{2cm} \times \Big(  \cosh(\lambda \lvert \Xi(k)\rvert) + 2\sup_{t \in [0,1]} \norm{(\Ncal+2)^{3/2}T(t)\Omega} e^{\lambda\lvert \Xi(k)\rvert} \sum_{m \in \Zbbn}  \frac{\lambda \lvert \Xi(m)\rvert}{n_m n_k}\Big) \\
  & \leq \int_0^1 \di\lambda \sum_{k \in \Zbbn} \lvert \Xi(k)\rvert 4 \hbar^2 \sum_{l \in \Zbbn} \lvert \Xi(l)\rvert   \new{\lvert k\rvert \frac{4 \Big(\left(\frac{3}{4\pi} \right)^{1/3} N^{1/3} + \lvert k\rvert \Big)}{n_k n_l}} \sup_{\mu \in [0,1]} \norm{ (\Ncal+1)^{3/2}  T(\mu)\Omega} \\
  & \hspace{2cm} \times \Big(  \cosh(\lambda \lvert \Xi(k)\rvert) + 2\sup_{t \in [0,1]} \norm{(\Ncal+2)^{3/2}T(t)\Omega} e^{\lambda\lvert \Xi(k)\rvert} \sum_{m \in \Zbbn}  \frac{\lambda \lvert \Xi(m)\rvert}{n_m n_k}\Big) \\
  & \leq \sum_{k \in \Zbbn} 4 \hbar^2 \sum_{l \in \Zbbn} \lvert \Xi(l)\rvert   \new{\lvert k\rvert \frac{4 \Big(\left(\frac{3}{4\pi} \right)^{1/3} N^{1/3} + \lvert k\rvert \Big)}{n_k n_l}} \sup_{\mu \in [0,1]} \norm{ (\Ncal+1)^{3/2}  T(\mu)\Omega} \\
  & \hspace{2cm} \times \Big(  \sinh(\lvert \Xi(k)\rvert) + 2\sup_{t \in [0,1]} \norm{(\Ncal+2)^{3/2}T(t)\Omega} e^{\lvert \Xi(k)\rvert} \sum_{m \in \Zbbn}  \frac{ \lvert \Xi(m)\rvert}{n_m n_k}\Big) \;.
\end{align*}
To get from the second last to the last line, we have estimated the $\lambda$ in the $m$-sum by $1$ and then integrated the $\cosh$ and the exponential explicitly.
\end{proof}

\section{Optimizing the Trial State}
We optimize the choice of the almost-Bogoliubov transform with respect to $\Xi$. We are later going to calculate $n_k^2$ and $k\cdot f(k)$ explicitly, and then we will see that both $\alpha_k$ and $\beta_k$, and therefore the whole energy correction, are of order $\hbar$. This also sets the scale for our error bounds: we have to show that all errors are at least as small as $o(\hbar)$.

\begin{prp}[Minimal almost-Bogoliubov Energy]\label{lem:minbogenergy}
Let $\hat{V}(k) > 0$. The lowest energy among almost-Bogoliubov trial states is found by taking for all $k$ the function $\Xi_0$ to be
\[
  \Xi_0(k) = -\frac{1}{2}\artanh\left(\frac{\beta_k}{\alpha_k}\right) = -\frac{1}{4} \log\left( \frac{1+\frac{\beta_k}{\alpha_k}}{1-\frac{\beta_k}{\alpha_k}} \right) \;.
\]
The minimal value of the functional defined in \eqref{eq:functionaldef} below is
\begin{equation}\label{eq:minenergy}
  \inf_{\Xi} E_\text{bosonized}(\Xi) = E_\text{bosonized}(\Xi_0) = \sum_{k \in \Zbbn} \frac{1}{2} \left( \sqrt{\alpha_k^2 - \beta_k^2} - \alpha_k \right) < 0
\end{equation}
with the coefficients
\begin{equation} \label{eq:alphabeta}
  \alpha_k := \hbar^2 k\cdot f(k) + \frac{1}{N} \hat{V}(k) n_k^2 > 0\;, \qquad \beta_k := \frac{1}{N} \hat{V}(k) n_k^2 > 0 \;.
\end{equation}
Note that $k\cdot f(k) > 0$, so $\alpha_k > \beta_k$.
\end{prp}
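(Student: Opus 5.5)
The functional $E_\text{bosonized}(\Xi)$ referred to in \eqref{eq:functionaldef} is, by Proposition~\ref{lem:interaction} and Proposition~\ref{lem:kinetic} (dropping the error terms $\varepsilon_1,\varepsilon_2$), the sum
\[
  E_\text{bosonized}(\Xi) = \sum_{k\in\Zbbn} \Big( \alpha_k \sinh(\lvert\Xi(k)\rvert)^2 + \beta_k \Re\Big(\frac{\cc{\Xi(k)}}{\lvert\Xi(k)\rvert}\Big)\sinh(\lvert\Xi(k)\rvert)\cosh(\lvert\Xi(k)\rvert)\Big),
\]
since the kinetic term contributes $\hbar^2 k\cdot f(k)\sinh^2$ and the interaction term contributes $\frac1N\hat V(k)n_k^2(\sinh^2 + \text{phase}\cdot\sinh\cosh)$. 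The plan is to check this bookkeeping first, using the pairing of $k$ and $-k$ with $\Xi(k)=\Xi(-k)$ and $n_k=n_{-k}$. I also need $f(-k)=-f(k)$, so that $k\cdot f(k)$ is even; this follows from the symmetry $(p,h)\mapsto(-h,-p)$ of the Fermi ball.

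The functional is a sum of independent terms, one per $k$. So the minimization decouples and I minimize each summand separately over $\Xi(k)\in\Cbb$. Write $\Xi(k)=r e^{i\theta}$ with $r\geq0$. Since $\beta_k>0$, the phase term $\beta_k\cos\theta\,\sinh r\cosh r$ is minimized at $\theta=\pi$, i.e.\ $\Xi(k)$ real and negative. Using $\sinh^2 r=\frac12(\cosh 2r-1)$ and $\sinh r\cosh r=\frac12\sinh 2r$, the summand becomes
\[
  g(r)=\tfrac12\big(\alpha_k\cosh 2r-\beta_k\sinh 2r-\alpha_k\big).
\]
Since $\alpha_k>\beta_k>0$, the function $g$ is strictly convex and coercive in $r$. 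Setting $g'(r)=0$ gives $\tanh 2r=\beta_k/\alpha_k$, so $r=\frac12\artanh(\beta_k/\alpha_k)$, which matches $\Xi_0$ with its sign. At this point
\[
  \cosh 2r=\frac{\alpha_k}{\sqrt{\alpha_k^2-\beta_k^2}},\qquad \sinh 2r=\frac{\beta_k}{\sqrt{\alpha_k^2-\beta_k^2}},
\]
so the minimum value is $\frac12\big(\sqrt{\alpha_k^2-\beta_k^2}-\alpha_k\big)$, which is negative. Summing over $k$ gives \eqref{eq:minenergy}; the infimum is attained and equals the sum of the minima because the terms are independent and each is bounded below.

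The remaining claims are the positivity statements. Positivity of $\beta_k$ is immediate from $\hat V(k)>0$ and $n_k^2>0$. For $k\cdot f(k)>0$, note that every admissible pair satisfies $p=h+k$, so $k\cdot(p+h)=\lvert p\rvert^2-\lvert h\rvert^2$. Since $p\notin B_\textnormal{F}$ and $h\in B_\textnormal{F}$, each term is positive. The set of admissible pairs is nonempty for $k\neq0$, since $n_k^2>0$. Hence $\alpha_k>\beta_k>0$.

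I expect no step to be a real obstacle. The only points needing care are checking the exact form of the functional (the factor $2\Re$ and the pairing of $\pm k$) against the two propositions, and justifying that the infimum over $\Xi$ is taken over a regime where the series converge. For the latter I would restrict to $\Xi$ with finitely many nonzero entries, or with summable $\lvert\Xi\rvert$, and pass to the limit.
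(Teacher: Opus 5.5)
Your proposal is correct and follows the paper's proof: decouple the functional in $k$, take $\Xi(k)$ real and negative, and minimize $\alpha_k\sinh(x)^2-\beta_k\sinh(x)\cosh(x)$ at $\tanh(2x)=\beta_k/\alpha_k$. Your convexity and coercivity argument, the identity $k\cdot(p+h)=\lvert p\rvert^2-\lvert h\rvert^2$ giving $k\cdot f(k)>0$, and the evenness check (so the per-$k$ minimizer respects $\Xi(k)=\Xi(-k)$) fill in points the paper leaves implicit, with one small correction: the symmetry giving $f(-k)=-f(k)$ is $(p,h)\mapsto(-p,-h)$ using $B_\textnormal{F}=-B_\textnormal{F}$, whereas your map $(p,h)\mapsto(-h,-p)$ would put $-p\notin B_\textnormal{F}$ in the hole slot.
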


\begin{proof}
From Lemma \ref{lem:interaction} and Lemma \ref{lem:kinetic} we have
\begin{align*}
  & \langle T \Omega, \left( \Hbb_0 + Q_N^{(0)} \right) T \Omega \rangle \\
  & = \frac{1}{N} \Re \sum_{k \in \Zbbn} \hat{V}(k) n_k^2 \left( \sinh(\lvert\Xi(k)\rvert)^2 + \frac{\cc{\Xi(k)}}{\lvert \Xi(k)\rvert} \sinh(\lvert \Xi(k)\rvert) \cosh(\lvert \Xi(k)\rvert)\right) \\
  & \quad + \sum_{k \in \Zbbn}  \hbar^2 k\cdot f(k) {\sinh(\lvert \Xi(k)\rvert)^2}  + \varepsilon_1 + 2 \Re \varepsilon_2\\
  & =: E_\text{bosonized}(\Xi) +  \varepsilon_1 + 2 \Re \varepsilon_2 \;.  \tagg{functionaldef}
\end{align*}
We minimize the functional $E_\text{bosonized}(\Xi)$. Due to the real part on the first line, the complex part of $\Xi(k)$ does not matter, so w.\,l.\,o.\,g.\ $\Xi(k) \in \Rbb$. We can optimize for every $k$ separately. The coefficients $\frac{1}{N}\hat{V}(k) n_k^2$ and $\frac{\hbar^2}{2} k \cdot f(k)$ are both non-negative, and also $\sinh(\lvert \Xi(k)\rvert)^2$ and $\sinh(\lvert \Xi(k)\rvert)\cosh(\lvert \Xi(k)\rvert)$ are also non-negative. So the only way of obtaining a result smaller than zero is obtained for $\cc{\Xi(k)}/\lvert \Xi(k)\rvert = -1$, that is, for $\Xi(k)$ a negative function.
We have
\[
  E_\text{bosonized}(\Xi) = \sum_{k \in \Zbbn} g_k(\lvert \Xi(k)\rvert) \;,
\qquad
  g_k(x) := \alpha_k \sinh(x)^2 - \beta_k \sinh(x) \cosh(x)
\]
with the coefficients $\alpha_k$ and $\beta_k$ as given above.
Therefore we have to minimize $g_k(x)$ with respect to $x \geq 0$, with $k$ as a parameter. We determine the critical points:
\begin{align*}
  0 \overset{!}{=} g'_k(x_k) & = \alpha_k 2 \sinh(x_k) \cosh(x_k) - \beta_k \left( \cosh(x_k)^2 + \sinh(x_k)^2 \right) \\
  & = \alpha_k \sinh(2x_k) - \beta_k \cosh(2x_k) \;.
\end{align*}
This has the positive (since $0 < \frac{\beta_k}{\alpha_k} < 1$) solution
\begin{align*}
  2x_k = \artanh\left( \frac{\beta_k}{\alpha_k} \right) \;.
\end{align*}
We plug this into $g_k(x) = \alpha_k \frac{1}{2}\left( \cosh(2x)-1 \right) - \beta_k \frac{1}{2} \sinh(2x)$ and use the two identities $\cosh(\artanh(A)) = 1/\sqrt{1-A^2}$ and $\sinh(\artanh(A)) = A/\sqrt{1-A^2}$ to obtain
\[
  g_k(x_k) = \frac{1}{2} \left(\alpha_k \frac{1}{\sqrt{1-(\beta_k/\alpha_k)^2}} - \alpha_k - \beta_k \frac{\beta_k/\alpha_k}{\sqrt{1-(\beta_k/\alpha_k)^2}} \right) = \frac{1}{2} \left( \sqrt{\alpha_k^2 - \beta_k^2} - \alpha_k \right) \;.
\]
This confirms \eqref{eq:minenergy}.
\end{proof}

\begin{lem}[Constant for the Kinetic Energy]
\label{lem:constants}
The function $f$ defined above satisfies
\begin{equation}\label{eq:constantfk}
  k\cdot f(k) = \lvert k\rvert N^{1/3} \left( \frac{4}{3\sqrt{\pi}} \right)^{2/3}  + \Ocal(N^{1/6})\;.
\end{equation}
Moreover
\[
  \beta_k = \hbar \left( \frac{3}{4}\sqrt{\pi} \right)^{2/3} \hat{V}(k) \lvert k\rvert + \Ocal(N^{-1/2})\;, \qquad \alpha_k = \hbar \lvert k\rvert \left( \frac{4}{3\sqrt{\pi}} \right)^{2/3}+ \beta_k + \Ocal(N^{-1/2})\;.
\]
\end{lem}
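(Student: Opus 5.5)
The plan is to reduce $k\cdot f(k)$ to a weighted lattice point count over the same lens-shaped region that already appeared in Lemma~\ref{lem:normalizationconstant}. It is then evaluated by the same continuum-approximation strategy, and the statements about $\alpha_k,\beta_k$ follow by inserting \eqref{eq:normalconst}. The first, purely algebraic, step uses the constraint $p=h+k$, so that $k=p-h$ and
\[
  k\cdot(p+h) = (p-h)\cdot(p+h) = \lvert p\rvert^2-\lvert h\rvert^2 = 2h\cdot k + \lvert k\rvert^2 \;.
\]
Hence
\[
  k\cdot f(k) = \frac{1}{n_k^2}\sum_{h\in L_k} \big(2h\cdot k+\lvert k\rvert^2\big),
  \qquad L_k := \{h\in\Zbb^3 : \lvert h\rvert\le k_\textnormal{F},\ \lvert h+k\rvert> k_\textnormal{F}\} \;.
\]
Here $L_k$ is exactly the shaded lens of Figure~\ref{fig:normalization}, and its cardinality is $n_k^2$. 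In particular each summand is positive, which also gives $k\cdot f(k)>0$.

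Next I compute the continuum analogue $\int_{L_k}(2h\cdot k+\lvert k\rvert^2)\,\di h$. In polar coordinates with $\theta$ the angle between $h$ and $k$, the lens is, up to corrections of relative order $\lvert k\rvert/k_\textnormal{F}$, a shell on the hemisphere $\cos\theta>0$ of thickness $\lvert k\rvert\cos\theta$. On this shell $2h\cdot k\approx 2k_\textnormal{F}\lvert k\rvert\cos\theta$. This yields
\[
  \int_{\cos\theta>0} 2k_\textnormal{F}\lvert k\rvert\cos\theta\cdot\lvert k\rvert\cos\theta\, k_\textnormal{F}^2\,\di\Omega = \frac{4\pi}{3}k_\textnormal{F}^3\lvert k\rvert^2 ,
\]
with corrections of order $k_\textnormal{F}^2\lvert k\rvert^3$. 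Alternatively, the integral can be computed exactly as a polynomial in $k_\textnormal{F}$ and $\lvert k\rvert$, as was done for the lens volume. Dividing by $n_k^2=\pi k_\textnormal{F}^2\lvert k\rvert+\Ocal(N^{1/2})$ gives $\frac43 k_\textnormal{F}\lvert k\rvert$. With $k_\textnormal{F}=(\frac{3}{4\pi}N)^{1/3}+\Ocal(1)$ we get $\frac43(\frac{3}{4\pi})^{1/3}=(\frac{16}{9\pi})^{1/3}=(\frac{4}{3\sqrt\pi})^{2/3}$, which is the claimed constant.

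The main obstacle is controlling the difference between the lattice sum and the integral, now with the weight $2h\cdot k+\lvert k\rvert^2$. I would handle it by a layer-cake decomposition. I write the weight as $\int_0^\infty \chi(2h\cdot k+\lvert k\rvert^2>s)\,\di s$, and note that each superlevel set intersected with $L_k$ is the lens cut by a half-space. It is therefore again a convex-type body, to which the counting results of \cite{Hla50} (after smoothing corners, as in Lemma~\ref{lem:normalizationconstant}) apply with error $\Ocal(k_\textnormal{F}^{3/2})$. Integrating over $s\in[0,\Ocal(k_\textnormal{F}\lvert k\rvert)]$ gives a total error $\Ocal(k_\textnormal{F}^{5/2})$ for the numerator. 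A simpler fallback is summation by parts in the radial variable combined with the ball count \cite{Hea99}. In either case the error after dividing by $n_k^2\sim k_\textnormal{F}^2$ is $\Ocal(k_\textnormal{F}^{1/2})=\Ocal(N^{1/6})$. The relative error $\Ocal(N^{-1/6})$ of $n_k^2$ multiplies a leading term of order $N^{1/3}$, so it contributes only $\Ocal(N^{1/6})$ as well. This establishes \eqref{eq:constantfk}.

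Finally, $\beta_k=\frac1N\hat V(k)n_k^2$ follows directly from \eqref{eq:normalconst}: we get $\beta_k = \hat V(k)\,\hbar\,(\frac{9\pi}{16})^{1/3}\lvert k\rvert + \Ocal(N^{-1/2})$, and $(\frac{9\pi}{16})^{1/3}=(\frac34\sqrt\pi)^{2/3}$. For $\alpha_k=\hbar^2 k\cdot f(k)+\beta_k$ I multiply \eqref{eq:constantfk} by $\hbar^2=N^{-2/3}$, which gives $\hbar\lvert k\rvert(\frac{4}{3\sqrt\pi})^{2/3}+\Ocal(N^{-1/2})$.
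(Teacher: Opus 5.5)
Your leading-order computation matches the paper's. The paper also writes the numerator as a sum of $k\cdot(2h+k)$ over the crescent and replaces it by an integral over the shell $k_\textnormal{F}-\lvert k\rvert\cos\theta<r\le k_\textnormal{F}$. It obtains $N\lvert k\rvert^2$, divides by $n_k^2$ from \eqref{eq:normalconst}, and inserts the result into $\alpha_k,\beta_k$ as you do. The paper justifies the sum-to-integral replacement only by a Riemann-sum remark.

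Where you try to do better, the layer-cake control of the lattice error, the argument fails as justified. The set $L_k=B_\textnormal{F}\setminus(B_\textnormal{F}-k)$ is the crescent, not the convex lens $B_\textnormal{F}\cap(B_\textnormal{F}-k)$, so its superlevel sets are not convex. To use Hlawka you must write each one as a difference of convex bodies, $(B_\textnormal{F}\cap H_s)\setminus(B_\textnormal{F}\cap(B_\textnormal{F}-k)\cap H_s)$ with $H_s=\{2h\cdot k+\lvert k\rvert^2>s\}$. Both pieces then have a flat disc face of radius $\sim k_\textnormal{F}$ orthogonal to $k$. Hlawka's estimate needs nonvanishing curvature, and smoothing corners does not remove a flat face. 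Since $k\in\Zbb^3$, that face lies in a lattice plane $\{h\cdot k=c\}$ at a discrete set of levels $s$, and each such plane carries of order $k_\textnormal{F}^2$ lattice points for fixed $k$. So the only uniform per-level bound this decomposition gives is of order $k_\textnormal{F}^2$. Integrated over $s\in[0,2k_\textnormal{F}\lvert k\rvert+\lvert k\rvert^2]$, that bound is of the same order in $N$ as the main term $\tfrac{4\pi}{3}k_\textnormal{F}^3\lvert k\rvert^2$. The two flat faces do cancel except on an annulus of area $\pi s$, but you would have to prove that by hand; the cited result does not give it. Your radial summation-by-parts fallback is also not worked out: the weight is not radial and the region involves a shifted ball.

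The simplest repair removes lattice counting from this lemma entirely. Set $g(h):=\lvert h+k\rvert^2-\lvert h\rvert^2$ and $S:=\{h\in B_\textnormal{F}: h+k\in B_\textnormal{F}\}$. Because $B_\textnormal{F}=-B_\textnormal{F}$, the map $h\mapsto -h-k$ is an involution of $S$, and $g(-h-k)=-g(h)$, so $\sum_{h\in S}g(h)=0$. Since $L_k=B_\textnormal{F}\setminus S$ and $\sum_{h\in B_\textnormal{F}}h=0$,
\[
  n_k^2\, k\cdot f(k)=\sum_{h\in B_\textnormal{F}\setminus S} g(h)=\sum_{h\in B_\textnormal{F}}\big(2h\cdot k+\lvert k\rvert^2\big)=N\lvert k\rvert^2
\]
holds exactly. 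Hence $k\cdot f(k)=N\lvert k\rvert^2/n_k^2$, and \eqref{eq:constantfk} with error $\Ocal(N^{1/6})$ follows directly from \eqref{eq:normalconst}. Your derivation of $\alpha_k$ and $\beta_k$ then goes through unchanged. The same identity would also replace the paper's Riemann-sum step.
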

\begin{proof}
We have
\[
  n_k^2 k\cdot f(k) = \sum_{\substack{p\in B_\textnormal{F}^c\\h\in B_\textnormal{F}}} \delta_{p-h,k} k\cdot (2h+k) \simeq 2 \sum_{h \in B_\textnormal{F}}\chi(\lvert h+k\rvert > R) k\cdot h \;,
\]
where $R = \left(\frac{3}{4\pi}\right)^{1/3} N^{1/3}$ is the leading order of the Fermi momentum. Now we use an integral approximation. Rescaling in the sum $h = N^{1/3} \tilde h$, where $\tilde h$ now corresponds to Riemann cubes of side length $N^{-1/3}$, the indicated errors are one power of $N^{-1/3}$ smaller than the main terms. We use the symbol $\simeq$ where these error terms have been suppressed.)
\begin{align*}
  & 2 \sum_{h \in B_\textnormal{F}}\chi(\lvert h+k\rvert > R) k\cdot h\\
  & \simeq 2N \int_{\lvert \tilde{h}\rvert \leq \left(\frac{3}{4\pi}\right)^{1/3}} \di^3 \tilde{h}\ \chi\left(\lvert \tilde h + \frac{k}{N^{1/3}}\rvert > \left(\frac{3}{4\pi}\right)^{1/3}\right) k\cdot \left(N^{1/3}\tilde h\right) \tagg{step0}\\
  & = 2 N^{4/3} \int_0^{\left(\frac{3}{4\pi}\right)^{1/3}} r^2 \di r \int_0^{\frac{\pi}{2}+\text{small}} \sin(\theta) \di \theta \int_0^{2\pi} \di \varphi\,\chi\left( \lvert \tilde h + \frac{k}{N^{1/3}} \rvert > \left( \frac{3}{4\pi} \right)^{1/3} \right) r \cos(\theta)\lvert k\rvert \tagg{step1}\\
  & \simeq 2 (2\pi) \lvert k\rvert N^{4/3}  \int_0^{\frac{\pi}{2}} \sin(\theta) \di \theta \cos(\theta) \int_{\left(\frac{3}{4\pi}\right)^{1/3} - \lvert k\rvert \cos(\theta) N^{-1/3}}^{\left(\frac{3}{4\pi}\right)^{1/3}} r^3 \di r \tagg{step2}\\
  & = 2(2\pi) \lvert k\rvert N^{4/3} \int_0^{\frac{\pi}{2}} \di\theta \sin(\theta)\cos(\theta) \frac{1}{4} \left( \left(\frac{3}{4\pi}\right)^{4/3} - \left(\left(\frac{3}{4\pi}\right)^{1/3} - \frac{\lvert k\rvert \cos(\theta)}{N^{1/3}}\right)^4 \right) \tagg{step3}\\
  & = 2(2\pi) \lvert k\rvert N^{4/3}  \left( \frac{1}{4\pi} \lvert k\rvert N^{-1/3} + \mathcal{O}(N^{-2/3}) \right) \tagg{step4}\\
  & \simeq \lvert k\rvert^2 N \;. \tagg{laststep}
\end{align*}
To get from \eqref{eq:step0} to \eqref{eq:step1}, we parametrized the vector $\tilde h$ in spherical coordinates by its length $r$, the angle $\theta$ measured between $\tilde h$ and $k$, and the remaining rotation by the angle $\varphi$.
To get from \eqref{eq:step1} to \eqref{eq:step2}, we wrote the condition of the characteristic function as
\[
  r^2  + 2r \frac{\lvert k\rvert}{N^{1/3}} \cos(\theta) + \frac{\lvert k\rvert^2}{N^{2/3}}> \left( \frac{3}{4\pi} \right)^{2/3} \;,
\]
which is satisfied for
\[
  r \geq \sqrt{\left( \frac{3}{4\pi} \right)^{2/3} - \frac{\lvert k\rvert^2}{N^{2/3}} (\sin \theta)^2} - \frac{\lvert k\rvert}{N^{1/3}} \cos(\theta) \simeq \left( \frac{3}{4\pi} \right)^{1/3} - \frac{\lvert k\rvert}{N^{1/3}} \cos(\theta) \;.
\]
Moreover from \eqref{eq:step3} to \eqref{eq:step4} we used the asymptotic relation $a^4 - (a-x)^4 = 4 a^3 x + \mathcal{O}(x^2)$ as $x \to 0$, given $a \in \Rbb$.
Dividing \eqref{eq:laststep} by $n_k^2 = \lvert k\rvert N \hbar \left( \frac{3}{4} \sqrt{\pi}\right)^{2/3} + \mathcal{O}(N^{1/2})$ we obtain the claimed formula for $k\cdot f(k)$. Using this result in \eqref{eq:alphabeta} we obtain the claims for $\alpha_k$ and $\beta_k$.
\end{proof}

\section{Estimating the Error Terms}
We have estimated all error terms through the expectation value
\[
  \sup_{t \in [0,1]} \langle T(t)\Omega,(\Ncal+2)^3 T(t) \Omega \rangle
\]
of the fermionic number operator $\Ncal$. We estimate this expectation value using Gr\"onwall's lemma as in the context of the time-dependent problem \cite{BPS, BPS2, BPSetal, coulomb,BBMN25} and of the momentum distribution \cite{BL25,BLN25}.

\begin{prp}[Bound on the Fermionic Particle Number]\label{lem:particlenumber}
For all $n \in \Nbb$ and for all $\psi \in \fock$ we have 
\[
  \sup_{t \in [0,1]} \langle T(t)\psi, (\Ncal+1)^n T(t) \psi \rangle \leq e^{C_n(\Xi)} \langle \psi, (\Ncal+1)^n \psi \rangle \;,
\]
where the constant is
\[
  C_n(\Xi) := 8 n (5^n)\sum_{k \in \Zbbn} \lvert \Xi(k)\rvert \;.
\]
In particular, for fixed $n$ and in the vacuum $\psi = \Omega$, the bound is of order $1$.
\end{prp}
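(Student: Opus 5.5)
We would prove the bound with a Grönwall argument for the function
\[
  \phi(t) := \langle T(t)\psi, (\Ncal+1)^n T(t)\psi\rangle , \qquad t\in[0,1].
\]
Since $T(t)=e^{tB}$ with $B$ anti-self-adjoint, we have
\[
  \phi'(t) = \langle T(t)\psi, [(\Ncal+1)^n, B]\, T(t)\psi\rangle .
\]
The first task is therefore to compute this commutator.

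Each $b^*_k$ creates a particle and a hole together, so $\Ncal b^*_k = b^*_k(\Ncal+2)$. Hence $f(\Ncal)b^*_kb^*_{-k} = b^*_kb^*_{-k}f(\Ncal+4)$, and we obtain
\[
  [(\Ncal+1)^n, b^*_kb^*_{-k}] = b^*_kb^*_{-k}\big((\Ncal+5)^n-(\Ncal+1)^n\big),
\]
and analogously for the adjoint term. Taking the hermitian conjugate into account, this gives
\[
  \phi'(t) = 2\Re \frac12\sum_{k\in\Zbbn}\Xi(k)\,\big\langle T(t)\psi,\, b^*_kb^*_{-k}\big((\Ncal+5)^n-(\Ncal+1)^n\big)T(t)\psi\big\rangle .
\]

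To estimate each term, we split the polynomial weight symmetrically between the two sides of the inner product. Set $D:=(\Ncal+5)^n-(\Ncal+1)^n$, which is a positive function of $\Ncal$. Moving $(\Ncal+1)^{n/2}$-type weights through $b^*_kb^*_{-k}$ only shifts $\Ncal$ by constants. With $\xi:=T(t)\psi$ we bound each summand by
\[
  \big\lvert \big\langle (\Ncal+c)^{-1/2}b_{-k}b_k (\Ncal+1)^{n/2}\xi,\ (\Ncal+c)^{1/2}(\Ncal+1)^{-n/2}D\,\xi \big\rangle\big\rvert
\]
for a suitable shift $c$. Then we use \eqref{eq:bstark}: $\norm{b_k\phi}\le\norm{\Ncal^{1/2}\phi}$ and $\norm{b^*_k\phi}\le 2\norm{(\Ncal+1)^{1/2}\phi}$. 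After pulling number operators through, the two $b$-operators together cost at most a factor $\sim 4(\Ncal+3)$. The weight $D$ is bounded by $n\,4\,(\Ncal+5)^{n-1}\le 4n\,5^{n-1}(\Ncal+1)^{n-1}$ via the mean value theorem and $(\Ncal+5)\le5(\Ncal+1)$. Combining these estimates, each summand should be at most $8n5^n\lvert\Xi(k)\rvert\,\langle\xi,(\Ncal+1)^n\xi\rangle$, which yields
\[
  \lvert\phi'(t)\rvert \le C_n(\Xi)\,\phi(t).
\]
Grönwall then gives $\phi(t)\le e^{tC_n(\Xi)}\phi(0)\le e^{C_n(\Xi)}\langle\psi,(\Ncal+1)^n\psi\rangle$, uniformly in $t\in[0,1]$. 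For $\psi=\Omega$ we have $\phi(0)=1$, and $\sum_k\lvert\Xi(k)\rvert$ is $N$-independent for the optimal $\Xi_0$ (since $\alpha_k,\beta_k\sim\hbar\lvert k\rvert$, the ratio is $N$-independent). This gives the order-one statement.

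The main obstacle is the careful bookkeeping of the shifted powers of $\Ncal$. The factor $D$ has degree $n-1$ while the two $b$'s contribute degree one, and the shifts must be tracked to land exactly on $(\Ncal+1)^n$ with the constant $8n5^n$ rather than something worse. I would first do the estimate with generous constants, using $\Ncal+j\le (j+1)(\Ncal+1)$ for all shifts, and then check that the constant fits. A secondary technical point is justifying the differentiation on the domain of $\Ncal^n$. Here I would argue that $B$ maps states with at most $M$ particles to states with at most $M+4$ particles, so on finite-particle vectors everything is a finite computation. Then I would extend by density/monotone convergence, consistent with the paper's convention that the right side is $+\infty$ otherwise.
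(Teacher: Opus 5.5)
Your strategy is the paper's: Gr\"onwall for $\phi(t)$, the shift relation $\Ncal b^*_k=b^*_k(\Ncal+2)$, a symmetric distribution of the weight, the bounds \eqref{eq:bstark}, and $\Ncal+5\le 5(\Ncal+1)$ at the end. Your closed form $D$ with $D\le 4n(\Ncal+5)^{n-1}$ is just the summed version of the paper's telescoping sum, since $\sum_{j=0}^{n-1}(\Ncal+1)^j b^*_kb^*_{-k}(\Ncal+1)^{n-1-j}=\frac14 b^*_kb^*_{-k}D$, and your formula for $\phi'(t)$ is correct.

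However, the one explicit estimate you give for the decisive step does not work. First, it is not equal to the summand: since $b_{-k}b_k(\Ncal+1)^{n/2}=(\Ncal+5)^{n/2}b_{-k}b_k$, your inner product is $\langle b_{-k}b_k\xi,(\Ncal+5)^{n/2}(\Ncal+1)^{-n/2}D\xi\rangle$. More seriously, even with the shift repaired the split is unbalanced. The two annihilators cost one power of $\Ncal$, so the left vector carries degree $-\frac12+1+\frac n2=\frac{n+1}{2}$ and the right one $\frac12-\frac n2+(n-1)=\frac{n-1}{2}$. Cauchy--Schwarz then gives, up to constants, $\norm{(\Ncal+1)^{\frac{n+1}{2}}\xi}\,\norm{(\Ncal+1)^{\frac{n-1}{2}}\xi}$. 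This is bounded \emph{below} by $\phi(t)$ and is not controlled by any multiple of it (take $\xi$ spread over two far-apart particle-number sectors), so you do not get $\lvert\phi'\rvert\le C\phi$. With your $(\Ncal+c)^{\mp1/2}$ factors the exponent must be $\frac{n-1}{2}$, not $\frac n2$, and $(\Ncal+1)^{-n/2}$ must become $(\Ncal+5)^{-\frac{n-1}{2}}$. Also, taken at face value, your bookkeeping $4(\Ncal+3)\cdot 4n5^{n-1}(\Ncal+1)^{n-1}$ with $\Ncal+3\le3(\Ncal+1)$ gives $48n5^{n-1}$, which exceeds $8n5^n=40n5^{n-1}$.

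The clean version, which is what the paper does for each $j$, puts one pair operator on each side. Using $b^*_kb^*_{-k}(\Ncal+5)^s=(\Ncal+1)^s b^*_kb^*_{-k}$,
\[
  \langle \xi, b^*_kb^*_{-k}D\xi\rangle=\big\langle b_k(\Ncal+1)^{\frac{n-1}{2}}\xi,\; b^*_{-k}(\Ncal+5)^{-\frac{n-1}{2}}D\,\xi\big\rangle .
\]
Now \eqref{eq:bstark} together with $(\Ncal+5)^{-\frac{n-1}{2}}D\le 4n(\Ncal+5)^{\frac{n-1}{2}}$ bounds the modulus by $8n\langle\xi,(\Ncal+5)^n\xi\rangle\le 8n5^n\phi(t)$. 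Since the $\frac12$ in $B$ cancels the $2$ of the real part, this is exactly $\lvert\phi'(t)\rvert\le C_n(\Xi)\phi(t)$.

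Finally, your regularity argument fails as stated, because $T(t)\psi$ is not a finite-particle vector even when $\psi$ is. Instead, note that for fixed $N$ one has $\norm{b^*_k}\le n_k\le\sqrt N$, so $B$ is bounded when $\sum_k\lvert\Xi(k)\rvert<\infty$. Then run the same estimate with the bounded weight $(\Ncal+1)^n(1+\epsilon\Ncal)^{-n}$, uniformly in $\epsilon$, before letting $\epsilon\to0$. The paper passes over this point.
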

\begin{proof}
Using the CAR for the fermionic operators we obtain
\begin{equation}\label{eq:commutators}
  [\Ncal,b^*_k] = 2b^*_k \quad \text{and} \quad b^*_k b^*_{-k} (\Ncal+4) = \Ncal b^*_k b^*_{-k} \;.
\end{equation}
We calculate the derivative of the expectation value:
\begin{align*}
  & \left\lvert \frac{\di}{\di t} \langle T(t)\psi, (\Ncal+1)^n T(t)\psi \rangle \right \rvert \\
  & = \left\lvert \langle T(t)\psi, \sum_{j=0}^{n-1} (\Ncal+1)^j [\Ncal,B](\Ncal+1)^{n-j-1} T(t)\psi \rangle \right \rvert \\
  & = \left \lvert 2 \sum_{k \in \Zbbn} \Xi(k) \sum_{j=0}^{n-1} \langle T(t)\psi, (\Ncal+1)^j b^*_k b^*_{-k} (\Ncal+1)^{n-j-1} T(t)\psi\rangle + \pluscc \right\rvert;
  \intertext{we now insert $\id = (\Ncal+5)^{\frac{n-1}{2}-j} (\Ncal+5)^{j-\frac{n-1}{2}}$ after $b^*_k b^*_{-k}$ and commute in order to distribute the powers of the number operator equally onto both arguments of the scalar product:}
  & = \left \lvert 4 \Re\!\!\!\!\!\! \sum_{k \in \Zbbn}\!\!\!\!\! \Xi(k) \sum_{j=0}^{n-1} \langle T(t)\psi, (\Ncal+1)^j (\Ncal+1)^{\frac{n-1}{2}-j} b^*_k b^*_{-k} (\Ncal+5)^{j-\frac{n-1}{2}}(\Ncal+1)^{n-j-1} T(t)\psi\rangle \right\rvert \\
  & \leq 4\!\!\!\! \sum_{k \in \Zbbn}\!\! \lvert \Xi(k)\rvert \sum_{j=0}^{n-1} \norm{b_k (\Ncal+1)^{\frac{n-1}{2}-j} (\Ncal+1)^j T(t)\psi} \norm{b^*_{-k} (\Ncal+5)^{j-\frac{n-1}{2}}(\Ncal+1)^{n-j-1} T(t)\psi} \\
  & \leq 8\sum_{k \in \Zbbn} \lvert \Xi(k)\rvert \sum_{j=0}^{n-1} \norm{\Ncal^{1/2} (\Ncal+1)^{\frac{n-1}{2}-j} (\Ncal+1)^j T(t)\psi}  \norm{(\Ncal+1)^{1/2} (\Ncal+5)^{\frac{n}{2}-\frac{1}{2}} T(t)\psi} \\
  & \leq 8\sum_{k \in \Zbbn} \lvert \Xi(k)\rvert \ n\, \langle T(t)\psi, (\Ncal+5)^n T(t)\psi\rangle \\
  & \leq C_n(\Xi)\langle T(t)\psi, (\Ncal+1)^n T(t)\psi\rangle  \;.
\end{align*}
The claim now follows through Gr\"onwall's Lemma.
\end{proof}

\begin{lem}[Collected Error Estimates]\label{lem:collectederrors}
Let $c := 4 \left(\frac{9 \pi}{16} \right)^{2/3}$. Assume that the following quantities are finite:
\begin{align*}
  A_1 & := \sum_{k \in \Zbbn} \left\lvert \log\left( 1+c\hat{V}(k) \right) \right\rvert  \sqrt{1+c\hat{V}(k)}\sqrt{\lvert k\rvert} \;, &  A_2& := \sum_{k \in \Zbbn} \lvert \hat{V}(k)\rvert \sqrt{1+c\hat{V}(k)} \sqrt{\lvert k\rvert} \;, \\
  A_3 &:= \sum_{k \in \Zbbn} \sqrt{1+c\hat{V}(k)} \lvert k\rvert^{3/2} \;.
\end{align*}
Then there exists $C>0$, depending only on $A_1$ through $A_3$, such that the sum of all error terms is
\[
  \lvert \varepsilon_1 + 2 \Re \varepsilon_2 \rvert \leq \frac{C}{N}  \;.
\]
\end{lem}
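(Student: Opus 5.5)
The plan is to evaluate the bounds on $\varepsilon_1$ (Proposition~\ref{lem:interaction}) and $\varepsilon_2$ (Proposition~\ref{lem:kinetic}) at the optimal choice $\Xi = \Xi_0$ from Proposition~\ref{lem:minbogenergy}. I will then insert the explicit asymptotics of $n_k$, $k\cdot f(k)$, $\alpha_k$, $\beta_k$ from Lemma~\ref{lem:normalizationconstant} and Lemma~\ref{lem:constants}, and check that every term is $\mathcal{O}(N^{-1})$ with a constant controlled by $A_1,A_2,A_3$.

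\textbf{Step 1: the optimizer $\Xi_0$.} By Lemma~\ref{lem:constants}, $\beta_k/\alpha_k \to c\hat{V}(k)/(4+c\hat V(k))\cdot$(const.) up to $\mathcal{O}(N^{-1/6})$ relative errors, with $c = 4(9\pi/16)^{2/3}$. The normalization is chosen so that
\[
\frac{1+\beta_k/\alpha_k}{1-\beta_k/\alpha_k} = 1 + c\hat V(k) + o(1).
\]
Hence $\lvert \Xi_0(k)\rvert = \tfrac14\lvert\log(1+c\hat V(k))\rvert + o(1)$ and $e^{\lvert\Xi_0(k)\rvert} \lesssim (1+c\hat V(k))^{1/4}$ or its inverse power. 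The assumption $\inf\hat V \ge -\tfrac14(16/(9\pi))^{2/3}$ is exactly what keeps $1+c\hat V(k)$ away from zero, so the logarithm is finite. I will also need $\cosh(\lvert\Xi_0(k)\rvert)$ and $\sinh(\lvert\Xi_0(k)\rvert)$, which are bounded by $\sqrt{1+c\hat V(k)}$ up to constants. I expect these to produce the factors $\sqrt{1+c\hat V(k)}$ appearing in $A_1$ through $A_3$.

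\textbf{Step 2: number operator and summable weights.} By Proposition~\ref{lem:particlenumber} with $\psi=\Omega$, every factor $\sup_t\langle T(t)\Omega,(\Ncal+2)^3T(t)\Omega\rangle$ is at most $8e^{C_3(\Xi_0)}$. Here $C_3(\Xi_0)\lesssim\sum_k\lvert\log(1+c\hat V(k))\rvert$, which is finite and bounded by $A_1$. Next, $n_k\simeq C\sqrt{\lvert k\rvert}N^{1/3}$, so
\[
\sum_m\frac{\lvert\Xi_0(m)\rvert}{n_mn_k}\lesssim \frac{N^{-2/3}}{\sqrt{\lvert k\rvert}}\sum_m\frac{\lvert\Xi_0(m)\rvert}{\sqrt{\lvert m\rvert}}\lesssim \frac{A_1N^{-2/3}}{\sqrt{\lvert k\rvert}}.
\]

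\textbf{Step 3: power counting, term by term.} In $\varepsilon_1$ the prefactor is $N^{-1}\hat V(k)n_k^2\sim \hat V(k)\lvert k\rvert N^{-1/3}$. The linear error term carries one factor $N^{-2/3}/\sqrt{\lvert k\rvert}$, giving $N^{-1}\sum_k\lvert\hat V(k)\rvert\sqrt{\lvert k\rvert}\sqrt{1+c\hat V(k)}\lesssim A_2N^{-1}$. The quadratic error term is even smaller, of order $N^{-5/3}$.

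In $\varepsilon_2$ we have $\hbar^2 k\cdot f(k)\sim N^{-1/3}\lvert k\rvert$, so the first and last lines are handled exactly as in $\varepsilon_1$. The first line uses $\lvert\Xi_0(k)\rvert\sinh$ and the $A_1$-type sum. The middle line carries $\hbar^2 N^{1/3}\lvert k\rvert/(n_kn_l)\sim N^{-1}\sqrt{\lvert k\rvert}/\sqrt{\lvert l\rvert}$; together with the extra $\lvert k\rvert$ inside $(N^{1/3}+\lvert k\rvert)$ this gives the weights $\lvert k\rvert^{1/2}$ and $\lvert k\rvert^{3/2}$. The $l$-sum is controlled by $A_1$, and the $k$-sum weighted by $\sinh\lesssim\sqrt{1+c\hat V}$ is controlled by $A_3$. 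I will absorb cross factors such as $A_1A_3$ into the constant $C$.

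\textbf{Main obstacle.} I expect the hardest part to be Step 1: making the replacement of $\Xi_0(k)$ by its limit uniform in $k$. The lattice-counting errors of order $N^{-1/6}$ relative are only uniform for bounded $\lvert k\rvert$. For large $\lvert k\rvert$ I would instead use a crude bound directly from the definition of $\alpha_k,\beta_k$: since $n_k^2$ and $k\cdot f(k)$ are both positive and comparable to $\lvert k\rvert N^{2/3}$ and $N^{1/3}\lvert k\rvert$ up to constants, we get $1-\beta_k/\alpha_k\gtrsim 1/(1+c\hat V(k))$ uniformly. This suffices for the $\sqrt{1+c\hat V(k)}$ bounds.
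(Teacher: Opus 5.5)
Your proposal is correct and is essentially the paper's proof. You fix $\Xi=\Xi_0$ and bound every moment $\sup_{t}\langle T(t)\Omega,(\Ncal+2)^3T(t)\Omega\rangle$ by $8e^{C_3(\Xi_0)}$ via Proposition~\ref{lem:particlenumber}, with $C_3(\Xi_0)\lesssim\sum_k\lvert\log(1+c\hat V(k))\rvert\lesssim A_1$; you then power-count the five error terms from Propositions~\ref{lem:interaction} and~\ref{lem:kinetic} using $n_k\sim\sqrt{\lvert k\rvert}N^{1/3}$ and $\hbar^2k\cdot f(k)\sim N^{-1/3}\lvert k\rvert$, which gives contributions of order $N^{-1}$ or $N^{-5/3}$ bounded by products of the $A_i$.

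Two minor corrections. First, the last line of $\varepsilon_2$ has no factor $\hat V(k)$, so its $k$-sum $\sum_k e^{2\lvert\Xi_0(k)\rvert}$ is controlled by $A_3$ (as in the paper), not ``exactly as in $\varepsilon_1$''. Second, your uniformity-in-$k$ argument, which the paper skips by writing $\simeq$, should not claim two-sided comparability: for $\lvert k\rvert>2k_\textnormal{F}$ one has $n_k^2=N$ and $k\cdot f(k)=\lvert k\rvert^2$. The one-sided bound you actually need, $\sup_k n_k^2/(N^{1/3}\,k\cdot f(k))<\infty$, does hold.
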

\begin{proof}
We consider the Bogoliubov transformation found in Proposition \ref{lem:minbogenergy}, defined by
\[
  \Xi_0(k) = -\frac{1}{2}\artanh\left(\frac{\beta_k}{\alpha_k}\right) = -\frac{1}{4} \log\left( \frac{1+\frac{\beta_k}{\alpha_k}}{1-\frac{\beta_k}{\alpha_k}} \right) \;,
\]
with $\beta_k$, $\alpha_k$, and $\lvert k \cdot f(k)\rvert$ as computed in Lemma~\ref{lem:constants}. Using Proposition \ref{lem:particlenumber} we get
\[
  \sup_{t \in [0,1]} \langle T(t)\Omega, (\Ncal+2)^3 T(t)\Omega \rangle \leq 8 e^{C_3(\Xi_0)} \;.
\]
Using this bound in the estimates from Proposition~\ref{lem:interaction} and Proposition~\ref{lem:kinetic} we get
\begin{align}
  \lvert \varepsilon_1 \rvert & \leq \bigg\lvert \frac{1}{N} \sum_{k \in \Zbbn} \hat{V}(k) n_k^2 \Bigg[ 32 e^{C_3(\Xi_0)} e^{2\lvert \Xi_0(k)\rvert} \Bigg(\sum_{m \in \Zbb^3 \setminus\{0\}} \frac{\lvert \Xi_0(m)\rvert}{n_m n_k}\Bigg)^2  \label{eq:epsoneone}\\
  &\quad +  \big(4\sinh(\lvert \Xi_0(k)\rvert)+2 \cosh(\lvert \Xi_0(k)\rvert) \big) \sqrt{8} e^{C_3(\Xi_0)/2} e^{\lvert \Xi_0(k)\rvert} \sum_{m \in \Zbbn}  \frac{\lvert \Xi_0(m)\rvert}{n_m n_k} \Bigg] \bigg\rvert\label{eq:epsonetwo}
\end{align}
and
\begin{align}
  \lvert \varepsilon_2 \rvert & \leq  \hbar^2 64 e^{C_3(\Xi_0)} \bigg\lvert\sum_{k \in \Zbbn} \Bigg[ 2\lvert \Xi_0(k)\rvert \lvert k\cdot f(k)\rvert \sinh(\lvert \Xi_0(k)\rvert)  e^{\lvert \Xi_0(k)\rvert} \sum_{m \in \Zbbn}  \frac{2\lvert \Xi_0(m)\rvert}{n_m n_k} \label{eq:epstwoone}\\
  &  \quad + \sum_{l \in \Zbbn} \lvert \Xi_0(l)\rvert  \new{\lvert k\rvert\frac{8 \Big(\left(\frac{3}{4\pi} \right)^{1/3} N^{1/3} + \lvert k\rvert \Big)}{n_k n_l}} \Big(  \sinh(\lvert \Xi_0(k)\rvert) +  e^{\lvert \Xi_0(k)\rvert} \sum_{m \in \Zbbn}  \frac{ \lvert \Xi_0(m)\rvert}{n_m n_k}\Big) \label{eq:epstwotwo}\\
  & \quad \new{+ \lvert k\cdot f(k)\rvert e^{2\lvert \Xi_0(k)\rvert} \Bigg( \sum_{m \in \Zbbn}  \frac{\lvert \Xi_0(m)\rvert}{n_m n_k} \Bigg)^2} \Bigg] \bigg\rvert\;.  \label{eq:epstwothree}
\end{align}
Let
\[
   A_4 := \sum_{k \in \Zbbn} \left\lvert \log\left( 1+ c\hat{V}(k) \right) \right\rvert \;.
\]
This is finite since $A_4 \leq C A_1$ for some constant $C >0$ depending only on $\hat{V}$.
First observe that we have
  $\Xi_0(k) = - \frac{1}{4} \log \big( 1+ c \hat{V}(k) \big) + \mathcal{O}(N^{-1/3})$,
by which we find that
$
  C_3(\Xi_0) \simeq 750 \sum_{k \in \Zbbn} \big\lvert \log\big( 1+ c\hat{V}(k) \big) \big\rvert = 750 A_4
  $ and
$e^{2\lvert \Xi_0(k)\rvert} \simeq (1+c \hat{V}(k))^{1/2}$, where the lower order terms are not of relevance in the following as we are just looking for an upper bound.
We estimate the first line of $\varepsilon_1$, as given in \eqref{eq:epsoneone}, by
\begin{align*}
  & \bigg\lvert \frac{1}{N} \sum_{k \in \Zbbn} \hat{V}(k) n_k^2 32 e^{C_3(\Xi_0)} e^{2\lvert \Xi_0(k)\rvert} \Bigg(\sum_{m \in \Zbb^3 \setminus\{0\}} \frac{\lvert \Xi_0(m)\rvert}{n_m n_k}\Bigg)^2 \bigg\rvert \\
  & \leq \frac{1}{N^{5/3}} \left( \frac{4}{3\sqrt{\pi}} \right)^{2/3} 32 e^{750 A_4} \sum_{k \in \Zbbn} \lvert \hat{V}(k)\rvert \sqrt{1+c \hat{V}(k)} \Bigg(\sum_{m \in \Zbb^3 \setminus\{0\}} \frac{\lvert \Xi_0(m)\rvert}{\sqrt{\lvert m \rvert}}\Bigg)^2\\
  & \leq \frac{1}{N^{5/3}} \left( \frac{4}{3\sqrt{\pi}} \right)^{2/3} 2 e^{750 A_4} A_2 A_4^2 \;,
\end{align*}
where we used that, because of $\lvert m\rvert \geq 1$,
$\sum_{m \in \Zbb^3 \setminus\{0\}} \frac{\lvert \Xi_0(m)\rvert}{\sqrt{\lvert m \rvert}} \leq \frac{1}{4} A_4$.

For the second line of $\varepsilon_1$, as given in \eqref{eq:epsonetwo}, recall that $\sinh(\lvert \Xi_0(k)\rvert) \leq e^{\lvert \Xi_0(k)\rvert}$ and $\cosh(\lvert \Xi_0(k)\rvert) \leq e^{\lvert \Xi_0(k)\rvert}$. Using these two estimates and then proceeding as above, we find
\begin{align*}
  & \bigg\lvert \frac{1}{N} \sum_{k \in \Zbbn} \hat{V}(k) n_k^2 \big(4\sinh(\lvert \Xi_0(k)\rvert)+2 \cosh(\lvert \Xi_0(k)\rvert) \big) \sqrt{8} e^{C_3(\Xi_0)/2} e^{\lvert \Xi_0(k)\rvert} \sum_{m \in \Zbbn}  \frac{\lvert \Xi_0(m)\rvert}{n_m n_k} \bigg\rvert \\
  & \leq \frac{1}{N} 6\sqrt{8} e^{375 A_4} \sum_{k \in \Zbbn} \lvert \hat{V}(k)\rvert \sqrt{\lvert k\rvert} \sqrt{1+c\hat{V}(k)}\ \frac{1}{4}A_4\\
  & = \frac{1}{N} 3 \sqrt{2} e^{375 A_4} A_2 A_4  \;.
\end{align*}
For the first line of $\varepsilon_2$, as given in \eqref{eq:epstwoone}, we have
\begin{align*}
  & \bigg\lvert \hbar^2 64 e^{C_3(\Xi_0)}\sum_{k \in \Zbbn} 4\lvert \Xi_0(k)\rvert \lvert k\cdot f(k)\rvert \sinh(\lvert \Xi_0(k)\rvert)  e^{\lvert \Xi_0(k)\rvert} \sum_{m \in \Zbbn}  \frac{\lvert \Xi_0(m)\rvert}{n_m n_k}\bigg\rvert \\
  & \leq \hbar^2 256 \left( \frac{4}{3\sqrt{\pi}} \right)^{2/3} e^{750 A_4} \sum_{k\in \Zbbn} \lvert \Xi_0(k)\rvert \lvert k\cdot f(k) \rvert e^{2\lvert \Xi_0(k)\rvert} \frac{1}{N^{2/3}\sqrt{\lvert k\rvert}} \sum_{m \in \Zbbn} \frac{\lvert \Xi_0(m)\rvert}{\sqrt{\lvert m\rvert}} \\
  & \leq \frac{16}{N^{4/3}} {e^{750 A_4}} \left( \frac{4}{3\sqrt{\pi}} \right)^{2/3}  \sum_{k \in \Zbbn} \left\lvert \log\left( 1+c\hat{V}(k) \right) \right\rvert \frac{\lvert k\cdot f(k)\rvert}{\sqrt{\lvert k \rvert}} \sqrt{1+c\hat{V}(k)}\ A_4 \\
  & \leq \frac{16}{N} {e^{750 A_4}} \left( \frac{4}{3\sqrt{\pi}} \right)^{4/3}  A_1 A_4  \;.
\end{align*}
To estimate the second line of $\varepsilon_2$, as given in \eqref{eq:epstwotwo}, we have
\begin{align*}
  & \bigg\lvert \hbar^2 64 e^{C_3(\Xi_0)} \!\!\!\!\!\sum_{k \in \Zbbn} \sum_{l \in \Zbbn} \lvert \Xi_0(l)\rvert  \new{\lvert k\rvert\frac{8 \Big(\left(\frac{3}{4\pi} \right)^{1/3} N^{1/3} + \lvert k\rvert \Big)}{n_k n_l}} \\
  & \hspace{14em} \times \Big(  \sinh(\lvert \Xi_0(k)\rvert) +  e^{\lvert \Xi_0(k)\rvert} \!\!\!\!\! \sum_{m \in \Zbbn}\!\!  \frac{ \lvert \Xi_0(m)\rvert}{n_m n_k}\Big) \bigg\rvert \\
  & \leq \frac{512 e^{750 A_4}}{N}\left( \frac{16}{9\pi} \right)^{1/3}   \sum_{l \in \Zbbn}  \frac{\lvert \Xi_0(l)\rvert}{\sqrt{\lvert l\rvert}}  \sum_{k \in \Zbbn} \Big(\Big( \frac{3}{4\pi} \Big)^{1/3} + \frac{\lvert k \rvert}{N^{1/3}} \Big) \sqrt{\lvert k\rvert}  \\
  & \quad \times e^{\lvert \Xi_0(k)\rvert} \Bigg(1+ \sum_{m \in \Zbbn} \frac{\lvert \Xi_0(m)\rvert}{\sqrt{\lvert m\rvert} }\frac{1}{N^{2/3}\sqrt{\lvert k \rvert}} \left( \frac{4}{3\sqrt{\pi}} \right)^{2/3} \Bigg)\\
  & \leq \frac{128 e^{750 A_4}}{N} \left( \frac{16}{9\pi} \right)^{1/3} A_4  A_3 \left(1+ \left( \frac{4}{3\sqrt{\pi}} \right)^{2/3}\frac{A_4}{N^{2/3}}  \right) \;.
\end{align*}
\new{
To estimate the third line of $\varepsilon_2$, as given in \eqref{eq:epstwothree}, using Lemma~\ref{lem:constants}, we have
\begin{align*}
   & \bigg\lvert \hbar^2 64 e^{C_3(\Xi_0)} \sum_{k \in \Zbbn} \lvert k\cdot f(k)\rvert e^{2\lvert \Xi_0(k)\rvert} \Bigg( \sum_{m \in \Zbbn}  \frac{\lvert \Xi_0(m)\rvert}{n_m n_k} \Bigg)^2 \bigg\rvert \\
   & \leq \hbar^2 \frac{1024}{9\pi} e^{750 A_4} \sum_{k \in \Zbbn} \lvert k\rvert N^{1/3}   e^{2\lvert \Xi_0(k)\rvert} \Bigg( \sum_{m \in \Zbbn}  \frac{\lvert \Xi_0(m)\rvert}{\sqrt{\lvert k\rvert} \sqrt{\lvert m \rvert} N \hbar}  \Bigg)^2 \\
   & \leq \frac{1}{N^{5/3}} \frac{64}{9\pi} e^{750 A_4} A_3 A_4^2 \;.
   \end{align*}
}
Combining these five estimates we find the claimed result.
\end{proof}

\section{Proof of the Main Result}
\begin{proof}[Proof of Theorem~\ref{thm:mainresult}]
From \eqref{eq:phtransform}, we have
\[
  \langle R_\omega T\Omega, H_N R_\omega T \Omega \rangle = \mathcal{E}_\text{HF}(\omega) + \langle T\Omega, \left( \di\Gamma(uhu-\cc{v}\cc{h}v) + Q_N\right)T\Omega\rangle \;.
\]
In \eqref{eq:splitinteraction} we splitted off two error operators $\mathcal{E}_1$ and $\mathcal{E}_2$,
\[
  Q_N = Q_N^{(0)} + \frac{1}{2N} \int_{\Tbb^3 \times \Tbb^3} \di x\di y V(x-y) \big(\mathcal{E}_1 + \mathcal{E}_2 \big) \;.
\]
By Lemma \ref{lem:errornoparthole} combined with Proposition \ref{lem:particlenumber}, we can estimate the first error operator by
\[
  \lvert \langle T \Omega, \frac{1}{2N}\int_{\Tbb^3\times \Tbb^3} \di x\di y\, V(x-y) \,\mathcal{E}_1\, T \Omega\rangle\rvert \leq \frac{2}{N} \sum_{k \in \Zbb^3} \lvert \hat{V}(k)\rvert e^{C_2(\Xi_0)} = \mathcal{O}(N^{-1})\;.
\]
Furthermore it is easy to check that $[B,i^\Ncal] = 0$, and thus $T i^\Ncal = i^\Ncal T$,
so by Lemma \ref{lem:errorparity} the second error operator's contribution to the expectation value vanishes:
\[
  \langle T \Omega, \frac{1}{2N}\int_{\Tbb^3\times \Tbb^3} \di x\di y\, V(x-y) \,\mathcal{E}_2\, T \Omega\rangle  = 0\;.
\]
From Lemma \ref{lem:freekinetic} combined with Proposition \ref{lem:particlenumber} we learn that direct and exchange term can be dropped, retaining only $\Hbb_0 := \di\Gamma\left(u(-\hbar^2\Delta)u - \cc{v}(-\hbar^2\Delta)v\right)$ for the kinetic energy:
\[
  \langle T\Omega, \di\Gamma(uhu-\cc{v}\cc{h}v) T\Omega \rangle = \langle T\Omega, \Hbb_0 T\Omega \rangle + \mathcal{O}(N^{-1}) \;.
\]
It remains
\[
  \langle R_\omega T\Omega, H_N R_\omega T \Omega \rangle = \mathcal{E}_\text{HF}(\omega) + \langle T\Omega, \left( \Hbb_0 + Q^{(0)}_N\right)T\Omega\rangle + \mathcal{O}(N^{-1})\;.
\]

The operator $\Hbb_0 + Q_N^{(0)}$ can be treated as an almost bosonic almost quadratic Hamiltonian; more precisely from \eqref{eq:functionaldef} and \eqref{eq:minenergy} we obtain a typical Bogoliubov-type energy
\begin{equation}\label{eq:minimum}
  \langle T \Omega, \left( \Hbb_0 + Q_N^{(0)} \right) T \Omega \rangle = \sum_{k \in \Zbbn} \frac{1}{2} \left( \sqrt{\alpha_k^2 - \beta_k^2} - \alpha_k \right) +  \varepsilon_1 + 2 \Re \varepsilon_2 \;.
\end{equation}
The errors $\varepsilon_1$ and $\varepsilon_2$ are controlled by Lemma \ref{lem:collectederrors}, based on estimating the number of excitations in the trial state $T\Omega$. Since $\alpha_k = \hbar \lvert k\rvert \mu + \beta_k + \mathcal{O}(N^{-1/2})$ and $\beta_k = \hbar \lvert k\rvert \mu^{-1} \hat{V}(k) + \mathcal{O}(N^{-1/2})$, we obtain the claimed formula for the energy.
\end{proof}

\section*{Acknowledgments}
It is a pleasure to thank Benjamin Schlein and Marcello Porta for discussions. The author was supported by the European Union through the ERC StG \textsc{FermiMath}, grant agreement nr.~101040991. Views and opinions expressed are however those of the author only and do not necessarily reflect those of the European Union or the European Research Council Executive Agency. Neither the European Union nor the granting authority can be held responsible for them. Moreover the author was partially supported by Gruppo Nazionale per la Fisica Matematica in Italy.

\newcommand{\etalchar}[1]{$^{#1}$}

\end{document}